\documentclass[a4paper]{article}

\usepackage[utf8]{inputenc}
\usepackage[T1]{fontenc}
\usepackage{mathscinet}
\usepackage{comment}
\usepackage{graphicx}
\usepackage{booktabs}
\usepackage{hyperref}
\usepackage{pgfkeys}
\usepackage{enumerate}
\usepackage{verbatim}
\usepackage{nicefrac}
\usepackage{framed}
\usepackage{paralist}
\usepackage{subcaption}
\usepackage[inline]{enumitem}
\usepackage{tikz}
\usetikzlibrary{fit}
\usetikzlibrary{intersections}
\usetikzlibrary{decorations.pathreplacing}
\usetikzlibrary{calc}

\usepackage{amsmath}

\usepackage{amssymb,amsthm}
\usepackage{thmtools}
\usepackage{thm-restate}

\usepackage{algpseudocode,algorithm,algorithmicx}
\newcommand*\Let[2]{\State #1 $\gets$ #2}
\algrenewcommand\algorithmicrequire{\textbf{Precondition:}}
\algrenewcommand\algorithmicensure{\textbf{Postcondition:}}

\usepackage{xspace}
\usepackage[color=green!20]{todonotes}
\usepackage{footmisc}

\newtheorem{theorem}{Theorem}
\newtheorem{lemma}{Lemma}
\newtheorem{corollary}{Corollary}

\newtheorem{claim}{Claim}

\newtheorem{problem}{Problem}

\def\wcli{\textsc{Maximum Weighted Clique}\xspace}
\def\cli{\textsc{Maximum Clique}\xspace}
\def\mis{\textsc{Maximum Independent Set}\xspace}

\tikzstyle{vp}=[circle,fill,inner sep=0pt, minimum size=0.1cm]
\tikzstyle{vps}=[circle,fill,inner sep=0pt, minimum size=0.065cm]

\def\centerarc[#1](#2)(#3:#4:#5)
    { \draw[#1] ($(#2)+({#5*cos(#3)},{#5*sin(#3)})$) arc (#3:#4:#5); }
    
\makeatletter
\p@=1pt
\makeatother

\def\seg{\text{seg}}

\newcommand{\needle}[1]{\nneed(#1)}
\newcommand{\fork}[3]{[#1 \leftarrow #2 \rightarrow #3]}
\newcommand{\forkb}[3]{[#1 \rightarrow #2 \leftarrow #3]}
\newcommand{\dir}[1]{\ddir(#1)}
\DeclareMathOperator{\vcdim}{VCdim}
\DeclareMathOperator{\iocp}{iocp}
\DeclareMathOperator{\nneed}{Needle}
\DeclareMathOperator{\ddir}{dir}

\DeclareMathOperator{\ocp}{ocp}

\newcommand{\cl}{\mathcal X}


\title{EPTAS and Subexponential Algorithm for Maximum Clique on Disk and Unit Ball Graphs%
\thanks{Research partially supported by EPSRC grant FptGeom (EP/N029143/1), ANR grant ESIGMA (ANR-17-CE23-0010), the LABEX MILYON (ANR-10- LABX-0070) of Université de Lyon, within the program "Investissements d'Avenir" (ANR-11-IDEX-0007) operated by the French National Research Agency (ANR), the ANR Project DISTANCIA (ANR-17-CE40-0015) operated by the French National Research Agency (ANR), and by the European Research Council (ERC) under the European Union's Horizon 2020 research and innovation programme Grant Agreement 714704.}
\thanks{Extended abstracts of this paper appeared in~\cite{bonnetetal18} and~\cite{Bonamy18}.}
}

\date{}

\author{
Marthe Bonamy
\thanks{CNRS, LaBRI, Université de Bordeaux, Bordeaux, France.
E-mail: marthe.bonamy@u-bordeaux.fr}
\and
Édouard Bonnet
\thanks{Université de Lyon (COMUE), CNRS, ENS de Lyon, Université Claude-Bernard Lyon 1, LIP, Lyon, France.
E-mail: edouard.bonnet@dauphine.fr}
\and
\and
Nicolas Bousquet
\thanks{CNRS, G-SCOP laboratory, Grenoble-INP, France.
E-mail: nicolas.bousquet@grenoble-inp.fr}
\and
Pierre Charbit
\thanks{Université Paris Diderot - IRIF, Paris, France
and 
Université de Lyon (COMUE), CNRS, ENS de Lyon, Université Claude-Bernard Lyon 1, LIP.
E-mail: charbit@irif.fr}
\and
Panos Giannopoulos
\thanks{giCentre, Department of Computer Science, City University of London, London, United Kingdom.
E-mail: panos.giannopoulos@city.ac.uk}
\and
Eun Jung Kim
\thanks{Universit\'{e} Paris-Dauphine, PSL University, CNRS UMR, LAMSADE, Paris, France.
E-mail: eun-jung.kim@dauphine.fr}
\and
Pawe\l{} Rz\k{a}\.zewski
\thanks{Faculty of Mathematics and Information Science, Warsaw University of Technology, Warsaw, Poland
and
Institute of Informatics, University of Warsaw, Warsaw, Poland.
E-mail: p.rzazewski@mini.pw.edu.pl}
\and
Florian Sikora
\thanks{Universit\'{e} Paris-Dauphine, PSL University, CNRS UMR, LAMSADE, Paris, France.
E-mail: florian.sikora@dauphine.fr}
\and
Stéphan Thomassé
\thanks{Université de Lyon (COMUE), CNRS, ENS de Lyon, Université Claude-Bernard Lyon 1, LIP, Lyon, France
and
Institut Universitaire de France.
E-mail: stephan.thomasse@ens-lyon.fr}
}

\begin{document}

\maketitle
\newpage
\begin{abstract}
  A (unit) disk graph is the intersection graph of closed (unit) disks in the plane.
  Almost three decades ago, an elegant polynomial-time algorithm was found for \textsc{Maximum Clique} on unit disk graphs [Clark, Colbourn, Johnson; Discrete Mathematics '90].
  Since then, it has been an intriguing open question whether or not tractability can be extended to general disk graphs.
  We show that the disjoint union of two odd cycles is never the complement of a disk graph nor of a unit (3-dimensional) ball graph.
  From that fact and existing results, we derive a simple QPTAS and a subexponential algorithm running in time $2^{\tilde{O}(n^{2/3})}$ for \textsc{Maximum Clique} on disk and unit ball graphs.
  We then obtain a randomized EPTAS for computing the independence number on graphs having no disjoint union of two odd cycles as an induced subgraph, bounded VC-dimension, and linear independence number.
  This, in combination with our structural results, yields a randomized EPTAS for \textsc{Max Clique} on disk and unit ball graphs.
  \textsc{Max Clique} on unit ball graphs is equivalent to finding, given a collection of points in $\mathbb R^3$, a maximum subset of points with diameter at most some fixed value.
  
  In stark contrast, \textsc{Maximum Clique} on ball graphs and unit $4$-dimensional ball graphs, as well as intersection graphs of filled ellipses (even close to unit disks) or filled triangles is unlikely to have such algorithms.
  Indeed, we show that, for all those problems, there is a constant ratio of approximation which cannot be attained even in time $2^{n^{1-\varepsilon}}$, unless the Exponential Time Hypothesis fails.
\end{abstract}

\section{Introduction}\label{sec:intro}
In an \emph{intersection graph}, the vertices are geometric objects with an edge between any pair of intersecting objects. 
Intersection graphs have been studied for many different families of objects due to their practical applications and their rich structural properties~\cite{McKee1999, Brandstadt1999}.
Among the most studied ones are \emph{disk graphs}, which are intersection graphs of closed disks in the plane, and their special case, \emph{unit disk graphs}, where all the radii are equal.
Their applications range from sensor networks to map labeling~\cite{DBLP:conf/waoa/Fishkin03}, and many standard optimization problems have been studied on disk graphs, see for example~\cite{EJvL2009} and references therein.
Most of the hard optimization and decision problems remain NP-hard on disk graphs and even unit disk graphs.
For instance, disk graphs contain planar graphs \cite{koebe} on which several of those problems are intractable.
However, shifting techniques and separator theorems may often lead to subexponential classical or parameterized algorithms \cite{Alber04,Marx15,SmithWormald,Biro17}.
Many approximation algorithms have been designed specifically on (unit) disk graphs, or more generally on geometric intersection graphs, see for instance \cite{Chan03,Nieberg04,Nieberg05,Erlebach05,AgarwalM06,Leeuwen06,Gibson10,ChanH12} to cite only a few.
Besides ad hoc techniques, local search and VC-dimension play an important role in the approximability of problems on (unit) disk graphs.
For the main packing and covering problems (\mis, \textsc{Min Vertex Cover}, \textsc{Minimum Dominating Set}, \textsc{Minimum Hitting Set}, and their weighted variants) at least a PTAS is known.

However, all the mentioned techniques are only amenable to packing and covering problems.  
The \cli problem is arguably the most prominent problem which does not fall into those categories.
For example, anything along the lines of exploiting a small separator cannot work for \cli, where the densest instances are the hardest.
Therefore, it seems that new ideas are necessary to get improved approximate or exact algorithms for this problem.
In this paper, we focus on solving \cli on (unit) disk graphs in dimension 2 or higher.

\paragraph*{Previous results.}

In 1990, Clark \emph{et al.}~\cite{Clark90} gave an elegant polynomial-time algorithm for \cli on unit disk graphs when the input is a geometric representation of the graph.
It goes as follows: guess in quadratic time the two more distant centers of disks in a maximum clique (at distance at most $2$), remove all the centers that would contradict this maximality, observe that the resulting graph is co-bipartite.
Hence, one can find an optimum solution in polynomial time by looking for a maximum independent set in the complement graph, which is bipartite. 
However, recognizing unit disk graphs is NP-hard \cite{Breu98}, and even $\exists \mathbb{R}$-complete~\cite{Kang12}.
In particular, if the input is the mere unit disk graph, one cannot expect to efficiently compute a geometric representation in order to run the previous algorithm.
Raghavan and Spinrad showed how to overcome this issue and suggested a polynomial-time algorithm which does not require the geometric representation~\cite{Raghavan03}. 
Their algorithm is a subtle \emph{blind} reinterpretation of the algorithm by Clark \emph{et al.}
It solves \cli on a superclass of the unit disk graphs or correctly claims that the input is not a unit disk graph.
Hence, it cannot be used to efficiently recognize unit disk graphs.

The complexity of \cli on general disk graphs is a notorious open question in computational geometry.
On the one hand, no polynomial-time algorithm is known, even when the geometric representation is given.
On the other hand, the NP-hardness of the problem has not been established, even when only the graph is given as input.

The piercing number of a collection of geometric objects is the minimum number of points that hit all the objects.
It is known since the fifties (although the first published records of that result came later in the eighties) that the piercing number of pairwise intersecting disks is 4 \cite{stacho,danzer}, meaning that 4 points are always sufficient and sometimes necessary.
An account of this story can be found in a recent paper by Har-Peled \emph{et al.} \cite{Harpeled18}.
The same paper gives a linear-time algorithm to find 5 points piercing any collection of pairwise intersecting disks, as well as a configuration with 13 pairwise intersecting disks for which checking that 4 points are necessary to hit all the disks is quite simple.
Carmi et al. \cite{Carmi18} improved the former result by presenting a linear-time algorithm that finds a hitting set of at most 4 points.

Amb\"uhl and Wagner observed that this yields a $2$-approximation for \cli~\cite{Ambuhl05}.
Indeed, after guessing in polynomial time four points hitting a maximum clique and removing every disk not hit by those points, the instance is partitioned into four cliques; or equivalently, two co-bipartite graphs.
One can then solve optimally each instance formed by one co-bipartite graph and return the larger solution of the two. 
This cannot give a solution more than twice smaller than the optimum.
Since then, the problem has proved to be elusive with no new positive or negative results.
The question on the complexity and further approximability of \cli on general disk graphs is considered as folklore~\cite{bang2006}, but was also explicitly mentioned as an open problem by Fishkin~\cite{DBLP:conf/waoa/Fishkin03}, Amb\"uhl and Wagner~\cite{Ambuhl05}.
Cabello even asked if there is a 1.99-approximation for disk graphs with two sizes of radii~\cite{CabelloOpen,Cabello2015}.

\paragraph*{Our results.}

We fully characterize the disk graphs whose complements have maximum degree~2, i.e., are unions of cycles and paths:
A complement of a disjoint union of paths and cycles is a disk graph if and only if the number of odd cycles in this union is at most one.
For algorithmic purposes, the interesting part of that statement is:
\begin{restatable}{theorem}{forbiddenDisk}\label{thm:main-structural-non-disk}
  A complement of a disk graph cannot have a disjoint union of two odd cycles as an induced subgraph.
  In other words, if $G$ is a disk graph, then $\iocp(\overline G) \leqslant 1$.
\end{restatable}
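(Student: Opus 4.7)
I will argue by contradiction. Fix a disk representation $v \mapsto D_v$ of $G$, and suppose $\overline G$ admits, as an induced subgraph, the disjoint union of two odd cycles $C_a = a_1 a_2 \cdots a_{2k+1}$ and $C_b = b_1 b_2 \cdots b_{2\ell+1}$. Writing $A_i := D_{a_i}$ and $B_j := D_{b_j}$, the hypothesis translates into three geometric conditions: (i) the disks $A_i$ and $A_{i+1}$ are disjoint (indices cyclic modulo $2k+1$) while $A_i \cap A_{i'} \neq \emptyset$ otherwise; (ii) the symmetric statement for the $B_j$'s; (iii) $A_i \cap B_j \neq \emptyset$ for every pair $(i,j)$.

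The first step is to extract a finite point-set certificate. For each pair $(i,j)$ pick a witness $p_{ij} \in A_i \cap B_j$, and define the row convex hull $S_i := \operatorname{conv}\{p_{ij} : j\} \subseteq A_i$ and the column convex hull $T_j := \operatorname{conv}\{p_{ij} : i\} \subseteq B_j$. Consecutive $S_i$'s are pairwise disjoint and consecutive $T_j$'s are pairwise disjoint, while each pair $S_i, T_j$ shares the vertex $p_{ij}$. The whole problem therefore reduces to excluding a $(2k+1)\times(2\ell+1)$ array of points in the plane whose rows and columns embed, respectively, into two families of disks with the prescribed cyclic disjointness pattern.

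The heart of the proof must exploit the oddness of both cycles simultaneously. My plan is to associate to each cyclic edge $a_i a_{i+1}$ an oriented separating line $\ell_i$ of the disjoint pair $A_i, A_{i+1}$, and to observe that every $B_j$, being a connected convex set meeting both $A_i$ and $A_{i+1}$, must cross $\ell_i$. Tracking how the directions of $\ell_1, \ldots, \ell_{2k+1}$ rotate as one traverses $C_a$ once produces a net angular turn whose parity is pinned by the odd length $2k+1$; an analogous tracking along $C_b$ produces a second parity constraint. Combined with the mutual intersection condition (iii), the two parity constraints should contradict the planar realisability of the configuration.

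The main obstacle is to isolate the precise parity-sensitive invariant. Helly-type and piercing-type theorems for disks, as well as purely combinatorial separation-line arguments, all treat odd and even cycles identically, so the oddness cannot be used as a black box. I would therefore first settle the base case $k = \ell = 1$---ruling out two triples of pairwise disjoint disks meeting across, equivalently a $3 \times 3$ array of points whose three row convex hulls and three column convex hulls can each be inscribed in triples of pairwise disjoint disks---where the geometry is cleanest and the parity of $3$ is most directly visible. Once the base case is in hand, I expect a short cycle-shortening step (contracting three consecutive vertices of $C_a$ or $C_b$ while preserving the intersection pattern), or a direct generalisation of the winding argument, to cover arbitrary odd lengths $2k+1$ and $2\ell+1$.
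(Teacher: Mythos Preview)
Your proposal is a plan rather than a proof, and you are candid about this: you write that ``the main obstacle is to isolate the precise parity-sensitive invariant'' and that you ``expect'' a cycle-shortening step to work once the base case is settled. Neither of these is carried out. The separating-line/winding idea is suggestive, but the direction of a separating line between $A_i$ and $A_{i+1}$ is not canonically defined (there is an interval of admissible directions), and you have not explained why the total turn of any particular choice should be governed by the parity of the cycle length rather than by, say, the geometry of the specific representation. The proposed induction step---contracting three consecutive vertices of $C_a$ while preserving both the disk realisability and the full intersection pattern with $C_b$---is also unjustified; there is no obvious reason why the contracted configuration should still be realisable by disks.

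The paper's argument avoids all of this by working with the \emph{centers} rather than with witness points or separating lines. The key lemma is about a single $K_{2,2}$: if four disks realise $K_{2,2}$ with non-edges $\{c_1,c_2\}$ and $\{c_3,c_4\}$, then either the line $\ell(c_1,c_2)$ meets the segment $c_3c_4$ or the line $\ell(c_3,c_4)$ meets the segment $c_1c_2$ (a two-line triangle-inequality computation). One then joins consecutive centers along each cycle to form two closed polygonal curves $\mathcal C_1$ (with $s$ segments) and $\mathcal C_2$ (with $t$ segments), and for each segment $S_i$ of $\mathcal C_1$ counts three quantities: $a_i$ = number of segments of $\mathcal C_2$ crossed by the line through $S_i$; $b_i$ = number of segments of $\mathcal C_2$ whose supporting line crosses $S_i$; $c_i$ = number of segments of $\mathcal C_2$ that cross $S_i$ itself. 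The $K_{2,2}$ lemma forces $a_i + b_i - c_i = t$ for every $i$, hence $\sum_i(a_i+b_i-c_i)=st$. But each $a_i$ is even (a line meets a closed curve evenly often), $\sum_i c_i$ is even (total number of crossings of two closed curves), and $\sum_i b_i = \sum_j a'_j$ is even by the same reasoning applied to $\mathcal C_2$. Thus $st$ is even, so $s$ and $t$ cannot both be odd. This is the parity invariant you were looking for; note that it requires no base case and no induction on cycle length.
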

In the previous statement $\iocp$ denotes the \emph{induced odd cycle packing number} of a graph, i.e., the maximum number of odd cycles as a disjoint union in an induced subgraph.

We show the same forbidden induced subgraphs for unit ball graphs as for disk graphs.
The proofs for disk graphs and unit ball graphs are quite different and the classes are incomparable.
\begin{restatable}{theorem}{forbidden}\label{thm:no-two-odd-cycles-ubg}
  A complement of a unit ball graph cannot have a disjoint union of two odd cycles as an induced subgraph.
  In other words, if $G$ is a unit ball graph, then $\iocp(\overline G) \leqslant 1$.
\end{restatable}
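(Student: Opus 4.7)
The plan is to argue by contradiction. Suppose $G$ is a unit ball graph in $\mathbb R^3$ whose complement contains two vertex-disjoint induced odd cycles $C_1=v_1\dots v_{2k+1}$ and $C_2=u_1\dots u_{2\ell+1}$. Up to scaling, assume the common radius is $1$, so that, writing $p_v\in\mathbb R^3$ for the ball center of vertex $v$, the induced structure translates to three families of constraints: consecutive pairs within each cycle are at distance $>2$; non-consecutive pairs within each cycle are at distance $\le 2$; and every cross-pair $p_{v_i},p_{u_j}$ is at distance $\le 2$. The last condition yields the crucial localization $\{p_u:u\in V(C_2)\}\subseteq F_1:=\bigcap_i B(p_{v_i},2)$, and symmetrically $\{p_v:v\in V(C_1)\}\subseteq F_2$: each cycle's center set sits inside the convex near-region determined by the other.

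I would first attempt to reduce to the base case $|C_1|=|C_2|=3$. Since chords of an induced cycle of $\overline G$ are non-edges of $\overline G$ (close in $\mathbb R^3$), a combinatorial shortcut is not available; instead, one exploits that all $2k+1$ centers of $C_i$ lie inside a convex region of diameter at most $4$ (they all belong to $B(p_u,2)$ for any $u$ in the other cycle). In a minimum counterexample, the geometry inside such a tight region should force some non-consecutive pair to also be far, producing a shorter odd cycle in $\overline G$ still disjoint from the other and contradicting minimality. For the base case of two triangles, write $\bar v,\bar u$ for the two centroids and set $a_i:=p_{v_i}-\bar v$, $b_j:=p_{u_j}-\bar u$, $d:=\bar u-\bar v$. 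Then $\sum_i a_i=\sum_j b_j=0$ forces $\{a_i\}$ to lie in a $2$-plane $P_V$ through the origin and $\{b_j\}$ in a $2$-plane $P_U$; expanding $\sum_{i,j}\lVert p_{v_i}-p_{u_j}\rVert^2\le 36$ yields
\[
\sum_i \lVert a_i\rVert^2 + \sum_j \lVert b_j\rVert^2 + 3\lVert d\rVert^2 \;\le\; 12,
\]
while the within-triangle far conditions give $\sum_i\lVert a_i\rVert^2>4$ and $\sum_j\lVert b_j\rVert^2>4$, so $\lVert d\rVert<2/\sqrt 3$.

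This soft moment inequality is not yet contradictory: equality is approached by two ``orthogonal'' equilateral triangles living in complementary $2$-planes of $\mathbb R^4$, and indeed such a configuration does realize the forbidden pattern in dimension four. The essential $3$-dimensional input is that any two $2$-planes in $\mathbb R^3$ must meet in at least a line, so there is a unit vector $\ell\in P_V\cap P_U$. The principal obstacle is to convert this shared direction into a quantitative strengthening of the inequality above: a natural route is to project onto $\ell^\perp$ and invoke Theorem~\ref{thm:main-structural-non-disk} (already proven for disks) on the resulting planar shadow, or, alternatively, to sharpen Jung's theorem for the specific convex body $F_1$ arising as an intersection of three balls of radius $2$ whose centers are pairwise at distance $>2$. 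Turning the purely dimensional statement $P_V\cap P_U\ne\{0\}$ into a quantitative contradiction, and simultaneously executing the shortening argument for longer cycles, is the technical heart of the proof.
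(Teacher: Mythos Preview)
Your proposal is not a proof but a plan with two substantial gaps, and neither gap is routine.

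First, the reduction to triangles. You assert that in a minimal counterexample ``the geometry inside such a tight region should force some non-consecutive pair to also be far,'' but this is exactly the hard part, and it is false in spirit: a non-consecutive pair in an induced odd cycle of $\overline G$ is \emph{close} by definition, and there is no obvious mechanism by which the confinement in $F_2$ forces an extra far pair. No such shortening argument is known, and indeed the paper handles all odd lengths uniformly without any reduction.

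Second, even for two triangles your argument stops precisely where it becomes nontrivial. You correctly compute the moment inequality and correctly observe that it is saturated by an orthogonal $\mathbb R^4$ configuration; you then note that in $\mathbb R^3$ the two affine $2$-planes must share a direction $\ell$, but you do not show how to exploit this. The suggestion to project onto $\ell^\perp$ and invoke Theorem~\ref{thm:main-structural-non-disk} does not work as stated: projection only shrinks distances, so the ``far'' constraints $\lVert p_{v_i}-p_{v_{i+1}}\rVert>2$ need not survive, and you no longer have a disk representation of $\overline{C_3+C_3}$ after projecting. The alternative ``sharpen Jung's theorem'' is not an argument.

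The paper's proof is entirely different and avoids both difficulties. It associates to each odd closed polygonal chain $C$ (formed by the ball centers along a cycle) a set of directions $\needle{C}\subseteq S^2$ obtained by sweeping a needle continuously around $C$; because the cycle length is odd, this set is a \emph{closed antipodal} curve on the $2$-sphere. Two closed antipodal curves on $S^2$ must intersect, yielding indices $i,j$ and points $x\in x_{i-1}x_{i+1}$, $y\in y_{j-1}y_{j+1}$ with $\overrightarrow{x_ix}\parallel\overrightarrow{y_jy}$. A short convexity computation (comparing sums of diagonals and sides in the planar quadrilateral $x_i,x,y_j,y$) then gives a contradiction. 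No reduction to triangles, no moment bounds, and no Jung-type estimate is needed: the dimensional obstruction is captured directly by the topological fact about antipodal curves.
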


We then present a randomized EPTAS (Efficient Polynomial-Time Approximation Scheme, that is, a PTAS in time $f(\varepsilon)n^{O(1)}$) for \mis on graphs of $\cl(d,\beta,1)$.
The class $\cl(d,\beta,1)$ denotes the class of graphs whose neighborhood hypergraph has VC-dimension at most $d$, independence number at least $\beta n$, and no disjoint union of two odd cycles as an induced subgraph (for formal definitions see Section~\ref{sec:prelim}).

\begin{restatable}{theorem}{eptas}\label{thm:eptas}
  For any constants $d \in \mathbb N$, $0 < \beta \leqslant 1$, for every $0 < \varepsilon < 1$, there is a randomized $(1-\varepsilon)$-approximation algorithm running in time $2^{\tilde{O}({1/\varepsilon}^3)}n^{O(1)}$, and a deterministic PTAS running in time $n^{\tilde{O}({1/\varepsilon}^3)}$ for \mis on graphs of $\cl(d,\beta,1)$ with $n$ vertices.
\end{restatable}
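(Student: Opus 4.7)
The plan is a sample-and-enumerate scheme in which the three hypotheses defining $\cl(d,\beta,1)$ play complementary roles: bounded VC-dimension controls the sample size, the linear lower bound $\alpha(G)\geq \beta n$ turns additive error into multiplicative error, and $\iocp(G) \leq 1$ is what allows a residual sub-problem (obtained after fixing a guess on the sample) to be solved in polynomial time.

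\medskip

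\noindent \textbf{Sampling.} Fix a maximum independent set $I^\star$ of $G$; by hypothesis $|I^\star|\geq\beta n$. Draw a uniform random sample $S\subseteq V(G)$ of size $s=\tilde{\Theta}(1/\varepsilon^{3})$. Because the neighborhood hypergraph of $G$ has VC-dimension at most $d$, the classical $\varepsilon$-net/$\varepsilon$-approximation theorem applied to this hypergraph and to a few derived set systems guarantees that, with constant probability, $S$ is a relative $(\varepsilon/c)$-approximation for all of them simultaneously. The exponent $3$ in $1/\varepsilon^{3}$ reflects the combination of several such $\varepsilon$-approximation arguments via a union bound and the squared dependence of $\varepsilon$-approximations on the precision.

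\medskip

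\noindent \textbf{Enumeration and residual problem.} Enumerate every subset $A\subseteq S$ that is independent in $G[S]$; there are at most $2^{s}=2^{\tilde O(1/\varepsilon^{3})}$ of them, and the ``correct'' guess $A^\star := I^\star\cap S$ is among them. For each $A$ we form the residual graph $G_{A}:=G[V(G)\setminus N[A]]$, compute a near-maximum independent set $J$ in $G_{A}$, and return the best candidate $A\cup J$ found. The graph $G_{A}$ inherits the three defining properties of $\cl(d,\beta,1)$ (with $\beta$ adjusted by a factor depending on $\varepsilon$, using the $\varepsilon$-approximation property of $S$). The crux is to show that on any such graph one can compute, in polynomial time, an independent set of size at least $(1-\varepsilon/2)\,\alpha(G_{A})$; combined with the $\varepsilon$-approximation property of $S$ and with $|I^\star|\geq\beta n$, this yields the $(1-\varepsilon)$-approximation to $\alpha(G)$ overall.

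\medskip

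\noindent \textbf{Main obstacle and derandomization.} The main obstacle is to justify this polynomial-time $(1-\varepsilon/2)$-approximation on $G_{A}$. The hypothesis $\iocp\leq 1$ is combinatorially weaker than possessing a constant-size odd cycle transversal, so a direct reduction to \mis on a bipartite graph (solved via K\"onig's theorem) is not available. The argument will instead need to exploit VC-dimension and $\iocp\leq 1$ together, most plausibly either through an LP-based approach whose integrality gap is controlled by an odd-cycle packing quantity, or through a structural reduction in which the bounded VC-dimension allows one to guess, with polynomial overhead, a small set of vertices that isolates the single remaining induced odd cycle promised by $\iocp\leq 1$ and thereby reduces the residual instance to a bipartite one. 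Finally, the deterministic PTAS follows by replacing the random sample $S$ by an exhaustive enumeration of the $\binom{n}{s}=n^{\tilde O(1/\varepsilon^{3})}$ subsets of size $s$, yielding the stated running time.
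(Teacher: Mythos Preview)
Your high-level scaffolding (sample, enumerate, solve residual, derandomize by exhaustive enumeration) matches the paper's, and you correctly identify where the difficulty lies. But the proposal has a genuine gap precisely at the point you label ``Main obstacle'': you do not actually solve the residual problem, you only speculate about possible mechanisms (LP integrality gap, isolating ``the single remaining induced odd cycle''). Neither of these speculations is the paper's argument, and the second misreads the hypothesis: $\iocp\leq 1$ does \emph{not} say there is a single odd cycle to isolate; it says any two vertex-disjoint induced odd cycles are joined by an edge, so there may be many odd cycles.

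The missing idea is the following. The role of the sample $S$ is not to be an $\varepsilon$-approximation of several set systems; it is to be an $\varepsilon$-\emph{net} for the hypergraph $\{N(v)\cap I^\star : |N(v)\cap I^\star|\geq \delta|I^\star|\}$ on ground set $I^\star$, with $\delta=\Theta(\varepsilon^3)$. When the guess $A=A^\star=S\cap I^\star$ is correct, removing $N(A^\star)$ kills every vertex with more than $\delta|I^\star|$ neighbours in $I^\star$; in the residual graph $H'$ \emph{every} vertex has at most $\delta|I^\star|$ neighbours in $I^\star$. Now compute a \emph{shortest} odd cycle $C$ in $H'$. If $|C|\leq c=\Theta(1/\varepsilon^2)$, then $|N_{H'}[C]\cap I^\star|\leq c\cdot\delta|I^\star|=\varepsilon|I^\star|$, and by $\iocp\leq 1$ the graph $H'-N_{H'}[C]$ is bipartite, so \mis is exact there. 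If instead $|C|>c$, a BFS-layer/strata analysis from $C$ (using repeatedly that $C$ is a \emph{shortest} odd cycle, so no shortcuts exist) produces an odd cycle transversal of size at most $\varepsilon\beta n/2\leq \varepsilon|I^\star|/2$, again reducing to a bipartite instance. This case analysis on the odd girth, powered by the degree bound into $I^\star$, is the heart of the proof and is absent from your proposal.
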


It is then easy to reduce \cli on disk graphs to \mis on $\cl(4,\beta,1)$ for some constant $\beta$.
We therefore obtain, due to Theorem~\ref{thm:eptas} and Theorem~\ref{thm:main-structural-non-disk}, a randomized EPTAS (and a deterministic PTAS) for \cli on disk graphs, settling almost\footnote{The NP-hardness, ruling out a 1-approximation, is still to be shown.} completely the approximability of this problem.
\begin{restatable}{theorem}{disk}\label{thm:disks-eptas}
  There is a randomized EPTAS for \cli on disk graphs, even if a geometric representation is not given.
  Its running time is $2^{\tilde{O}(1/\varepsilon^3)}n^{O(1)}$ for a $(1-\varepsilon)$-approximation on a graph with $n$ vertices.
\end{restatable}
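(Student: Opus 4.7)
The plan is to derive Theorem~\ref{thm:disks-eptas} by composing Theorem~\ref{thm:main-structural-non-disk} with Theorem~\ref{thm:eptas}. Concretely, I would reduce \cli on an arbitrary disk graph $G$ to polynomially many calls of the randomized EPTAS for \mis on induced subgraphs of $\overline G$ that lie in $\cl(4,\beta,1)$ for a constant $\beta=\Omega(1)$.

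The reduction rests on the Danzer--Stach\'o piercing bound recalled in the introduction: any pairwise intersecting family of disks is pierced by $4$ points. Applied to a maximum clique $K$, this yields points $p_1,p_2,p_3,p_4$ hitting every disk of $K$. Since the disks of $G$ containing a given $p_i$ form a clique of $G$, the vertex set $V'\subseteq V(G)$ of disks hit by $\{p_1,\ldots,p_4\}$ decomposes into four cliques, each of size at most $|K|$; hence $|V'|\leqslant 4|K|$ and $\omega(G[V']) = |K| \geqslant |V'|/4$, i.e., $\alpha(\overline{G[V']}) \geqslant |V'|/4$. Theorem~\ref{thm:main-structural-non-disk} passes to induced subgraphs, so $\overline{G[V']}$ has no induced disjoint union of two odd cycles; combined with the standard fact that the neighborhood hypergraph of a disk graph has VC-dimension at most $4$ (a property preserved by complementing the ground set), this places $\overline{G[V']}$ in $\cl(4,1/4,1)$. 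Theorem~\ref{thm:eptas} applied to $\overline{G[V']}$ then returns an independent set of it---equivalently a clique of $G$---of size at least $(1-\varepsilon)\,|K|$ in time $2^{\tilde{O}(1/\varepsilon^3)}n^{O(1)}$.

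When a geometric representation is available, it suffices to enumerate in $n^{O(1)}$ time all combinatorially distinct $4$-tuples of piercing points (each point being determined by the arrangement cell induced by the boundaries of at most two disks) and to return the best clique produced across all the calls. The main obstacle is proving the result \emph{without} such a representation, as required by Theorem~\ref{thm:disks-eptas}: the points $p_i$ are geometric objects and cannot be guessed directly from the graph. I plan to handle this by enumerating instead constant-sized tuples of \emph{vertices} of $G$ serving as combinatorial witnesses of the four piercing cliques, and by defining $V'$ through a purely graph-theoretic rule (for instance as a union of closed neighborhoods of the guessed vertices); for the correct guess, $V'$ still contains $K$ and the argument of the previous paragraph certifies $\overline{G[V']} \in \cl(4,\beta,1)$. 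Since Theorem~\ref{thm:eptas} only needs membership in this class and is oblivious to any representation, running it on each candidate $\overline{G[V']}$ and returning the best output produces the randomized EPTAS with the advertised running time, the extra $n^{O(1)}$ factor absorbing the enumeration.
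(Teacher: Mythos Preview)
Your argument when a geometric representation is available matches the paper's exactly. The gap is in the case without a representation. Your proposed ``purely graph-theoretic rule'' of taking $V'$ to be a union of closed neighborhoods of a constant number of guessed vertices does not deliver the key inequality $\omega(G[V'])\geqslant\beta|V'|$. Containing the maximum clique $K$ only gives $\omega(G[V'])\geqslant|K|$, but $|V'|$ can be arbitrarily large compared to $|K|$: a single disk may intersect many pairwise disjoint small disks, so $|N[v]|$ is in general unrelated to $\omega(G)$. The argument from the previous paragraph certified $|V'|\leqslant 4|K|$ precisely because $V'$ was \emph{covered by the four piercing cliques}; closed neighborhoods of guessed vertices contain those cliques but are not covered by them, so the bound is lost. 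Without $\alpha(\overline{G[V']})\geqslant\beta|V'|$ you cannot invoke Theorem~\ref{thm:eptas}.

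The paper's fix is non-obvious and worth noting. It exploits two additional facts about disk graphs: (i) there is always a vertex $u$ with $\alpha(G[N(u)])\leqslant 6$ (the disk of smallest radius), and (ii) $\chi(G)\leqslant 6\omega(G)$. Fact (ii) yields $\omega(H)\geqslant |V(H)|/(6\alpha(H))$ for every disk graph $H$; applied to $H=G[N[u]]$ this gives $\omega(G[N[u]])\geqslant |N[u]|/36$, so $\overline{G[N[u]]}\in\cl(4,1/36,1)$. To locate such a $u$ without geometry, the paper runs the \emph{robust} local-search PTAS of Chan and Har-Peled for \mis on each $G[N(v)]$ with ratio $7/8$; one of these runs must return at most $6$. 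Then it branches: either $u$ lies in a maximum clique (apply Theorem~\ref{thm:eptas} to $\overline{G[N[u]]}$), or delete $u$ and recurse. This linear-size branching tree is what replaces your unspecified combinatorial enumeration.
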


Similarly, Theorem~\ref{thm:eptas} and Theorem~\ref{thm:no-two-odd-cycles-ubg} naturally lead to:
\begin{restatable}{theorem}{ball}\label{thm:ubg}
  There is a randomized EPTAS in time $2^{\tilde{O}({1/\varepsilon^3})}n^{O(1)}$ for \cli on unit ball graphs, even if a geometric representation is not given.
\end{restatable}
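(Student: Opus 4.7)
The plan is to mirror the reduction used in Theorem~\ref{thm:disks-eptas}. Given a unit ball graph $G$, we would reduce \cli on $G$ to \mis on a family of induced subgraphs of $\overline G$, each of which we will show belongs to the class $\cl(d,\beta,1)$ for suitable absolute constants $d$ and $\beta$, and then invoke Theorem~\ref{thm:eptas}.

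Concretely, for each vertex $v \in V(G)$ we plan to run the randomized EPTAS of Theorem~\ref{thm:eptas} on $H_v := \overline{G[N[v]]}$ and output the largest clique of $G[N[v]]$ found over all choices of $v$. Since every maximum clique $C$ of $G$ satisfies $C \subseteq N[v]$ for every $v \in C$, at least one choice of $v$ will satisfy $\omega(G[N[v]]) = \omega(G)$, so the best returned solution will be within a factor $(1-\varepsilon)$ of $\omega(G)$. It will then remain to verify that each $H_v$ belongs to $\cl(d,\beta,1)$, i.e., (a) $H_v$ contains no disjoint union of two odd cycles as an induced subgraph, (b) its neighborhood hypergraph has VC-dimension at most some absolute constant $d$, and (c) $\alpha(H_v) \geq \beta \, |V(H_v)|$ for some absolute $\beta > 0$.

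Point (a) will follow immediately from Theorem~\ref{thm:no-two-odd-cycles-ubg} together with the heredity of the property. For (b), observe that the closed neighborhoods of a unit ball graph on centers $\{c_u\} \subset \mathbb{R}^3$ are the sets $\{u : \|c_u - c_v\| \leq 2\}$, that is, the trace of a set system of balls of fixed radius $2$ in $\mathbb{R}^3$; this has VC-dimension at most some absolute constant, and both complementation of hyperedges and passing to an induced subhypergraph preserve the bound. The main obstacle will be (c): we will argue that the centers of the balls in $N[v]$ all lie in $B(c_v, 2)$, and that this ball can be covered by a constant number $c$ of regions of diameter at most $2$; within each region the associated unit balls pairwise intersect and hence form a clique of $G$. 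Thus $G[N[v]]$ will be a union of $c$ cliques, giving $\omega(G[N[v]]) \geq |N[v]|/c$, so we may take $\beta := 1/c$. Crucially, this argument only invokes the \emph{existence} of a representation in fixed dimension $3$; the algorithm itself will never need to build the representation or the cover, so Theorem~\ref{thm:ubg} will hold even when no geometric representation is given. The overall running time will be $n$ invocations of Theorem~\ref{thm:eptas}, namely $2^{\tilde{O}(1/\varepsilon^3)} n^{O(1)}$.
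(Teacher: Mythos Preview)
Your proposal is correct and follows essentially the same approach as the paper: guess a vertex $v$ of a maximum clique, restrict to $G[N[v]]$, argue that its complement lies in $\cl(d,\beta,1)$ via Theorem~\ref{thm:no-two-odd-cycles-ubg}, a bounded VC-dimension of balls in $\mathbb R^3$, and a volume/covering argument showing $N[v]$ decomposes into a constant number of cliques, then invoke Theorem~\ref{thm:eptas}. The paper quotes the specific constants (VC-dimension $4$, at most $30$ cliques via the kissing number), but the structure of the argument is identical to yours.
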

Before that result, the best approximation factor was 2.553, due to Afshani and Chan~\cite{Afshani05}.
In particular, even getting a 2-approximation algorithm (as for disk graphs) was open.

Finally we show that such an approximation scheme, even in subexponential time, is unlikely for ball graphs (i.e., intersection graphs of 3-dimensional balls with arbitrary radii), unit 4-dimensional ball graphs, as well as intersection graphs of (filled) triangles and (filled) ellipses.
Our lower bounds also imply NP-hardness.
To the best of our knowledge, the NP-hardness of \cli on unit $d$-dimensional ball graphs was only known when $d$ is superconstant ($d=\Omega(\log n)$)~\cite{Afshani08}.

In the following paragraphs, we sketch the principal lines of our three main contributions.

\noindent\paragraph*{The complement of the union of two odd cycles is not a disk graph}
Let $H$ be the complement of two cycles $C_1 \uplus C_2$.
We focus on the possible configurations for the four centers associated to two non-edges of $H$: one edge of $C_1$ and one edge of $C_2$.
It can be shown that, if those four centers are in convex position, then the non-edges of $H$ should \emph{cross}.
In other words, there is a forbidden configuration: four centers in convex position with the two non-edges being sides of the quadrangle.
For each edge $e$ of $C_1$ (resp. $C_2$), we define three quantities counting how many edges of $C_2$ (resp. $C_1$) are \emph{crossed} by $e$ (for three different meanings of \emph{crossing}).
The forbidden configuration and parity arguments on sums of those quantities bring the desired result that the two cycles cannot be both of odd length.

\noindent\paragraph*{The complement of the union of two odd cycles is not a unit ball graph}
Given a needle in $\mathbb R^3$ whose middle-point is attached to the origin, one can apply a continuous motion in order to turn it around (a motion à la Kakeya, henceforth \emph{Kakeya motion}).
A Kakeya motion can be seen as a closed antipodal curve on the 2-sphere.
If we now consider two needles, each with a Kakeya motion, then the two needles have to go through a same position. 
This simply follows from the fact that two antipodal curves on the 2-sphere intersect.
The second main result of this paper is a translation of this Jordan-type theorem in terms of intersection graphs:
The complement of a unit ball graph does not contain the disjoint union of two odd cycles.
The proof can really be seen as two Kakeya motions, each one along the two odd cycles, leading to a contradiction when the needles achieve parallel directions.

Together with the next result, it implies a randomized EPTAS for the following problem: Given a set $S$ of points in $\mathbb R^3$, find a largest subset of $S$ of diameter at most $1$.

\noindent\paragraph*{EPTAS for \mis on $\cl(d,\beta,1)$}
We show that if a graph $G$ satisfies that every two odd cycles are joined by an edge, the Vapnik-Chervonenkis dimension (VC-dimension for short) of the hypergraph of the neighborhoods of $G$ is bounded, and $\alpha(G)$ is at least a constant fraction of $|V(G)|$, then $\alpha(G)$ can be approximated in polynomial time at any given precision.
More precisely, we present in that case a randomized EPTAS running in time $2^{\tilde{O}({1/\varepsilon^3})}n^{O(1)}$ and a deterministic PTAS.

Our algorithm works as follows.
We start by sampling a small subset of vertices.
Hoping that this small subset is entirely contained in a fixed optimum solution $I$, we include the selected vertices to our solution and remove their neighborhood from the graph.
Due to the classic result of Haussler and Welzl \cite{HausslerW86} on $\varepsilon$-nets of size $O(d/\varepsilon \log{1/\varepsilon})$ (where $d$ is the VC-dimension), this sampling lowers the degree in $I$ of the remaining vertices.
We compute a shortest odd cycle.
If this cycle is short, we can remove its neighborhood from the graph and solve optimally the problem in the resulting graph, which is bipartite by assumption.
If this cycle is long, we can efficiently find a small odd-cycle transversal.
This is shown by a careful analysis on the successive neighborhoods of the cycle, and the recurrent fact that this cycle is a shortest among the ones of odd length.

\noindent\paragraph*{Organization}
The rest of the paper is organized as follows.
In Section~\ref{sec:prelim}, we recall some relevant notations for graphs and elementary geometry, the definitions of VC-dimension, disk graphs, and approximation schemes.
We finish this section by introducing a class $\cl(d,\beta,i)$ of graphs parameterized by three constants: $d$, upper-bounding the VC-dimension, $\beta$, lower-bounding the ratio $\alpha(G)/|V(G)|$ (\emph{independence number divided by number of vertices}), and $i$, upper-bounding the maximum number of odd cycles that can be found as a disjoint union in an induced subgraph.
In Section~\ref{sec:structural}, we characterize disk graphs whose complements have maximum degree 2.
The most useful result of this section is that the complement of the disjoint union of two odd cycles is not a disk graph.
In Section~\ref{sec:unit-ball-obstruction}, we show that unit ball graphs have the same forbidden induced subgraphs.
In Section~\ref{sec:algorithms}, we show that those forbidden induced subgraphs, together with existing results and algorithms, readily give a subexponential algorithm and a QPTAS for \cli on disk and unit ball graphs.
In Section~\ref{sec:eptas}, we design a randomized EPTAS for \mis on the class $\cl(d,\beta,i)$.
We then show how this yields a randomized EPTAS for \cli on disk and unit ball graphs.
This is tight in two directions: having different values of radii, and the dimension of the ambient space.
Indeed, in Section~\ref{sec:gen&lim}, we complement those positive results by showing that \cli is unlikely to have a QPTAS (even a SUBEXPAS) on ball graphs where all the radii are arbitrarily close to 1, and on 4-dimensional unit ball graphs.
We also show the same lower bounds for intersection graphs of simple objects in the plane, namely filled triangles and filled ellipses.
In Section~\ref{sec:perspectives}, we make some observations about the EPTAS and propose some lines of thoughts on how to tackle the computational complexity of \cli in disk and unit ball graphs.

\section{Preliminaries}\label{sec:prelim}

For two integers $i \leqslant j$, we denote by $[i,j]$ the set of integers $\{i,i+1,\ldots, j-1, j\}$.
For a positive integer $i$, we denote by $[i]$ the set of integers $[1,i]$.

The \emph{Exponential Time Hypothesis} (ETH) is a conjecture by Impagliazzo et al. asserting that there is no $2^{o(n)}$-time algorithm for \textsc{3-SAT} on instances with $n$ variables \cite{ImpagliazzoETH}. 
The ETH, together with the sparsification lemma \cite{ImpagliazzoETH}, even implies that there is no $2^{o(n+m)}$-time algorithm solving \textsc{3-SAT}.

\paragraph*{Graph notations}
Let $G$ be a simple graph.
We denote by $\overline{G}$ its complement, i.e., the graph obtained by making every non-edge an edge and vice versa.
$V(G)$ and $E(G)$ represent its set of vertices and its set of edges, respectively.
We denote by $\alpha(G)$ the \emph{independence number} of $G$, i.e., the size of a maximum independent set (or stable set), and by $\omega(G)$ the \emph{clique number} of $G$, i.e., the size of a maximum clique.
For $S \subseteq V(G)$, its open neighborhood, denoted by $N_G(S)$, is the set of vertices that are not in $S$ and have a neighbor in $S$, and its closed neighborhood is defined by $N_{G}[S]=S\cup N_{G}(S)$. We omit the subscript $G$ if the graph is obvious from the context and we write $N_{G}(x)$ instead of $N_{G}(\{x\})$.

The \emph{odd cycle packing number} of $G$, denoted by $\ocp(G)$, is defined as the maximum number of vertex-disjoint odd cycles and the \emph{induced odd cycle packing number} of $G$, denoted by $\iocp(G)$, is the maximum number of vertex-disjoint odd cycles with no edge between any two of them.

The \emph{2-subdivision} of a graph $G$ is the graph $H$ obtained by subdividing each edge of $G$ exactly twice.
If $G$ has $n$ vertices and $m$ edges, then $H$ has $n+2m$ vertices and $3m$ edges.
The \emph{co-2-subdivision} of $G$ is the complement of $H$.
Hence it has $n+2m$ vertices and ${n+2m \choose 2} - 3m$ edges.
The \emph{co-degree} of a graph is the maximum degree of its complement.
A \emph{co-disk} is a graph that is the complement of a disk graph.

\paragraph*{VC-dimension}
VC-dimension has been introduced by Vapnik and Chervonenkis in the seminal paper~\cite{vapnik}.
Let $H=(V,E)$ be a hypergraph.
A set $X \subseteq V$ of vertices of $H$ is \emph{shattered} if for every subset $Y$ of $X$ there exists a hyperedge $e \in E$ such that $e \cap X = Y$.
An intersection between $X$ and a hyperedge $e$ of $E$ is called a \emph{trace} (on $X$).
Equivalently, a set $X$ is shattered if all its $2^{|X|}$ traces exist.
The \emph{VC-dimension} of a hypergraph is the maximum size of a shattered set.
As an abuse of language, we call VC-dimension of a graph $G$, denoted by $\vcdim(G)$, the VC-dimension of the neighborhood hypergraph $(V(G),\{N_G(v) $ $|$ $v \in V(G)\})$.
The \emph{geometric} VC-dimension of a collection of objects is the VC-dimension of the (uncountable infinite) hypergraph where vertices are all the points of the ambient space and the hyperedges are the objects.
If, in general, the geometric VC-dimension does not necessarily coincide with the VC-dimension of the intersection graph, it does coincide for unit balls in any dimension.
Indeed, shattering a set $S$ of points with balls of radius 1 is equivalent to shattering the balls of radius $1/2$ centered at $S$ with balls of radius $1/2$.

\paragraph*{Geometric notations}
For a positive integer $d$, we denote by $\mathbb R^d$ the $d$-dimensional euclidean space.
We denote by $d(x,y)$ the euclidean distance between $x$ and $y$.
If $x$ and $y$ are two points of $\mathbb R^d$, $\seg(x,y)$, or simply $xy$, is the straight-line segment whose endpoints are $x$ and $y$.
If $x$ and $y$ are distinct, $\ell(x,y)$ is the unique line going through $x$ and $y$.
If $s$ is a segment with positive length, then we denote by $\ell(s)$ the unique line containing $s$.

We will often define disks and elliptical disks by their boundary, i.e., circles and ellipses, and also use the following basic facts.
There are exactly two circles that pass through a given point with a given tangent at this point and a given radius; one if we further specify on which side of the tangent the circle is.
There is exactly one circle which passes through two points with a given tangent at one of the two points, provided the other point is \emph{not} on this tangent.
Finally, there exists one (not necessarily unique) ellipse which passes through two given points with two given tangents at those points. 

A \emph{$d$-dimensional closed ball} is defined from a center $x \in \mathbb R^d$ and a radius $r \in \mathbb R^+$, as the set of points $\{y \in \mathbb R^d $ $|$ $d(x,y) \leqslant r\}$, i.e., at distance at most $r$ from $x$.
The \emph{diameter} of a subset $S \subseteq \mathbb R^d$ is defined as $\underset{x,y \in S}{\sup}d(x,y)$.
The \emph{piercing number} (also called \emph{hitting set} or \emph{transversal}) of a collection $\mathcal O$ of geometric objects in $\mathbb R^d$ is the minimum number of points of $\mathbb R^d$ that \emph{pierce} (or \emph{hit}) all the objects of $\mathcal O$, i.e., each object contains at least one of these points.

\paragraph*{Disk graphs and their forbidden induced subgraphs}

A \emph{$d$-dimensional ball graph} is the intersection graph of $d$-dimensional closed balls of $\mathbb R^d$.
We shorten $2$-dimensional ball graph in \emph{disk graph}, and $3$-dimensional ball graph in \emph{ball graph}.
A {$d$-dimensional unit ball graph} is the intersection graph of unit $d$-dimensional closed balls of $\mathbb R^d$, that is, balls with radius $1$.
Unit $d$-dimensional ball graphs can be thought of only with points: vertices are points (at the center of the ball) and two points are adjacent if they are at distance at most $2$.
In particular, solving \cli on those graphs is equivalent to finding a maximum sub-collection of points whose diameter is at most a fixed value.

\paragraph*{Approximation schemes}
A PTAS (Polynomial-Time Approximation Scheme) for a minimization (resp. maximization) problem is an approximation algorithm which takes an additional parameter $\varepsilon>0$ and outputs in time $n^{f(\varepsilon)}$ a solution of value at most $(1+\varepsilon)\text{OPT}$ (resp. at least $(1-\varepsilon)\text{OPT}$) where $\text{OPT}$ is the optimum value.
Observe that from now on, we consider that approximation ratios of maximization problems are smaller than $1$, unlike the convention we used in the introduction. 
An EPTAS (Efficient PTAS) is the same with running time $f(\varepsilon)n^{O(1)}$, an FPTAS (Fully PTAS) has running time $(1/\varepsilon)^{O(1)}n^{O(1)}$, a QPTAS (Quasi PTAS) has running time $n^{\text{polylog}~n}$ for every $\varepsilon > 0$.
Finally, and this is less standard, we call SUBEXPAS (subexponential AS) an approximation scheme with running time $2^{n^\gamma}$ for some $\gamma < 1$ not depending on $\varepsilon$, for every $\varepsilon > 0$.
All those approximation schemes can come deterministic or randomized.

\paragraph*{The class $\cl(d,\beta,i)$}
In Section~\ref{sec:eptas}, we present a randomized EPTAS and a deterministic PTAS for approximating the independence number $\alpha$ on graphs with constant VC-dimension, linear independence number, and the induced odd cycle packing number equal to 1. 

Actually, we extend these algorithms to the case $\iocp(G)=i$, for any constant $i$. Let $\cl(d,\beta,i)$ be the class of simple graphs $G$ satisfying:
\begin{itemize}
\item $\vcdim(G) \leqslant d$,
\item $\alpha(G) \geqslant \beta |V(G)|$, and
\item $\iocp(G) \leqslant i$.
\end{itemize}

For any positive constants $d, \beta<1, i$, we get a deterministic PTAS and a randomized EPTAS for \mis on $\cl(d,\beta,i)$.

\section{Disk graphs with co-degree at most 2}\label{sec:structural}

In this section, we fully characterize the degree-2 complements of disk graphs.
We show the following:

\begin{theorem}\label{thm:main-structural}
  A disjoint union of paths and cycles is the complement of a disk graph if and only if the number of odd cycles is at most one. 
\end{theorem}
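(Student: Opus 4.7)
The statement has two directions.

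The forward direction ($\Rightarrow$) is immediate from Theorem~\ref{thm:main-structural-non-disk}: if $H$ is a disjoint union of paths and cycles and contains two or more odd cycles $C$ and $C'$ among its components, then $C \sqcup C'$ is an induced subgraph of $H$, so $H$ cannot be the complement of a disk graph.

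For the backward direction ($\Leftarrow$), the plan is to give an explicit disk representation of $\overline H$ whenever $H$ is a disjoint union of paths and cycles with at most one odd cycle. The construction proceeds in two stages: first a representation for each connected component, then a gluing step to merge them.

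\emph{Stage 1 (single component).} For a path $P_n$ or cycle $C_n$, I place the $n$ disk centers on a circle of large radius $R$ at angles $\phi_1,\ldots,\phi_n$ with $\phi_{i+1}-\phi_i = \pi+\delta$ for a small parameter $\delta>0$, and assign each disk the same radius $R-\varepsilon$. With this setup two disks are disjoint iff the shortest angular distance between their centers is very close to $\pi$. Since consecutive indices produce angular distance $\pi-\delta$ (close to $\pi$), the consecutive disks are disjoint, while for $|i-j|\geq 2$ the angular distance lies strictly farther from $\pi$, so those disks intersect. For a path any sufficiently small $\delta$ works; for a cycle we tune $\delta$ so the total rotation closes up modulo $2\pi$, namely $\delta=\pi/k$ for $C_{2k}$ and $\delta=\pi/(2k+1)$ for $C_{2k+1}$. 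Small cycles for which this $\delta$ is not small enough (essentially $C_4$ and $C_6$) are handled by ad hoc representations, e.g.\ $\overline{C_4}=2K_2$ via two well-separated crossing pairs.

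\emph{Stage 2 (combining).} Place each component's representation in its own spatially well-separated region of the plane. To produce the required cross-component intersections, I enlarge each disk $D_i$ by shifting its center along some vector $\vec v_i$ and increasing its radius by $|\vec v_i|$. This operation keeps the original $D_i$ inside the new disk (so all intra-component (non-)intersections are preserved) and extends the disk exactly on the $\vec v_i$ side. The only constraint is that $\vec v_i$ must avoid $D_i$'s $H$-neighbors, which are at most two since $H$ has maximum degree $2$; this leaves a continuous cone of allowable directions. By laying the components' regions out so that all other components lie in this cone for every $D_i$, we can choose each $\vec v_i$ large enough that the enlarged $D_i'$ reaches, and hence meets, every disk of every other component.

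\emph{Main obstacle.} The principal difficulty is the combining step: for every disk $D_i$ one must pick a single enlargement direction that both avoids the two $H$-neighbors of $D_i$ \emph{and} points toward every other component's region. This is the reason for the freedom left by having at most two forbidden directions per disk, and the global placement of the components must be designed — e.g.\ aligning all components along a single line so each disk's two forbidden directions point transversally to this line — so that a common safe direction exists.
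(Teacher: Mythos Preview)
Your forward direction is correct. The backward construction, however, has two genuine gaps.

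\emph{Stage 1 fails for all even cycles, not just $C_4$ and $C_6$.} Placing equal-radius disks with centres on a circle is a unit-disk representation, and $\overline{C_{2k}}$ is known not to be a unit disk graph for any $k\geqslant 4$ (Atminas--Zamaraev~\cite{Atminas16}). Your concrete choice $\delta=\pi/k$ already exhibits the failure: for $C_{2k}$ with $k$ odd one gets $\phi_{k+1}\equiv\phi_1\pmod{2\pi}$, so two centres coincide; for $k$ even one gets $\phi_{k+1}-\phi_1\equiv\pi$, so vertices $1$ and $k{+}1$ have antipodal centres and hence disjoint disks, although they are adjacent in $\overline{C_{2k}}$. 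The paper's representation of a single $\overline{C_{2k}}$ is genuinely non-unit (it uses disks of two very different scales), and this is unavoidable.

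\emph{Stage 2 does not preserve non-intersections.} The containment $D_i\subseteq D_i'$ guarantees only that former intersections persist; it says nothing about former non-intersections. Take two disjoint disks of radius $\tfrac12$ centred at $(\pm 1,0)$ and enlarge both by $\vec v=(0,L)$ (which is orthogonal to, hence ``avoids'', each disk's unique $H$-neighbour). The new centres are $(\pm 1,L)$ with radii $\tfrac12+L$; the centre distance is still $2$ while the sum of radii is $1+2L$, so for any $L>\tfrac12$ the enlarged disks intersect. More generally, keeping $D_i'\cap D_j'=\emptyset$ while $|\vec v_i|,|\vec v_j|\to\infty$ forces $\vec v_i-\vec v_j$ to be nearly parallel to $c_i-c_j$ and of comparable magnitude, which is incompatible with both vectors pointing toward the same distant region. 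The paper sidesteps this entirely: each component is represented so that all of its disks come arbitrarily close to one common point, and the components are then superimposed at that point with small rotations. It is precisely in this gluing step that the ``at most one odd cycle'' hypothesis is actually used.
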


We split this theorem into two parts.
In the first one, Section~\ref{subsec:notco-disk}, we show that the union of two disjoint odd cycles is not the complement of a disk graph.
This is the part that will be algorithmically useful.
As disk graphs are closed under taking induced subgraphs, it implies that in the complement of a disk graph two vertex-disjoint odd cycles have to be linked by at least one edge.
This will turn out useful when solving \mis on the complement of the graph (to solve \cli on the original graph).
In the second part, Section~\ref{subsec:co-disk}, we show how to represent the complement of the disjoint union of even cycles and exactly one odd cycle.
Although this result is not needed for the forthcoming algorithmic section, it nicely highlights the singular role that parity plays and exposes the complete set of disk graphs of co-degree at most 2. 

\subsection{The disjoint union of two odd cycles is not co-disk}
\label{subsec:notco-disk}

We call \emph{positive distance} between two non-intersecting disks the minimum of $d(x,y)$ where $x$ is in one disk and $y$ is in the other.
If the disks are centered at $c_1$ and $c_2$ with radius $r_1$ and $r_2$, respectively, then this value is $d(c_1,c_2)-r_1-r_2$.
We call \emph{negative distance} between two intersecting disks the length of the straight-line segment defined as the intersection of three objects: the two disks and the line joining their centers.
This value is $r_1+r_2-d(c_1,c_2)$, which is positive.

We call \emph{proper representation} a disk representation where every edge is witnessed by a proper intersection of the two corresponding disks, i.e., the interiors of the two disks intersect.
It is easy to transform a disk representation into a proper representation (of the same graph). 

\begin{lemma}\label{lem:proper}
  If a graph has a disk representation, then it has a proper representation.
\end{lemma}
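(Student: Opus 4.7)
The plan is to enlarge every disk in the given representation by the same small amount $\varepsilon>0$, chosen so that previously intersecting disks now properly intersect while previously disjoint disks remain disjoint. Since this transformation is uniform, the correctness reduces to a single inequality per pair of vertices.

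Concretely, let $G$ be a disk graph with representation $\{D_v : v \in V(G)\}$ where $D_v$ has center $c_v$ and radius $r_v$. For every non-edge $\{u,v\} \notin E(G)$ set $p_{uv} := d(c_u,c_v) - r_u - r_v$, which is strictly positive (this is the positive distance between $D_u$ and $D_v$). Because $V(G)$ is finite, either $G$ is complete (in which case any $\varepsilon>0$ trivially works, as we can then even use a common disk for everybody at the start) or
\[
\varepsilon \;:=\; \tfrac{1}{3}\,\min_{\{u,v\}\notin E(G)} p_{uv} \;>\;0.
\]
Define a new representation $\{D'_v : v \in V(G)\}$ by keeping the centers $c_v$ and setting the new radii to be $r_v + \varepsilon$.

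Next I would check that $\{D'_v\}$ realizes $G$ properly. For a non-edge $\{u,v\}$, the new signed distance is $d(c_u,c_v) - (r_u+\varepsilon) - (r_v+\varepsilon) = p_{uv} - 2\varepsilon \geqslant p_{uv}/3 > 0$, so $D'_u$ and $D'_v$ are disjoint. For an edge $\{u,v\}$, we already had $d(c_u,c_v) - r_u - r_v \leqslant 0$, so the corresponding value for the enlarged disks is at most $-2\varepsilon < 0$, meaning that the sum of the new radii strictly exceeds the distance between centers; this is precisely the condition for the two closed disks to have intersecting interiors. Hence every edge of $G$ is witnessed by a proper intersection and every non-edge by disjoint disks, so $\{D'_v\}$ is a proper representation of~$G$.

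There is no real obstacle here; the only mildly delicate point is ensuring that the $\min$ defining $\varepsilon$ ranges over a non-empty finite set, which is handled by splitting off the trivial case where $G$ is complete (where, incidentally, a single common disk shared by all vertices already gives a proper representation). Everything else is a single algebraic inequality, uniform in the pair $\{u,v\}$, which is why the uniform enlargement by $\varepsilon$ suffices.
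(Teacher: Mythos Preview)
Your proof is correct and follows essentially the same idea as the paper: use the minimum positive distance between non-adjacent disks as a safety margin for enlarging radii. The only cosmetic difference is that the paper, as written, increases the radius of \emph{one} disk per tangent pair by $\varepsilon/2$, whereas you enlarge \emph{all} radii uniformly by $\varepsilon$; your version is in fact cleaner, since it handles all tangent pairs simultaneously and makes the verification a single inequality per pair.
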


\begin{proof}
  If two disks intersect non-properly, we increase the radius of one of them by $\varepsilon/2$ where $\varepsilon$ is the smallest positive distance between any two disks in the representation.
\end{proof}

In order not to have to discuss about the special case of three aligned centers in a disk representation, we show that such a configuration is never needed to represent a disk graph. 

\begin{lemma}\label{lem:generalPosition}
  If a graph has a disk representation, it has a proper representation where no three centers are aligned.
\end{lemma}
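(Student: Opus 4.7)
The plan is to start from a proper representation (given by Lemma~\ref{lem:proper}) and then perform an arbitrarily small perturbation of the centers that (i) preserves the intersection pattern, and therefore the represented graph, and (ii) puts the centers in general position.

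First I would exploit the fact that a proper representation is \emph{robust}. For any two disks that intersect, they intersect properly, so their interiors have non-empty intersection; this property is preserved under a sufficiently small movement of either center (keeping the radii fixed). For any two disks that do not intersect, the positive distance between them is some strictly positive number, and this non-intersection is likewise preserved under a sufficiently small movement of either center. Since there are only finitely many pairs of disks, there exists a single $\delta>0$ such that if every center is moved by at most $\delta$ in Euclidean distance, all adjacencies and non-adjacencies are preserved.

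Next, I would argue that generic perturbations within the $\delta$-ball destroy every triple collinearity. Fix an arbitrary ordering $c_1,\ldots,c_n$ of the centers. For a fixed pair of distinct points $c_i,c_j$, the set of points $c \in \mathbb{R}^2$ that are collinear with $c_i$ and $c_j$ is a line, which has Lebesgue measure zero in $\mathbb{R}^2$. Hence the set of bad positions for $c_k$, given the positions of the other centers, is a finite union of lines, still of measure zero. Proceeding inductively: assume $c_1,\ldots,c_{k-1}$ have been perturbed by at most $\delta$ so that no three of them are collinear. Then the set of positions in the open $\delta$-ball around the original $c_k$ that would create a collinear triple with some $c_i,c_j$ (for $i,j<k$ or $i<k \leqslant j \leqslant n$, where for $j>k$ we use the not-yet-perturbed centers) is a finite union of line segments, of two-dimensional measure zero. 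Since the open $\delta$-ball has positive measure, we can choose a new position for $c_k$ that is collinear with no two of the other centers. After processing all centers, no three are collinear.

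The only subtle point — and the main place where one has to be a little careful — is that when we perturb $c_k$ we must avoid collinearities both with pairs of already-perturbed centers and with pairs involving centers that are still to be perturbed, since later perturbations cannot be guaranteed to \emph{remove} a collinearity. The clean way around this is to treat the perturbation as a single global step: consider the product space of all perturbation vectors in $(\text{open } \delta\text{-ball})^n \subseteq \mathbb{R}^{2n}$, and observe that for each unordered triple $\{i,j,k\}$, the set of $n$-tuples of centers making them collinear is the zero set of a single polynomial (the corresponding $3\times 3$ determinant) in the $2n$ coordinates, hence either all of $\mathbb{R}^{2n}$ or a proper algebraic subvariety of measure zero; since specializing all other centers to distinct originals shows the polynomial is not identically zero, it is of measure zero. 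Taking the complement of the finite union of these varieties intersected with our open product box yields a set of full measure, hence non-empty, of valid perturbations. Any such perturbation produces the desired proper representation with no three centers aligned.
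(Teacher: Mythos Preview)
Your proof is correct and follows the same high-level idea as the paper---start from a proper representation (Lemma~\ref{lem:proper}), exploit the strict slack in every intersection and non-intersection to get a safe perturbation radius, and then perturb centers off the finitely many lines determined by pairs of other centers---but the execution differs.

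The paper argues iteratively: it takes $\varepsilon$ to be the minimum of the smallest positive distance and the smallest negative distance, picks \emph{one} center that lies on a line through two other centers, and moves just that center inside its $\varepsilon/2$-ball to a point avoiding every line through a pair of centers. This strictly decreases the number of collinear triples, so repetition terminates. Your argument instead perturbs all centers simultaneously, treating the bad configurations as a finite union of proper algebraic hypersurfaces in $\mathbb{R}^{2n}$ and invoking that these have Lebesgue measure zero inside the open product of $\delta$-balls. Your global approach is cleaner in that it sidesteps the bookkeeping you yourself flagged in the inductive variant (new collinearities with not-yet-perturbed centers), at the cost of appealing to a slightly heavier tool (measure of real varieties) where the paper gets by with the elementary fact that finitely many lines do not cover an open disk. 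Either way the argument is routine; there is no substantive gap.
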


\begin{proof}
  By Lemma~\ref{lem:proper}, we have or obtain a proper representation.
  Let $\varepsilon$ be the minimum between the smallest positive distance and the smallest negative distance.
  As the representation is proper, $\varepsilon > 0$.
  If three centers are aligned, we move one of them to any point which is not lying in a line defined by two centers in a ball of radius $\varepsilon/2$ centered at it.
  This decreases the number of triples of aligned centers by at least one, and can be repeated until no three centers are aligned.
\end{proof}

From now on, we assume that every disk representation is proper and without three aligned centers.
We show the folklore result that in a representation of $K_{2,2}$ that sets the four centers in convex position, both non-edges have to be \emph{diagonal}.

\begin{lemma}\label{lem:cotwoKtwo}
  In a disk representation of $K_{2,2}$ with the four centers in convex position, the non-edges are between vertices corresponding to opposite centers in the quadrangle. 
\end{lemma}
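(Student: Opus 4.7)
The plan is to argue by contradiction: assume the two non-edges of $K_{2,2}$ are not the two diagonals of the convex quadrangle formed by $c_1,c_2,c_3,c_4$ (labeled in their cyclic convex order). The two non-edges of $K_{2,2}$ form a perfect matching on the four vertices, and a short case analysis on matchings of the quadrangle $c_1c_2c_3c_4$ shows that such a matching consists either of the two diagonals $\{c_1c_3,c_2c_4\}$ or of a pair of opposite sides. Under the contradiction hypothesis, we may therefore assume after relabeling that the two non-edges are $c_1c_2$ and $c_3c_4$.

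The key step is a comparison of sums of pairwise distances between centers. The two non-edges translate (using the fact that the representation is proper, by Lemma~\ref{lem:proper}) into $d(c_1,c_2)>r_1+r_2$ and $d(c_3,c_4)>r_3+r_4$, while the four edges give in particular $d(c_1,c_3)\leqslant r_1+r_3$ and $d(c_2,c_4)\leqslant r_2+r_4$. Adding both pairs of inequalities yields the strict bound
\[
d(c_1,c_2)+d(c_3,c_4) \;>\; r_1+r_2+r_3+r_4 \;\geqslant\; d(c_1,c_3)+d(c_2,c_4).
\]

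I would then contradict this using a classical metric fact about convex quadrilaterals: the sum of the two diagonals strictly exceeds the sum of any pair of opposite sides. Letting $q$ be the intersection point of the two diagonals $c_1c_3$ and $c_2c_4$, which lies strictly inside both segments thanks to the convex-position assumption, one has the exact identity $d(c_1,c_3)+d(c_2,c_4)=d(c_1,q)+d(q,c_3)+d(c_2,q)+d(q,c_4)$. Applying the triangle inequality in the triangles $c_1qc_2$ and $c_3qc_4$ then gives $d(c_1,c_2)+d(c_3,c_4)<d(c_1,q)+d(q,c_2)+d(c_3,q)+d(q,c_4)$, contradicting the previous bound. Combining yields the desired contradiction.

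The only subtlety lies in ensuring the triangle inequalities are genuinely strict, i.e., that neither triple $\{c_1,q,c_2\}$ nor $\{c_3,q,c_4\}$ is collinear. Since $q$ lies on $\ell(c_2,c_4)$ but is distinct from $c_2$, collinearity of $c_1,q,c_2$ would force $c_4\in\ell(c_1,c_2)$, contradicting the no-three-collinear-centers assumption afforded by Lemma~\ref{lem:generalPosition}; and symmetrically for the second triangle. Once this is taken care of, the proof reduces to the short inequality chain above, so I expect no substantive obstacle beyond this bookkeeping.
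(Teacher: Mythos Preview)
Your proof is correct and follows essentially the same approach as the paper: both argue by contradiction that if the non-edges were a pair of opposite sides of the convex quadrangle, then the diagonal-versus-side inequality $d(c_1,c_3)+d(c_2,c_4)>d(c_1,c_2)+d(c_3,c_4)$ would contradict the radius constraints $d(c_1,c_2)+d(c_3,c_4)>r_1+r_2+r_3+r_4\geqslant d(c_1,c_3)+d(c_2,c_4)$. Your treatment is in fact slightly more careful than the paper's, which does not explicitly justify the strictness of the triangle inequality via Lemma~\ref{lem:generalPosition}.
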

\begin{proof}
  Let $c_1$ and $c_2$ be the centers of one non-edge, and $c_3$ and $c_4$ the centers of the other non-edge.
  Let $r_i$ be the radius associated to center $c_i$ for $i \in [4]$.
  It should be that $d(c_1,c_2)>r_1+r_2$ and $d(c_3,c_4)>r_3+r_4$ (see Figure~\ref{fig:k22}).
  Assume $c_1$ and $c_2$ are consecutive on the convex hull formed by $\{c_1, c_2, c_3, c_4\}$, and say, without loss of generality, that the order is $c_1, c_2, c_3, c_4$.
  Let $c$ be the intersection of $\seg(c_1,c_3)$ and $\seg(c_2,c_4)$.
  It holds that $d(c_1,c_3) + d(c_2,c_4) = d(c_1,c) + d(c,c_3) + d(c_2,c) + d(c,c_4) = (d(c_1,c)+d(c,c_2)) + (d(c_3,c)+d(c,c_4)) > d(c_1,c_2) + d(c_3,c_4) > r_1 + r_2 + r_3 + r_4 = (r_1+r_3)+(r_2+r_4)$.
  Which implies that $d(c_1,c_3) > r_1+r_3$ or $d(c_2,c_4) > r_2+r_4$; a contradiction.
\begin{figure}
\centering
\begin{tikzpicture}[scale=0.25,
    dot/.style={fill,circle,inner sep=-0.02cm}
  ]
\draw[very thin,fill=red, fill opacity=0.2] (11.8405399558, 12.6637388353) circle (3.82299906735);
\node[dot] at (11.8405399558, 12.6637388353) {};
\node at (11.8405399558, 12.6637388353 - 1.2) {$c_1$};
\draw[very thin,fill=red, fill opacity=0.2] (18.7026380772, 18.0821427855) circle (4.22302853751);
\node[dot] at (18.7026380772, 18.0821427855) {};
\node at (18.7026380772, 18.0821427855 + 1.2) {$c_2$};
\draw[very thin,fill=red, fill opacity=0.2] (11.5839455246, 16.9641945052) circle (3.96909254758);
\node[dot] at (11.5839455246, 16.9641945052) {};
\node at (11.5839455246, 16.9641945052 + 1.2) {$c_3$};
\draw[very thin,fill=red, fill opacity=0.2] (18.4743856348, 12.97903103) circle (3.46657454294);
\node[dot] at (18.4743856348, 12.97903103) {};
\node at (18.4743856348, 12.97903103 - 1.2) {$c_4$};
\end{tikzpicture}
\caption{Disk realization of a $K_{2,2}$. As the centers are positioned, it is impossible that the two non-edges are between the disks 2 and 3, and between the disks 1 and 4 (or between the disks 1 and 3, and between the disks 2 and 4).}\label{fig:k22}
\end{figure}
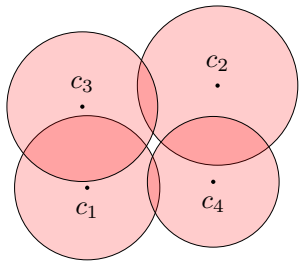
\end{proof}

We derive a useful consequence of the previous lemma, phrased in terms of intersections of lines and segments.

\begin{corollary}\label{cor:intersect}
In any disk representation of $K_{2,2}$ with centers $c_1, c_2, c_3, c_4$ with the two non-edges between the vertices corresponding to $c_1$ and $c_2$, and between $c_3$ and $c_4$, it should be that $\ell(c_1,c_2)$ intersects $\seg(c_3,c_4)$ or $\ell(c_3,c_4)$ intersects $\seg(c_1,c_2)$.    
\end{corollary}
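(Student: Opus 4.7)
By Lemma~\ref{lem:generalPosition} I may assume the representation is in general position, i.e., no three of $c_1,c_2,c_3,c_4$ are collinear. The plan is then to split on the combinatorial type of these four points: either they are in convex position, or exactly one of them lies strictly inside the triangle formed by the other three. In both cases the conclusion will reduce to an elementary planar observation.

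\textbf{Convex case.} If $c_1,c_2,c_3,c_4$ are in convex position, then Lemma~\ref{lem:cotwoKtwo} forces the non-edges to join opposite vertices of the convex quadrangle. Hence $\seg(c_1,c_2)$ and $\seg(c_3,c_4)$ are the two diagonals, which cross in the interior of the quadrangle. This gives both $\ell(c_1,c_2) \cap \seg(c_3,c_4) \neq \emptyset$ and $\ell(c_3,c_4) \cap \seg(c_1,c_2) \neq \emptyset$, so the corollary holds.

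\textbf{Non-convex case.} Otherwise one of the four centers lies strictly inside the triangle spanned by the other three. The statement is symmetric under swapping the unordered pairs $\{c_1,c_2\}$ and $\{c_3,c_4\}$, and under swapping the two centers inside either pair, so I may assume without loss of generality that $c_1$ is the interior point, lying inside the triangle $c_2c_3c_4$. I now look at the line $\ell(c_1,c_2)$: it passes through the vertex $c_2$ of the triangle and through the interior point $c_1$, hence it must exit the triangle through one of the three edges $\seg(c_2,c_3)$, $\seg(c_2,c_4)$, $\seg(c_3,c_4)$. The general position assumption rules out the first two possibilities, since exiting through $\seg(c_2,c_3)$ or $\seg(c_2,c_4)$ would force $\ell(c_1,c_2)$ to contain $c_3$ or $c_4$, aligning three centers. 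Therefore $\ell(c_1,c_2)$ meets $\seg(c_3,c_4)$, as required.

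\textbf{Main obstacle.} There is no substantial obstacle; once Lemma~\ref{lem:cotwoKtwo} and the general-position reduction (Lemma~\ref{lem:generalPosition}) are in hand, the only thing to verify carefully is that in the non-convex case the incident edges of the triangle are blocked by the no-three-collinear assumption, which is exactly what forces the ``crossing'' through the opposite side.
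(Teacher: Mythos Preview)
Your proof is correct and follows essentially the same approach as the paper: split into the convex and non-convex cases, use Lemma~\ref{lem:cotwoKtwo} for the former, and observe in the latter that a line through a vertex of a triangle and an interior point must meet the opposite side. Your version spells out the non-convex case a bit more carefully (using the no-three-collinear assumption to rule out the two adjacent sides), whereas the paper simply asserts the geometric fact; the content is the same.
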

\begin{proof}
  First, assume that the disk representation has the four centers in convex position.
  Then, by Lemma~\ref{lem:cotwoKtwo}, $\seg(c_1,c_2)$ and $\seg(c_3,c_4)$ are the diagonals of a convex quadrangle.
  Hence they intersect, and \emph{a fortiori}, $\ell(c_1,c_2)$ intersects $\seg(c_3,c_4)$ ($\ell(c_3,c_4)$ intersects $\seg(c_1,c_2)$, too).

  In the other case, the disk representation has one center, say without loss of generality, $c_1$, in the interior of the triangle formed by the other  three centers.
  In this case, $\ell(c_1,c_2)$ intersects $\seg(c_3,c_4)$.
  If instead a center in $\{c_3,c_4\}$ is in the interior of the triangle formed by the other centers, then $\ell(c_3,c_4)$ intersects $\seg(c_1,c_2)$.
\end{proof}

We can now prove the main result of this section thanks to the previous corollary, parity arguments, and \emph{some elementary properties of closed plane curves}, namely Property I and Property III of the eponymous paper \cite{Tait1877}.


\forbiddenDisk*

\begin{proof}
  Let $s$ and $t$ be two positive integers and $G=\overline{C_s + C_t}$ the complement of the disjoint union of a cycle of length $s$ and a cycle of length $t$.
  Assume that $G$ is a disk graph.
  Let $\mathcal C_1$ (resp. $\mathcal C_2$) be the cycle embedded in the plane formed by $s$ (resp. $t$) straight-line segments joining the consecutive centers of disks along the first (resp. second) cycle.
 Observe that the segments of those two cycles correspond to the non-edges of $G$.
 We number the segments of $\mathcal C_1$ from $S_1$ to $S_s$, and the segments of $\mathcal C_2$, from $S'_1$ to $S'_t$.
  
  For the $i$-th segment $S_i$ of $\mathcal C_1$, let $a_i$ be the number of segments of $\mathcal C_2$ intersected by the line $\ell(S_i)$ prolonging $S_i$, let $b_i$ be the number of segments $S'_j$ of $\mathcal C_2$ such that the prolonging line $\ell(S'_j)$ intersects $S_i$, and let $c_i$ be the number of segments of $\mathcal C_2$ intersecting $S_i$.
  For the second cycle, we define similarly $a'_j$, $b'_j$, $c'_j$.
  The quantity $a_i+b_i-c_i$ counts the number of segments of $\mathcal C_2$ which can possibly represent a $K_{2,2}$ with $S_i$ according to Corollary~\ref{cor:intersect}.
  As we assumed that $G$ is a disk graph, $a_i+b_i-c_i = t$ for every $i \in [s]$.
  Otherwise there would be at least one segment $S'_j$ of $\mathcal C_2$ such that $\ell(S_i)$ does not intersect $S'_j$ \emph{and} $\ell(S'_j)$ does not intersect $S_i$.
  
  By summing, it holds that $\Sigma_{i=1}^s (a_i+b_i-c_i)= s t$.
  Observe that $a_i$ is an even integer since $\mathcal C_2$ is a closed curve.
  This implies that $s t$ and $\Sigma_{i=1}^s (b_i-c_i)$ have the same parity.
  Note that $\Sigma_{i=1}^s c_i$ counts the number of intersections of the two closed curves $\mathcal C_1$ and $\mathcal C_2$, and is therefore even.
  Finally, observe that $\Sigma_{i=1}^s b_i=\Sigma_{j=1}^t a'_j$ by reordering and reinterpreting the sum from the point of view of the segments of $\mathcal C_2$.
  Since the $a'_j$'s are all even, $\Sigma_{i=1}^s b_i$ is also even.

  This means than $s t$ is even.
  Therefore, $s$ and $t$ cannot be both odd integers.
\end{proof}

\subsection{The disjoint union of cycles with at most one odd is co-disk}
\label{subsec:co-disk}

We only show the following part of Theorem~\ref{thm:main-structural} to emphasize that, rather unexpectedly, parity plays a crucial role in disk graphs of co-degree at most 2.
It is also amusing that the complement of any odd cycle is a \emph{unit} disk graph while the complement of any even cycle of length at least 8 is not \cite{Atminas16}.
Here, the situation is somewhat reversed: complements of even cycles are \emph{easier} to represent than complements of odd cycles.

\begin{theorem}\label{thm:coEvenCycles}
The complement of the disjoint union of even cycles and one odd cycle is a disk graph.
\end{theorem}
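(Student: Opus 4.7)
The plan is to exhibit an explicit disk realization of $\overline{H}$, where $H=C_{n_1}+\ldots+C_{n_r}+C_{2\ell+1}$ is the disjoint union of (possibly zero) even cycles and at most one odd cycle. I would split the construction into two stages: first realize each co-cycle $\overline{C_{n_i}}$ in isolation, then glue the pieces together.

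For the first stage, the even and odd cases would be handled separately. For an even cycle, I would exploit the co-bipartite structure of $\overline{C_{2k}}$: the two color classes of $C_{2k}$ form two cliques, joined by the complete bipartite graph minus two perfect matchings (corresponding to the $2k$ cycle edges). The realization places each class as a cluster of pairwise intersecting disks (say nested around a common centre point, which is guaranteed in their common interior), and tunes sizes and positions so that each disk in one cluster fails to meet exactly its two cycle-neighbours in the other. For the lone odd cycle $\overline{C_{2\ell+1}}$, one uses a different, slightly asymmetric construction; the base case $\overline{C_5}=C_5$ is realized as a regular pentagon of unit disks with the standard radius range, and larger odd cases are built by a direct generalization.

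For the second stage, all cross-cycle pairs of $\overline{H}$ are edges, so the disks from different cycles must all pairwise intersect. I would place each realization of $\overline{C_{n_i}}$ in its own bounded region of the plane, then force cross-region intersection by working at two separated length scales: the within-cycle geometry (small, designed to produce the specific within-cycle non-edges) lives at one scale, while the inter-region translation is chosen so that every pair of disks from distinct regions overlaps. Crucially, I would \emph{not} scale disks uniformly (which would erase within-cycle non-edges); instead, the regions themselves would be translated near each other while the within-region configurations are left untouched, so that inter-region overlap is created without disturbing the carefully tuned within-region gaps.

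The hard part is the compatibility in the second stage: enlarging or approaching disks to create cross-cycle intersections could easily destroy within-cycle non-intersections. The resolution is the two-scale argument above, using that the within-cycle non-intersection ``gaps'' are of bounded positive size depending only on the local construction, so once everything is placed with that slack, translations of whole regions by amounts smaller than this slack can bring every cross-region pair of disks into intersection. The restriction to at most one odd cycle is essential for the whole strategy: Theorem~\ref{thm:main-structural-non-disk} shows via a parity argument on the intersection counts of closed curves that two odd cycles cannot coexist in a co-disk; with a single odd cycle, no such parity obstruction arises, and the local plus global construction goes through.
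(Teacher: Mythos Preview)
Your proposal has a genuine gap at the gluing step, and the first stage is also under-specified.

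\textbf{The gluing step does not work as described.} You write that ``translations of whole regions by amounts smaller than this slack can bring every cross-region pair of disks into intersection.'' But translation cannot achieve this. In any realization of $\overline{C_n}$ with $n\geqslant 4$ there are pairs of disks that do \emph{not} intersect (the cycle edges), so the disks have no common point. Now take two non-intersecting disks $D,D'$ in the realization of cycle~$1$, separated by some positive distance~$\delta$. A small disk $E$ from cycle~$2$ must intersect both $D$ and $D'$; rigidly translating the cycle-$2$ configuration cannot force $E$ to meet two disks that lie on opposite sides of a $\delta$-wide gap unless $E$ happens to be large enough and correctly placed. Your ``two-scale'' slogan suggests shrinking the within-cycle geometry, but shrinking the cycle-$2$ configuration only makes its disks smaller and hence \emph{harder} to intersect with both $D$ and $D'$ simultaneously. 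There is no choice of translation and scale that makes every disk of one arbitrary co-cycle realization meet every disk of another; the combinatorics of the cross-edges has to be built into the geometry of each piece from the start.

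The paper does exactly this. Each co-cycle is realized so that all disks pass through (or very near) a single pivot region, and the odd-indexed disks other than $\mathcal D_1$ are taken essentially as half-planes. The different co-cycle realizations are then superimposed at this pivot and \emph{rotated} by small angles about it; the half-plane-like disks automatically hit everything, and the remaining disks all cluster at the pivot. The single odd cycle requires a modified, asymmetric realization with an extra ``lateral'' half-plane disk $\mathcal D'_{2s+1}$, and the paper explicitly checks why this asymmetry prevents stacking two odd cycles (the disks $\mathcal D'_1$ and $\mathcal D''_{2s'+1}$ end up far apart). None of this structure is present in your sketch.

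\textbf{The first stage is also not a construction.} ``Tunes sizes and positions so that each disk in one cluster fails to meet exactly its two cycle-neighbours in the other'' is a statement of what you want, not how to get it. The paper's realization of $\overline{C_{2s}}$ is already delicate: it fixes three anchor disks, lays down a convex chain of points $p_1,\ldots,p_s$ just inside $\mathcal D_1$, attaches the even-indexed disks tangentially along this chain, and then chooses the odd-indexed disks as large disks tangent to carefully selected cotangent lines. Your description does not supply any mechanism of this kind, and more importantly does not produce a representation with the pivot/half-plane structure that the gluing step actually needs.
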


\begin{proof}
We start with a disk representation of the complement of one even cycle $C_{2s}$.
Recall that this construction is not possible with \emph{unit} disks for even cycles of length at least 8.
  We assume that the vertices of the cycle $C_{2s}$ are $1, 2, \ldots, 2s$ in this order.
  For each $i \in [2s]$, the disk $\mathcal D_i$ encodes the vertex $i$. 
  We start by fixing the disks $\mathcal D_1$, $\mathcal D_2$, and $\mathcal D_{2s}$.
  Those three disks have the same radius.
  We place $\mathcal D_2$ and $\mathcal D_{2s}$ side by side: their centers have the same $y$-coordinate.
  They intersect and the distance between their centers is $\varepsilon > 0$.
  We define $\mathcal D_1$ as the disk above $\mathcal D_2$ and $\mathcal D_{2s}$ tangent to those two disks and sharing the same radius.
  We denote by $p_1$ its intersection with $\mathcal D_2$ and by $p_s$ its intersection with $\mathcal D_{2s}$.
  We then slightly shift $\mathcal D_1$ upward so that it does not touch (nor does it intersect) $\mathcal D_2$ and $\mathcal D_{2s}$ anymore.
  While we do this translation, we imagine that the points $p_1$ and $p_s$ remain fixed at the boundary of $\mathcal D_2$ and $\mathcal D_{2s}$ respectively (see Figure~\ref{fig:complement-one-even-cycle1}).
  Let $p_2, p_3, \ldots, p_{s-1}$ points in the interior of $\mathcal D_1$ and below the line $\ell(p_1,p_s)$ such that $p_1, p_2, \ldots, p_{s-1}, p_s$ form an $x$-monotone convex chain (see Figure~\ref{fig:complement-one-even-cycle2}). 
  
  \begin{figure}[h!]
    \centering
    \begin{minipage}{0.3\textwidth}
      \centering
    \begin{tikzpicture}[
        dot/.style={fill,circle,inner sep=-0.01cm},
        vert/.style={draw, fill=red, opacity=0.2},
        verta/.style={draw, fill=blue, opacity=0.2},
      ]
      \def\s{1}
      \coordinate (c2) at (0,0) ;
      \draw[vert] (c2) circle (1) ;
      \node at (c2) {$\mathcal D_2$} ;

      \coordinate (c2s) at (\s,0) ;
      \draw[vert] (c2s) circle (1) ;
      \node at (c2s) {$\mathcal D_{2s}$} ;

      \coordinate (c1) at (\s / 2,2 - \s / 17) ;
      \draw[verta] (c1) circle (1) ;
      \node at (c1) {$\mathcal D_1$} ;

      \node[dot] at (0.25,0.96) {} ;
      \node[dot] at (\s - 0.25,0.96) {} ;

      \node at (0.05,0.96) {$p_1$} ;
      \node at (\s - 0.05,0.96) {$p_s$} ;   
    \end{tikzpicture}
    \subcaption{Three important disks with the same size $\mathcal D_1$, $\mathcal D_2$, $\mathcal D_{2s}$.}
    \label{fig:complement-one-even-cycle1}
    \end{minipage}
  \qquad
  \begin{minipage}{0.6\textwidth}
      \centering
      \begin{tikzpicture}
        [
          dot/.style={fill,circle,inner sep=-0.01cm}
        ]
        \centerarc[draw, fill=blue, fill opacity=0.2](0,0)(-50:-130:5) ;

        \def\xp{-3.2}
        \def\yp{-4}
        \def\xpb{3.2}
        \node[dot] at (\xp,\yp) {} ;
        \node at (\xp,\yp - 0.2) {$p_1$} ;
        \node[dot] at (\xpb,\yp) {} ;
        \node at (\xpb,\yp - 0.2) {$p_s$} ;

        \foreach \k/\i/\j in {2/0.5/0.2,3/1/0.36,4/1.5/0.48}{
          \node[dot] at (\xp+\i,\yp-\j) {} ;
          \node at (\xp+\i,\yp-\j- 0.2) {$p_\k$} ;
        }
        \foreach \k/\i/\j in {1/0.5/0.2,2/1/0.36,3/1.5/0.48}{
          \node[dot] at (\xpb-\i,\yp-\j) {} ;
          \node at (\xpb-\i,\yp-\j- 0.2) {$p_{s\text{-}\k}$} ;
        }
        \node at (0,-4.7) {$\ldots$} ;
        \node at (0,-4.2) {$\mathcal D_1$} ;
      \end{tikzpicture}
  \subcaption{Zoom where $\mathcal D_1$ almost touches $\mathcal D_2$ and $\mathcal D_{2s}$.}
  \label{fig:complement-one-even-cycle2}
  \end{minipage}
   \caption{The disks $\mathcal D_1$, $\mathcal D_2$, $\mathcal D_{2s}$ and the convex chain $p_1, p_2, \ldots, p_s$. The curvature of the boundary of $\mathcal D_1$ is exaggerated in the zoom for the sake of clarity.}
  \label{fig:complement-one-even-cycle}
\end{figure}
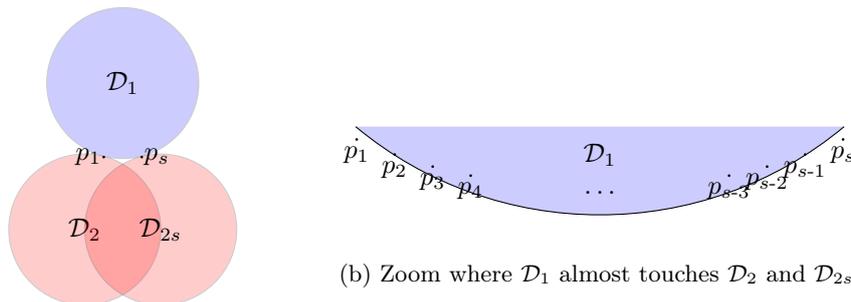

  Now, we define the disks $\mathcal D_4, \mathcal D_6, \ldots, \mathcal D_{2s-2}$.
  For each $i \in \{4,6,\ldots,2s-2\}$, let $\mathcal D_i$ be the unique disk with the same radius as $\mathcal D_2$ and such that the boundary of $\mathcal D_i$ crosses $p_{i/2}$ and is below its tangent $\tau_{i/2}$ at this point which has the direction of $\ell(p_{i/2-1},p_{i/2+1})$.
  
  It should be observed that the only disk with even index $i$ which contains $p_{i/2}$ is $\mathcal D_i$.
  We can further choose the convex chain $\{p_i\}_{i \in [s]}$ such that one co-tangent $\tau_{i,i+1}$ to $\mathcal D_{2i}$ and $\mathcal D_{2i+2}$ has a slope between the slopes of $\tau_i$ and $\tau_{i+1}$.
  Finally we define the disks $\mathcal D_3, \mathcal D_5, \ldots, \mathcal D_{2s-1}$.
  For each $i \in \{3,5,\ldots,2s-1\}$, let $\mathcal D_i$ be tangent to $\tau_{i,i+1}$ at the point of $x$-coordinate the mean between the $x$-coordinates of $p_{\frac{i-1}{2}}$ and $p_{\frac{i+1}{2}}$.
  Moreover, $\mathcal D_i$ is above $\tau_{i,i+1}$ and has a radius sufficiently large to intersect every disk with even index which are not $\mathcal D_{i-1}$ and $\mathcal D_{i+1}$.
  It is easy to see that the disks $\mathcal D_i$ with even index (resp. odd index) form a clique.
  By construction, the disk $\mathcal D_i$ with odd index greater than 3 intersects every disk with even index except $\mathcal D_{i-1}$ and $\mathcal D_{i+1}$ since $\mathcal D_i$ is on the other side of $\tau_{i,i+1}$ than those two disks.
  As the line $\tau_{i,i+1}$ intersects every other disk with even index, there is a sufficiently large radius so that $\mathcal D_i$ does so, too.
  The particular case of $\mathcal D_1$ has been settled at the beginning of the construction.
  This disk avoids $\mathcal D_2$ and $\mathcal D_{2s}$ and contains $p_2, p_3, \ldots, p_{s-1}$, so intersects all the other disks with even index.

  We now explain how to \emph{stack} even cycles. 
  We make the distance $\varepsilon$ between the center of $\mathcal D_2$ and $\mathcal D_{2s}$ a thousandth of their common radius.
  Note that this distance does not depend on the value of $s$.
  We identify the small region (point) where the disk $\mathcal D_1$ intersects with the disks of even index, between two different complements of cycles.
  We then rotate from this point one representation by a small angle (see Figure~\ref{fig:even-cycles-complement} for multiple complements of even cycles stacked).

\begin{figure}[h!]
      \centering
      \begin{tikzpicture}[
          dot/.style={fill,circle,inner sep=-0.01cm},
          vert/.style={draw, very thin, fill=red, fill opacity=0.2},
          verta/.style={draw, very thin, fill=blue, fill opacity=0.2},
          extended line/.style={shorten >=-#1,shorten <=-#1},
          extended line/.default=1cm,
          one end extended/.style={shorten >=-#1},
          one end extended/.default=1cm,
        ]
        \def\s{1}
        \def\hs{-4}
        \def\he{4}
        \def\ve{3}
        \coordinate (c1) at (0,1) ;
        \draw[verta] (c1) circle (1) ;
        \coordinate (c2) at (0,-1) ;
        \draw[vert] (c2) circle (1) ;

        \draw[very thin] (\hs,0) -- (\he,0) ;
        \fill[blue,fill opacity=0.2] (\hs,0) -- (\he,0) -- (\he,\ve) -- (\hs,\ve) -- cycle ;

        \foreach \i in {-20,-10,10,20}{
        \begin{scope}[rotate=\i]
        \coordinate (c1) at (0,1) ;
        \draw[verta] (c1) circle (1) ;
        \coordinate (c2) at (0,-1) ;
        \draw[vert] (c2) circle (1) ;
        \draw[very thin] (\hs,0) -- (\he,0) ;

        \fill[blue,opacity=0.2] (\hs,0) -- (\he,0) -- (\he,\ve) -- (\hs,\ve) -- cycle ;
        \end{scope}
        }

        \node at (0,1) {$\mathcal D_1$} ;
        \node at (0,2.4) {$\mathcal D_{2i+1}$} ;
        \node at (0,-1) {$\mathcal D_{2i}$} ;
      \end{tikzpicture}
      \caption{A disk realization of the complement of the disjoint union of an arbitrary number of even cycles.}
      \label{fig:even-cycles-complement}
\end{figure}

The reason why there are indeed all the edges between two complements of cycles is intuitive and depicted in Figure~\ref{fig:clique-minus-matching} and more specifically Figure~\ref{fig:cmm-lines}.
We superimpose all the complements of even cycles in a way that the maximum rotation angle between two complements of cycles is small (see for instance Figure~\ref{fig:even-cycles-odd-cycle-complement}).

\begin{figure}[h!]
  \centering
  \begin{minipage}{0.55\textwidth}
    \centering
    \begin{tikzpicture}
      [
        scale = 1.5,
          dot/.style={fill,circle,inner sep=-0.01cm},
          vert/.style={draw, very thin, fill=red, fill opacity=0.2},
          verta/.style={draw, very thin, fill=blue, fill opacity=0.2},
          vertb/.style={draw, very thin, fill=green, fill opacity=0.2},
      ]
    \def\r{1.2}
    \def\n{5}
    \foreach \h in {1,...,\n}{
      \pgfmathsetmacro{\i}{15 * (\h - \n / 2 - 0.5)}
      \pgfmathsetmacro{\j}{25 * (\h-1)}
      \draw[rotate=\i,verta] (0,0) arc (-90:270:\r) ;
      \draw[rotate=\i,vert] (0,-0.01) arc (90:-270:\r) ;
    }
     \node at (0,1) {$\mathcal D_1$} ;
    \node at (0,-1) {$\mathcal D_{2i}$} ;
  \end{tikzpicture}
  \subcaption{The only potential non-edges are between two disks represented almost tangent.}
  \label{fig:cmm-circles}
  \end{minipage}
  \qquad
  \begin{minipage}{0.35\textwidth}
      \centering
      \begin{tikzpicture}
        [
          scale = 1.4,
          dot/.style={fill,circle,inner sep=-0.01cm},
          vert/.style={draw, very thin, fill=red, fill opacity=0.2},
          verta/.style={draw, very thin, fill=blue, fill opacity=0.2},
          vertb/.style={draw, very thin, fill=green, fill opacity=0.2},
        ]
        \def\n{5}
        \foreach \h in {1,...,\n}{
          \pgfmathsetmacro{\i}{15 * (\h - \n / 2 - 0.5)}
          \pgfmathsetmacro{\j}{25 * (\h-1)}
          \draw[rotate=\i,blue] (0,0) -- (-2,0) ;
          \draw[rotate=\i,blue] (0,0) -- (2,0) ;
          \draw[rotate=\i,red] (0,-0.1) -- (-2,-0.1) ;
          \draw[rotate=\i,red] (0,-0.1) -- (2,-0.1) ;
        }
  \end{tikzpicture}
  \subcaption{Zoom in where the boundary of the disks intersect.}
  \label{fig:cmm-lines}
  \end{minipage}
   \caption{Zoom in where the disk $\mathcal D_1$ of the several complements of even cycles intersects all the $\mathcal D_{2i}$ of the other cycles.}
  \label{fig:clique-minus-matching}
\end{figure}
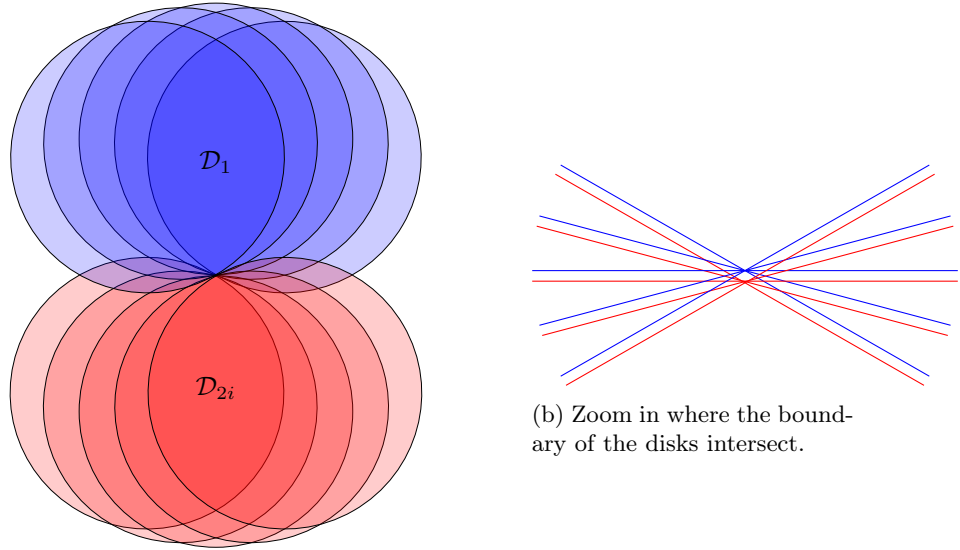

Finally, we need to add one disjoint odd cycle in the complement.
There is a nice representation of a complement of an odd cycle by unit disks in the paper of Atminas and Zamaraev \cite{Atminas16} (see Figure~\ref{fig:atminas-zamaraev}).
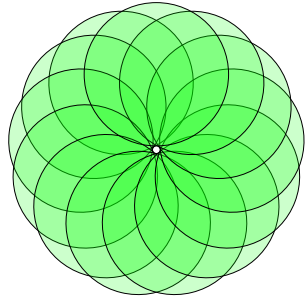
\begin{figure}[h!]
      \centering
      \begin{tikzpicture}[
          dot/.style={fill,circle,inner sep=-0.01cm},
          vert/.style={draw, very thin, fill=red, fill opacity=0.2},
          verta/.style={draw, very thin, fill=blue, fill opacity=0.2},
          vertb/.style={draw, very thin, fill=green, fill opacity=0.2},
        ]
        \def\s{13}
        \foreach \i in {1,...,\s}{
        \begin{scope}[rotate=360 * \i / 13]
        \coordinate (d\i) at (0,1) ;
        \draw[vertb] (d\i) circle (0.95) ;
        \end{scope}
        }
        
      \end{tikzpicture}
      \caption{A disk realization of the complement of an odd cycle with unit disks as described by Atminas and Zamaraev \cite{Atminas16}. Unfortunately, we cannot use this representation.}
      \label{fig:atminas-zamaraev}
\end{figure}

However, we will use a different and non-unit representation for the next step to work.
Let $2s+1$ be the length of the cycle.
We use a similar construction as for the complement of an even cycle.
We denote the disks $\mathcal D'_1, \mathcal D'_2, \ldots, \mathcal D'_{2s+1}$.
The difference is that we separate $\mathcal D'_1$ away from $\mathcal D'_2$ but not from $\mathcal D'_{2s}$.
Then, we represent all the disks with odd index but $\mathcal D'_{2s+1}$ as before.
The disk $\mathcal D'_{2s+1}$ is chosen as being cotangent to $\mathcal D'_1$ and $\mathcal D'_{2s}$ and to the left of them.
Then we very slightly move $\mathcal D'_{2s+1}$ to the left so that it does not intersect those two disks anymore.
The disk $\mathcal D'_{2s}$ has the rightmost center among the disks with even index.
Therefore $\mathcal D'_{2s+1}$ still intersects all the other disks of even index.

Moreover, the disks with even index form a clique and the disks with odd index form a clique minus an edge between the vertex $1$ and the vertex $2s+1$.
Hence, the intersection graph of those disks is indeed the complement of $C_{2s+1}$ (see Figure~\ref{fig:odd-cycle-complement}).

\begin{figure}[h!]
      \centering
      \begin{tikzpicture}[
          dot/.style={fill,circle,inner sep=-0.01cm},
          vert/.style={draw, very thin, fill=red, fill opacity=0.2},
          verta/.style={draw, very thin, fill=blue, fill opacity=0.2},
          vertb/.style={draw, very thin, fill=green, fill opacity=0.2},
        ]
        \def\s{1}
        \def\hs{-4}
        \def\he{4}
        \def\ve{3}
        \def\vs{-2}
        \coordinate (c1) at (0,1) ;
        \draw[verta] (c1) circle (1) ;
        \coordinate (c2) at (0,-1) ;
        \draw[vert] (c2) circle (1) ;

        \draw[very thin] (\hs,0) -- (\he,0) ;
        \fill[green,fill opacity=0.2] (\hs,0) -- (\he,0) -- (\he,\ve) -- (\hs,\ve) -- cycle ;

        \draw[very thin] (-1,\vs) -- (-1,\ve) ;
        \fill[green,fill opacity=0.2] (-1,\vs) -- (-1,\ve) -- (\hs,\ve) -- (\hs,\vs) -- cycle ;
        
        \node at (0,2.4) {$\mathcal D'_{2i+1}$} ;
        \node at (0,-1) {$\mathcal D'_{2i}$} ;
        \node at (-2.5,-0.3) {$\mathcal D'_{2s+1}$} ;
        \node at (0,1) {$\mathcal D'_1$} ;
      \end{tikzpicture}
      \caption{A disk realization of the complement of an odd cycle of length $2s+1$.}
      \label{fig:odd-cycle-complement}
\end{figure}
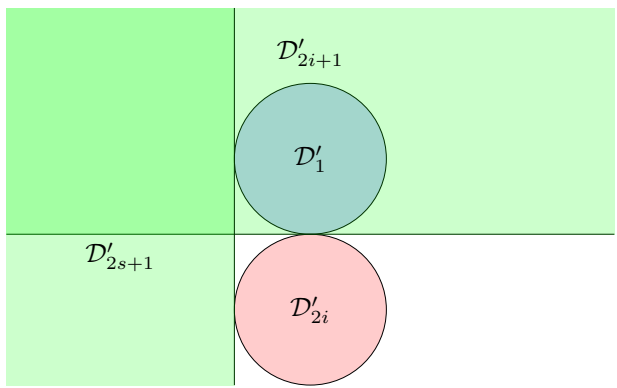

This representation of $\overline{C_{2s+1}}$ can now be put on top of complements of even cycles.
We identify the small region (point) where the disk $\mathcal D_1$ intersects the disks of even index (in complements of even cycles) with the small region (point) where the disk $\mathcal D'_1$ intersects the disks of even index (in the one complement of odd cycle).
We make the disk $\mathcal D'_1$ significantly smaller than $\mathcal D_1$ and rotate the representation of $\overline{C_{2s+1}}$ by a sizable angle, say 60 degrees (see Figure~\ref{fig:even-cycles-odd-cycle-complement}). 

\begin{figure}[h!]
      \centering
      \begin{tikzpicture}[
          dot/.style={fill,circle,inner sep=-0.01cm},
          vert/.style={draw, very thin, fill=red, fill opacity=0.2},
          verta/.style={draw, very thin, fill=blue, fill opacity=0.2},
          vertb/.style={draw, very thin, fill=green, fill opacity=0.2},
        ]        
         \def\s{1}
        \def\hs{-4}
        \def\he{4}
        \def\ve{3}
        \coordinate (c1) at (0,1) ;
        \draw[verta] (c1) circle (1) ;
        \coordinate (c2) at (0,-1) ;
        \draw[vert] (c2) circle (1) ;

        \draw[very thin] (\hs,0) -- (\he,0) ;
        \fill[blue,fill opacity=0.2] (\hs,0) -- (\he,0) -- (\he,\ve) -- (\hs,\ve) -- cycle ;

        \foreach \i in {-1,-0.5,0.5,1}{
        \begin{scope}[rotate=\i]
        \coordinate (c1) at (0,1) ;
        \draw[verta] (c1) circle (1) ;
        \coordinate (c2) at (0,-1) ;
        \draw[vert] (c2) circle (1) ;
        \draw[very thin] (\hs,0) -- (\he,0) ;

        \fill[blue,opacity=0.2] (\hs,0) -- (\he,0) -- (\he,\ve) -- (\hs,\ve) -- cycle ;
        \end{scope}
        }

        \node at (0,1) {$\mathcal D_1$} ;
        \node at (0,2.4) {$\mathcal D_{2i+1}$} ;
        \node at (0,-1) {$\mathcal D_{2i}$} ;

        \def\t{0.1}
        \def\nve{2}
        \def\nhs{-2}
        \def\nhe{2.8}
        
        \begin{scope}[rotate=55]
        \coordinate (d1) at (0,\t) ;
        \draw[verta] (d1) circle (\t) ;
        \coordinate (d2) at (0,-\t) ;
        \draw[vert] (d2) circle (\t) ;

        \draw[very thin] (\nhs,0) -- (\nhe,0) ;
        \fill[green,fill opacity=0.2] (\nhs,0) -- (\nhe,0) -- (\nhe,\nve) -- (\nhs,\nve) -- cycle ;

        \draw[very thin] (-\t,-\nve) -- (-\t,\nve) ;
        \fill[green,fill opacity=0.2] (-\t,-\nve) -- (-\t,\nve) -- (\nhs,\nve) -- (\nhs,-\nve) -- cycle ;
        \end{scope}

        \node at (-1.5,0.6) {$\mathcal D'_{2i+1}$} ;
        \node at (-1.5,-0.6) {$\mathcal D'_{2s+1}$} ;

      \end{tikzpicture}
      \caption{Placing the complement of odd cycle on top of the complements of even cycles.}
      \label{fig:even-cycles-odd-cycle-complement}
\end{figure}
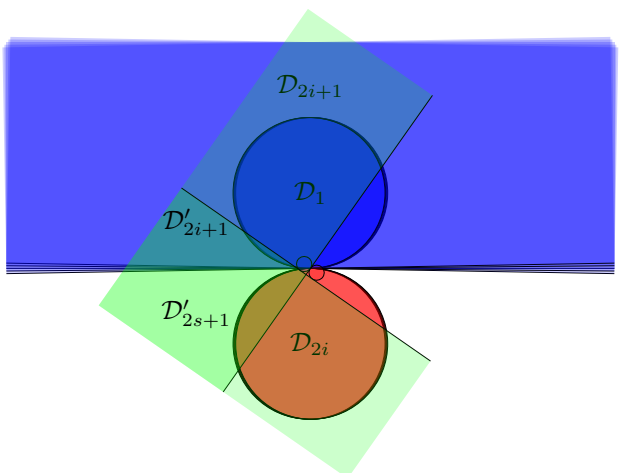

It is easy to see that the disks of the complement of the odd cycle intersect all the disks of the complements of even cycles.
A good sanity check is to observe why we cannot stack representations of complements of odd cycles, with the same rotation scheme.
In Figure~\ref{fig:sanity-check}, the rotation of two representations of the complement of an odd cycle leaves disks $\mathcal D'_1$ and $\mathcal D''_{2s'+1}$ far apart when they should intersect.
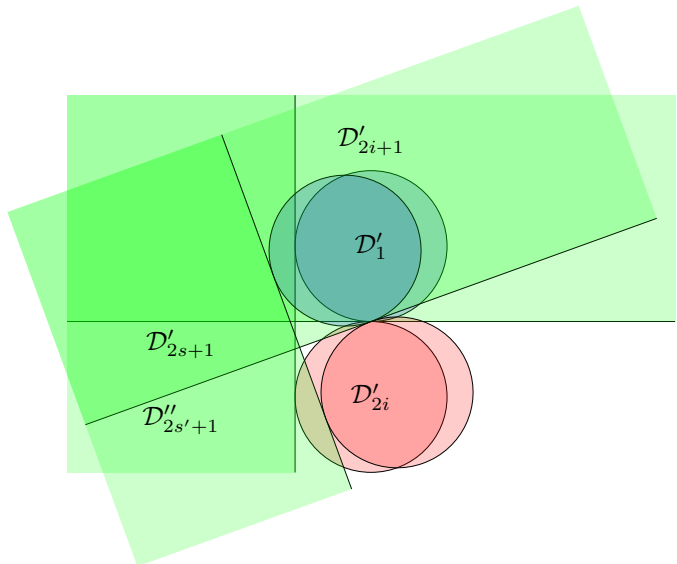
\begin{figure}[h!]
      \centering
      \begin{tikzpicture}[
          dot/.style={fill,circle,inner sep=-0.01cm},
          vert/.style={draw, very thin, fill=red, fill opacity=0.2},
          verta/.style={draw, very thin, fill=blue, fill opacity=0.2},
          vertb/.style={draw, very thin, fill=green, fill opacity=0.2},
        ]
        \def\s{1}
        \def\hs{-4}
        \def\he{4}
        \def\ve{3}
        \def\vs{-2}
        \foreach \i in {0,20}{
        \begin{scope}[rotate=\i]
        \coordinate (c1) at (0,1) ;
        \draw[verta] (c1) circle (1) ;
        \coordinate (c2) at (0,-1) ;
        \draw[vert] (c2) circle (1) ;

        \draw[very thin] (\hs,0) -- (\he,0) ;
        \fill[green,fill opacity=0.2] (\hs,0) -- (\he,0) -- (\he,\ve) -- (\hs,\ve) -- cycle ;

        \draw[very thin] (-1,\vs) -- (-1,\ve) ;
        \fill[green,fill opacity=0.2] (-1,\vs) -- (-1,\ve) -- (\hs,\ve) -- (\hs,\vs) -- cycle ;
        \end{scope}
        }
        
        \node at (0,2.4) {$\mathcal D'_{2i+1}$} ;
        \node at (0,-1) {$\mathcal D'_{2i}$} ;
        \node at (-2.5,-0.3) {$\mathcal D'_{2s+1}$} ;
        \node at (0,1) {$\mathcal D'_1$} ;

        \node at (-2.5,-1.3) {$\mathcal D''_{2s'+1}$} ;
      \end{tikzpicture}
      \caption{Sanity check: trying to stack the complements of two odd cycles fails.
      The disks $\mathcal D'_1$ and $\mathcal D''_{2s'+1}$ do not intersect.}
      \label{fig:sanity-check}
\end{figure}
\end{proof}

Theorem~\ref{thm:main-structural-non-disk} and Theorem~\ref{thm:coEvenCycles}, together with the fact that disk graphs are closed by taking induced subgraphs prove Theorem~\ref{thm:main-structural}.

\section{Same obstruction for unit ball graphs}\label{sec:unit-ball-obstruction}

In this section, we show that unit ball graphs, like disk graphs, do not contain the complement of two disjoint odd cycles.
In other words, for any unit ball graph $G$, $\iocp(\overline G) \leqslant 1$.
It is interesting to note that the proofs we could find for disk graphs and unit ball graphs turn out to be quite different.

A \emph{closed polygonal chain} $C$ in $\mathbb R^d$ is defined by a sequence of points (or vertices) $x_1, x_2, \ldots, x_p \in \mathbb R^d$ as the straight-edge segments $x_1x_2$, $x_2x_3$, \dots, $x_{p-1}x_p$, $x_px_1$.
We call \emph{direction} of a non-zero vector its equivalence class by the relation $\vec{u} \sim \vec{v} \Leftrightarrow \exists \lambda \in \mathbb R^+, \vec{u} = \lambda \vec{v}$.
We denote the direction of $\vec{u}$ by $\dir{\vec{u}}$.
We define the set of directions $$\needle{C}:=\underset{1 \leqslant i \leqslant p}{\bigcup} \fork{x_{i-1}}{x_i}{x_{i+1}} \cup \forkb{x_{i-1}}{x_i}{x_{i+1}},$$ 
where the indices are taken modulo $p$, $\fork{x_{i-1}}{x_i}{x_{i+1}} := \{\dir{\overrightarrow{x_ix}} ~|~ x \in x_{i-1}x_{i+1} \},$ and $\forkb{x_{i-1}}{x_i}{x_{i+1}} := \{\dir{\overrightarrow{xx_i}} ~|~ x \in x_{i-1}x_{i+1} \}$. 

\begin{lemma}\label{lem:same-direction}
  Let $C_1$ and $C_2$ be two closed polygonal chains of~$\mathbb R^3$ on an odd number of vertices each.
  Then, $\needle{C_1} \cap \needle{C_2}$ is non-empty.
\end{lemma}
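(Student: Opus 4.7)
The plan is to realize $\needle{C_i}$ as the image of a continuous antipodal loop $\phi_i\colon \mathbb{R}/2\pi\mathbb{Z} \to S^2$ (satisfying $\phi_i(s+\pi) = -\phi_i(s)$), and then to invoke the classical topological fact that two antipodal closed curves on the $2$-sphere must share a point.

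For the first step, set $u_j := \dir{\overrightarrow{x_jx_{j+1}}}$ (indices mod $p_i$) for the chain $C_i$ on $p_i$ vertices. As $x$ sweeps the segment $x_{j-1}x_{j+1}$, the forward fork $\fork{x_{j-1}}{x_j}{x_{j+1}}$ traces a great-circle arc on $S^2$ from $-u_{j-1}$ to $u_j$, and the backward fork $\forkb{x_{j-1}}{x_j}{x_{j+1}}$ traces the antipodal arc from $u_{j-1}$ to $-u_j$. Hence $\needle{C_i}$ is a union of $2p_i$ arcs meeting only at the $2p_i$ points $\pm u_1,\ldots,\pm u_{p_i}$, and each such direction is an endpoint of exactly two arcs (the forward fork at one vertex and the backward fork at the next). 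Thus $\needle{C_i}$ is a disjoint union of closed curves on $S^2$. Starting from $u_1$ and alternating backward and forward arcs produces the sequence $u_1 \to -u_2 \to u_3 \to -u_4 \to \cdots$, landing at $(-1)^k u_{1+(k \bmod p_i)}$ after $k$ arcs. Because $p_i$ is odd, after $p_i$ arcs one reaches $-u_1$, the antipode of the starting direction, and the curve closes only after all $2p_i$ arcs have been traversed. This furnishes the antipodal parameterization $\phi_i$.

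To conclude, compose $\phi_i$ with the quotient $\pi\colon S^2 \to \mathbb{RP}^2$: the resulting map $\tilde\phi_i \colon \mathbb{R}/\pi\mathbb{Z} \to \mathbb{RP}^2$ is a loop whose lift over a half-period ends at the antipode of its start, so $[\tilde\phi_i]$ is the nontrivial element of $\pi_1(\mathbb{RP}^2) = \mathbb{Z}/2$. The mod-$2$ intersection form on $H_1(\mathbb{RP}^2; \mathbb{Z}/2) \cong \mathbb{Z}/2$ is nondegenerate, so the mod-$2$ intersection number of $[\tilde\phi_1]$ with $[\tilde\phi_2]$ equals $1$; in particular the images of $\tilde\phi_1$ and $\tilde\phi_2$ in $\mathbb{RP}^2$ cannot be disjoint, and lifting any intersection point back to $S^2$ yields a common direction in $\needle{C_1} \cap \needle{C_2}$. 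The main obstacle is the parity-based bookkeeping of the middle step: verifying that the alternation really yields a single closed curve with antipodal midpoint hinges squarely on $p_i$ being odd (for $p_i$ even the set $\needle{C_i}$ splits into two disjoint antipodal components, and the argument breaks down). Possible self-intersections of $\phi_i$ are harmless because the final argument operates at the level of homology classes in $\mathbb{RP}^2$.
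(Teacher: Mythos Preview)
Your proof is correct and follows essentially the same strategy as the paper: realize each $\needle{C_i}$ as a closed antipodal curve on $S^2$ (using that $p_i$ is odd), then use that two such curves must intersect. The paper parametrizes the curve kinematically (moving the two endpoints of a ``needle'' alternately along the chain until, after two full laps, one returns to the start), whereas you decompose $\needle{C_i}$ into $2p_i$ great-circle arcs and track their endpoints $\pm u_j$; these are two descriptions of the same object. Where the paper simply asserts that two closed antipodal curves on $S^2$ intersect, you supply the justification via the quotient to $\mathbb{RP}^2$ and the nondegeneracy of the mod-$2$ intersection form on $H_1(\mathbb{RP}^2;\mathbb{Z}/2)$, which is a welcome addition of rigor.
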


\begin{proof}
  We want to establish the existence of a direction which is common to $\needle{C_1}$ and $\needle{C_2}$.
  We identify $\needle{C_i}$ ($i \in \{1,2\}$) to its trace on the 2-sphere.
  Indeed the set of directions in $\mathbb R^3$ is isomorphic to the set of points on the 2-sphere.
  
  Let $x_1, x_2, \ldots, x_p$ be the vertices of $C_1$.
  We show that $\needle{C_1}$ is path-connected (due to $p$ being odd).
  To do so, we continuously modify (within $\needle{C_1}$) the initial vector $\dir{\overrightarrow{x_1x_2}}$ into any other vector of $\needle{C_1}$.
  We start with $a$ in $x_1$ and $b$ in $x_2$.
  We continuously move $a$ from $x_1$ to $x_3$ (on the straight-edge segment $x_1x_3$) while $b$ stays fixed at $x_2$ 
  (we are moving in $\fork{x_1}{x_2}{x_3}$).
  For the next step, $a$ is fixed at $x_3$ and $b$ continuously moves from $x_2$ to $x_4$.
  (we are moving in $\forkb{x_{2}}{x_3}{x_{4}}$).
  And in general, we move the point with index $i-1$ from $x_{i-1}$ to $x_{i+1}$ while the other point stays fixed at $x_i$ (where the indices are modulo $p$).
  Since $p$ is odd, we reach the situation where $b$ is set to $x_1$ and $a$ is set to $x_2$, when we have completed once the walk on the closed polygonal chain.
  We repeat a walk on the chain once again, so that $a$ is back to $x_1$ and $b$ is back to $x_2$, and we stop.
  This process spans $\needle{C_1}$.
  
  Therefore, $\mathcal C_1$ is a closed curve on the 2-sphere since we are finally back to $\dir{\overrightarrow{x_1x_2}}$, that is, from where we started.
  Furthermore, $\mathcal C_1$ is \emph{antipodal}, i.e., closed by taking antipodal points.
  Indeed, for each direction attained in a $\fork{x_{i-1}}{x_i}{x_{i+1}}$, we reach the opposite direction in $\forkb{x_{i-1}}{x_i}{x_{i+1}}$. 
  Similarly $\needle{C_2}$ draws a closed antipodal curve $\mathcal C_2$ on the 2-sphere.
  The curves $\mathcal C_1$ and $\mathcal C_2$ intersect since they are closed and antipodal.
  An intersection point corresponds to a direction shared by $\needle{C_1}$ and $\needle{C_2}$.
\end{proof}

We will apply this lemma on the closed polygonal chains $C_1$ and $C_2$ formed by the centers of unit balls realizing two odd cycle complements.
The contradiction will come from the fact that not all the pairs of centers $x \in C_1$ and $y \in C_2$ can be at distance at most 2.

\forbidden*

\begin{proof}
  Let $x_1, x_2, \ldots, x_p \in \mathbb R^3$ (resp.~$y_1, y_2, \ldots, y_q \in \mathbb R^3$) be the centers of unit balls representing the complement of an odd cycle of length $p$ (resp. $q$), such that $x_i$ and $x_{i+1}$ (resp. $y_i$ and $y_{i+1}$) encode the non-adjacent pairs.
  Let $C_1$ (resp. $C_2$) be the closed polygonal chain obtained from the centers $x_1, x_2, \ldots, x_p$ (resp. $y_1, y_2, \ldots, y_q$).
  By Lemma~\ref{lem:same-direction}, there are two collinear vectors $\overrightarrow{x_ix}$ and $\overrightarrow{y_jy}$ with $x$ on the straight-line segment $x_{i-1}x_{i+1}$ and $y$ on the straight-line segment $y_{j-1}y_{j+1}$.

  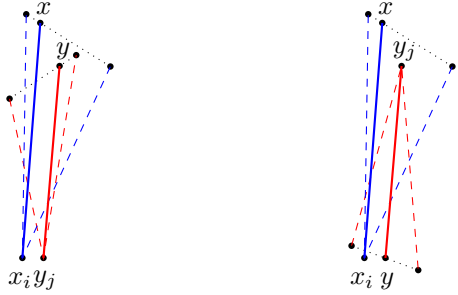
\begin{figure}[!ht]
    \centering
    \begin{tikzpicture}[scale=1.5,dot/.style={fill,circle,inner sep=-0.03cm}]
      \coordinate (a) at (-0.095,0,0) ;
      \coordinate (b) at (-0.2,1.6,0.5) ;
      \coordinate (c) at (0,1.6,-0.5) ;
      \coordinate (x) at (-0.05,1.6,-0.25) ;

      \coordinate (d) at (-0.28,0,0) ;
      \coordinate (e) at (0.8,2,0.8) ;
      \coordinate (f) at (-0.4,2,-0.4) ;
      \coordinate (y) at (-0.2,2,-0.2) ;

      \foreach \i in {a,b,c,d,e,f,x,y}{
        \node[dot] at (\i) {} ;
      }

      \draw[dashed,thin,red] (b) -- (a) -- (c) ;
      \draw[dashed,thin,blue] (e) -- (d) -- (f) ;
      \draw[dotted] (b) -- (c) ;
      \draw[dotted] (e) -- (f) ;
      \draw[thick,red] (a) -- (x) ;
      \draw[thick,blue] (d) -- (y) ;

      \node at (-0.3,-0.2,0) {$x_i$} ;
      \node at (-0.17,2.13,-0.2) {$x$};
      \node at (0,1.75,-0.2) {$y$} ;
      \node at (-0.08,-0.2,0) {$y_j$} ;

      \begin{scope}[xshift=3cm]
      \coordinate (x) at (-0.095,0,0) ;
      \coordinate (b) at (-0.2,0.3,0.5) ;
      \coordinate (c) at (0,-0.3,-0.5) ;
      \coordinate (a) at (-0.05,1.6,-0.25) ;

      \coordinate (d) at (-0.28,0,0) ;
      \coordinate (e) at (0.8,2,0.8) ;
      \coordinate (f) at (-0.4,2,-0.4) ;
      \coordinate (y) at (-0.2,2,-0.2) ;

      \foreach \i in {a,b,c,d,e,f,x,y}{
        \node[dot] at (\i) {} ;
      }

      \draw[dashed,thin,red] (b) -- (a) -- (c) ;
      \draw[dashed,thin,blue] (e) -- (d) -- (f) ;
      \draw[dotted] (b) -- (c) ;
      \draw[dotted] (e) -- (f) ;
      \draw[thick,red] (a) -- (x) ;
      \draw[thick,blue] (d) -- (y) ;

      \node at (-0.3,-0.2,0) {$x_i$} ;
      \node at (-0.17,2.13,-0.2) {$x$};
      \node at (0,1.75,-0.2) {$y_j$} ;
      \node at (-0.08,-0.2,0) {$y$} ;
      \end{scope}
    \end{tikzpicture}
    \caption{The two cases for the collinear vectors $\protect\overrightarrow{x_ix}$ and $\protect\overrightarrow{y_jy}$.}
    \label{fig:two-cases-collinear}
  \end{figure}
  Let us suppose that $\overrightarrow{x_ix}$ and $\overrightarrow{y_jy}$ have the same direction (Figure~\ref{fig:two-cases-collinear}, left).
  In the plane\footnote{or \emph{a} plane if it is not unique} containing $x_i$, $x$, $y_j$, and $y$, those four points are in convex position and the convex hull is cyclically ordered $x_i,x,y,y_j$.
  We obtain a contradiction by showing that the sum of the diagonals $d(x_i,y)+d(y_j,x)$ is strictly smaller than the sum of two opposite sides $d(x_i,x)+d(y_j,y)$.
  
  Considering edges and non-edges, for every $i$, the only vertices at distance greater than $2$ from $y_i$ in $\cup \{ x_j , y_j\}$ are $y_{i-1}$ and $y_{i+1}$. Then, we have $d(x_i,y_{j-1}) \leqslant 2 < d(y_j,y_{j-1})$ and $d(x_i,y_{j+1}) \leqslant 2 < d(y_j,y_{j+1})$.
  The points strictly closer to $x_i$ than to $y_j$ form an open half-space.
  In particular, they form a convex set and all the points in the segment $y_{j-1}y_{j+1}$ are therefore closer to $x_i$ than to $y_j$.
  Hence, $d(x_i,y) < d(y_j,y)$.
  Symmetrically, $d(y_j,x) < d(x_i,x)$.
  So $d(x_i,y)+d(y_j,x) < d(y_j,y)+d(x_i,x)$, a contradiction.
  
  Let us now assume that $\overrightarrow{x_ix}$ and $\overrightarrow{y_jy}$ have opposite direction (Figure~\ref{fig:two-cases-collinear}, right).
  In that case, the four coplanar points $x_i,x,y_j,y$ are in convex position in their plane and the convex hull is cyclically ordered $x_i,x,y_j,y$.
  We will attain the similar contradiction that $d(x,y)+d(x_i,y_j) < d(x_i,x)+d(y_j,y)$.
  As previously, $d(y_{j-1},x_{i-1}) \leqslant 2 < d(x_i,x_{i-1})$ and $d(y_{j-1},x_{i+1}) \leqslant 2 < d(x_i,x_{i+1})$.
  Hence, by convexity of the set of points closer to $y_{j-1}$ than to $x_i$, we have $d(y_{j-1},x) < d(x_i,x)$.
  Similarly, $d(y_{j+1},x) < d(x_i,x)$.
  We obtained that point $x$ is closer to $y_{j-1}$ and $y_{j+1}$ than to $x_i$.
  Therefore, applying again the convexity argument, we get that $d(x,y) < d(x,x_i)$.
  Besides, $d(x_i,y_j) \leqslant 2 < d(y,y_j)$.
  So we arrive at the contradiction $d(x,y)+d(x_i,y_j) < d(x,x_i)+d(y,y_j)$.
\end{proof}

\section{Direct algorithmic consequences}\label{sec:algorithms}

Now we show how to use the structural results from Section \ref{sec:structural} and Section~\ref{sec:unit-ball-obstruction} to obtain algorithms for \cli in disk and unit ball graphs.
A clique in a graph $G$ is an independent set in $\overline{G}$. 
So, leveraging the result from Theorem \ref{thm:main-structural-non-disk} and Theorem~\ref{thm:no-two-odd-cycles-ubg}, we will focus on solving \mis in graphs without two vertex-disjoint odd cycles as an induced subgraph.
 
\subsection{QPTAS}

Here, we present a simple argument to get a QPTAS for \cli in disk and unit ball graphs due to a known approximation algorithm for \mis on graphs with relatively small odd cycle packing number.
In Section~\ref{sec:eptas}, we will improve the QPTAS to a randomized EPTAS using a different approach.
We recall that the odd cycle packing number $\ocp(H)$ of a graph $H$ is the maximum number of vertex-disjoint odd cycles in $H$.
Unfortunately, the condition that $\overline{G}$ does not contain two vertex-disjoint odd cycles as an induced subgraph is not quite the same as saying that the odd cycle packing number of $\overline{G}$ is 1.
Otherwise, we would immediately get a PTAS by the following result of Bock et al.~\cite{Bock14}.
\begin{theorem}[Bock et al.~\cite{Bock14}]\label{thm:bock-ptas}
For every fixed $\varepsilon >0$ there is a polynomial $(1-\varepsilon)$-approximation algorithm for \mis for graphs $H$ with $n$ vertices and  $\ocp(H) = o(n / \log n)$.
\end{theorem}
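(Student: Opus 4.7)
The plan is to reduce \mis to the bipartite case by first removing a small odd cycle transversal (OCT). The key structural lemma I would aim to establish is that every graph $H$ on $n$ vertices admits an OCT of size $O(\ocp(H) \cdot \log n)$, constructible in polynomial time.

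Given this lemma, the algorithm is straightforward. Compute such an OCT $T$; under the hypothesis $\ocp(H) = o(n/\log n)$ this gives $|T| = o(n)$. Then $H - T$ is bipartite, so a maximum independent set $I$ of $H - T$ can be computed exactly in polynomial time via K\"onig's theorem (reducing \mis on bipartite graphs to minimum vertex cover and then to maximum matching). Return $I$.

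For the approximation analysis, if $I^*$ is a maximum independent set of $H$, then $I^* \setminus T$ is independent in $H - T$, so $|I| \geqslant \alpha(H - T) \geqslant \alpha(H) - |T|$. Moreover, since $H - T$ is bipartite, $\alpha(H) \geqslant \alpha(H - T) \geqslant (n - |T|)/2 = \Omega(n)$. Combining, $|I|/\alpha(H) \geqslant 1 - |T|/\alpha(H) = 1 - O(|T|/n) = 1 - o(1)$, which exceeds $1 - \varepsilon$ for every fixed $\varepsilon > 0$ once $n$ is large enough.

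The main obstacle is the structural lemma bounding OCT by $O(\ocp \cdot \log n)$. The natural LP relaxation of OCT has a variable $y_v$ per vertex and a constraint $\sum_{v \in C} y_v \geqslant 1$ for every odd cycle $C$; LP duality shows that its optimum equals the fractional odd cycle packing, hence is at least $\ocp(H)$. It remains to argue that the integrality gap is $O(\log n)$. I would attempt this through the bipartite double cover $\widetilde{H}$ of $H$ (two copies of $V(H)$ with edges $\{u_1,v_2\},\{u_2,v_1\}$ for each edge $\{u,v\}$ of $H$): odd cycles of $H$ are in bijection with paths joining antipodal copies of a vertex in $\widetilde{H}$, so finding an OCT in $H$ reduces to a vertex-multicut instance in $\widetilde{H}$, where the $O(\log n)$ gap follows from the region-growing arguments of Garg, Vazirani, and Yannakakis. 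Carrying out this reduction cleanly and transferring a fractional multicut back to an integral OCT in $H$ is the crux of the proof; as a fallback, one could invoke the Reed--Robertson--Seymour--Thomas odd-cycle Erd\H{o}s--P\'osa theorem, though its quantitative bound is weaker and may not suffice for the regime $\ocp(H) = o(n/\log n)$.
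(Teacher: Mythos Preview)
Your central structural lemma---that every graph has an odd cycle transversal of size $O(\ocp(H)\cdot \log n)$---is false. The standard counterexample is the \emph{Escher wall} (equivalently, a fine non-bipartite quadrangulation of the projective plane): such a graph on $\Theta(k^2)$ vertices has $\ocp=1$ because any two odd cycles are non-contractible and hence intersect, yet every odd cycle transversal has size $\Theta(k)=\Theta(\sqrt{n})$. This is precisely why the Erd\H{o}s--P\'osa property fails for odd cycles in general graphs; Reed's theorem only gives a half-integral packing, and the integral version holds only on surfaces of bounded genus. So both your main route and your fallback collapse.

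Your LP argument also has the inequality pointing the wrong way. You correctly observe that the LP optimum equals the fractional odd cycle packing and hence is \emph{at least} $\ocp(H)$. Even granting an $O(\log n)$ integrality gap (so integral OCT $\leqslant O(\log n)\cdot$ fractional-$\ocp$), you cannot conclude anything about integral OCT in terms of integral $\ocp$, because fractional $\ocp$ can be much larger than $\ocp$---and indeed it must be, in the Escher wall, for the $O(\log n)$ gap to survive.

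The argument of Bock et al.\ (which this paper only cites, and whose ingredients reappear in Theorem~\ref{thm:occ} and Theorem~\ref{thm:subexp}) does not try to bound OCT by $\ocp$ alone. Instead it trades off against odd girth: greedily pack vertex-disjoint odd cycles of length at most $L$ (there are at most $\ocp(H)$ of them, removing at most $L\cdot\ocp(H)$ vertices); the remaining graph has odd girth exceeding $L$, so by Gy\"ori et al.\ it has an OCT of size $O((n/L)\log(n/L))$. Optimising $L$ gives an OCT of size $O(\sqrt{\ocp(H)\cdot n\log n})$, which is $o(n)$ precisely when $\ocp(H)=o(n/\log n)$. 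From there your bipartite-rounding analysis is fine.
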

The algorithm by Bock et al. \cite{Bock14} works in polynomial time if $\ocp(H) = o(n / \log n)$, but it does not need the odd cycle packing explicitly given as an input. This is important, since finding a maximum odd cycle packing is NP-hard \cite{DBLP:conf/stoc/KawarabayashiR10}.
We start by proving a structural lemma, which spares us having to determine the odd cycle packing number.

\begin{lemma}\label{lem:bigdeg}
Let $H$ be a graph with $n$ vertices, whose complement is a disk graph (resp.~unit ball graph).
If $\ocp(H) > n/ \log^2 n$, then $H$ has a vertex of degree at least $n / \log ^4 n$.
\end{lemma}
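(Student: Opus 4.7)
The hypothesis that $\overline{H}$ is a disk graph (resp.~unit ball graph) combined with Theorem~\ref{thm:main-structural-non-disk} (resp.~Theorem~\ref{thm:no-two-odd-cycles-ubg}) gives $\iocp(H)\leqslant 1$: no induced subgraph of $H$ is the disjoint union of two odd cycles. The strategy is to exploit this scarcity of \emph{induced} odd cycle packings to force a huge number of edges between the components of a large (not necessarily induced) odd cycle packing, and then conclude by averaging the degrees.

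Suppose $\ocp(H)>n/\log^2 n$, and fix vertex-disjoint odd cycles $C_1,\ldots,C_k$ in $H$ with $k>n/\log^2 n$. The first step is to upgrade each $C_i$ to an \emph{induced} odd cycle on a subset of the same vertex set. For this, I would take $C_i'$ to be a shortest odd cycle in the induced subgraph $H[V(C_i)]$ (such a cycle exists since $C_i$ itself is odd). Any chord of $C_i'$ in $H$ would split $C_i'$ into a shorter odd cycle and a shorter even cycle on the same vertex set, contradicting minimality; hence $C_i'$ is induced in $H$. Since the $V(C_i)$ are pairwise disjoint, so are the $V(C_i')$, and we obtain $k$ vertex-disjoint induced odd cycles.

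Next, I invoke $\iocp(H)\leqslant 1$: for every pair $i\neq j$, the induced subgraph on $V(C_i')\cup V(C_j')$ cannot be the disjoint union of two odd cycles, so there must be at least one edge of $H$ with one endpoint in $V(C_i')$ and the other in $V(C_j')$. Summing over all pairs gives at least $\binom{k}{2}$ such inter-cycle edges in $H$, and therefore
\[
\sum_{v\in V(H)}\deg_H(v)\;\geqslant\;2\binom{k}{2}\;=\;k(k-1).
\]
By pigeonhole some vertex has degree at least $k(k-1)/n$, which for $k>n/\log^2 n$ (and $n$ large enough to absorb the lower-order term $-1/\log^2 n$) is at least $n/\log^4 n$.

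The only non-routine step is the first: turning a subgraph odd cycle packing into an induced one without losing vertex-disjointness. The shortest-odd-cycle trick applied separately inside each $H[V(C_i)]$ handles this cleanly, and the rest is a one-line counting argument.
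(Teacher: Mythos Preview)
Your proof is correct and follows the same underlying idea as the paper---use $\iocp(H)\leqslant 1$ to force edges between the cycles of a large odd cycle packing---but the counting step is organised differently. The paper singles out a \emph{shortest} cycle $C$ in the packing (necessarily of length at most $\log^2 n$, by averaging), observes that $C$ must send an edge to each of the other $>n/\log^2 n$ cycles, and then pigeonholes among the $\leqslant\log^2 n$ vertices of $C$ to locate a high-degree vertex. You instead upgrade every cycle to an induced one, count \emph{all} $\binom{k}{2}$ inter-cycle edges at once, and average over the whole vertex set. Your route avoids ever mentioning the length of any cycle, at the cost of a slightly looser constant (you get $n/\log^4 n - 1/\log^2 n$ rather than the paper's $n/\log^4 n$, which itself is already off by a lower-order term for the same reason). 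A small bonus of your write-up is that you make explicit the passage from arbitrary odd cycles to induced ones via the shortest-odd-cycle trick; the paper uses this implicitly when it asserts ``there must be an edge from $C$ to every other cycle of~$\mathcal P$''.
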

\begin{proof}
  Consider a maximum odd cycle packing $\mathcal{P}$.
  By assumption, it contains more than $n/\log^2 n$ vertex-disjoint cycles. 
  Hence, a shortest cycle $C$ in $\mathcal{P}$ has size at most $\log^2 n$.
  Now, by Theorem~\ref{thm:main-structural-non-disk} (resp.~by Theorem~\ref{thm:no-two-odd-cycles-ubg}), $H$ has no two vertex-disjoint odd cycles with no edges between them.
  Therefore there must be an edge from $C$ to every other cycle of $\mathcal{P}$, which constitutes at least $n / \log^2 n$ edges.
  Let $v$ be a vertex of $C$ with the maximum number of edges to other cycles in $\mathcal{P}$.
  By the pigeonhole principle, its degree is at least $n / \log^4 n$.
\end{proof}

We are ready to suggest a QPTAS for \cli in disk and unit ball graphs.

\begin{theorem}\label{thm:qptas}
For any $\varepsilon > 0$, \cli can be $(1-\varepsilon)$-approximated in time $2^{O(\log^5 n)}$, when the input is a disk graph (resp. unit ball graph) with $n$ vertices.
\end{theorem}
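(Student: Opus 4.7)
The plan is to combine Bock et al.'s PTAS (Theorem~\ref{thm:bock-ptas}) with a simple recursive branching on a high-degree vertex, using Lemma~\ref{lem:bigdeg} as a bridge. Since $G$ is a disk graph or a unit ball graph, Theorem~\ref{thm:main-structural-non-disk} or Theorem~\ref{thm:no-two-odd-cycles-ubg} gives $\iocp(\overline G) \leqslant 1$, so Lemma~\ref{lem:bigdeg} is available on every induced subgraph we will encounter. This gives the terminal case of the recursion: whenever the current graph $H$ (initially $\overline G$) has no vertex of degree at least $n/\log^4 n$, the contrapositive of Lemma~\ref{lem:bigdeg} gives $\ocp(H) \leqslant n/\log^2 n = o(n/\log n)$, and Bock's PTAS yields a $(1-\varepsilon)$-approximation to $\alpha(H)$ in polynomial time.

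Otherwise, the algorithm would pick a vertex $v$ of degree at least $n/\log^4 n$ and branch on whether $v$ belongs to the target independent set: recursively solve on $H - N_H[v]$ (prepending $v$ to the returned solution) and on $H - v$, and return the better of the two candidates. Correctness is immediate by induction on $n$, since $\alpha(H)$ equals either $1 + \alpha(H - N_H[v])$ or $\alpha(H - v)$, and the underlying hypothesis $\iocp \leqslant 1$ is preserved by taking induced subgraphs.

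The delicate point is the running-time analysis. The ``include'' branch shrinks the instance by a factor $(1 - 1/\log^4 n)$, so at most $O(\log^5 n)$ consecutive inclusions can occur before the graph becomes trivially small. The ``exclude'' branch only removes a single vertex, but the high-degree vertex produced in the proof of Lemma~\ref{lem:bigdeg} lies on a short odd cycle of length at most $\log^2 n$; deleting it destroys that cycle and monotonically drives $\ocp(H)$ towards the Bock regime $\ocp \leqslant n/\log^2 n$. I plan to combine these two shrinkages via a careful amortized analysis of the recurrence
\begin{equation*}
T(n) \leqslant T(n-1) + T\!\left(n\bigl(1 - 1/\log^4 n\bigr)\right) + n^{O(1)},
\end{equation*}
and conclude that the binary recursion tree has at most $2^{O(\log^5 n)}$ leaves, each doing polynomial work. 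Tightening this amortization to get exponent $\log^5 n$ rather than a larger polylog is the main obstacle: the naive branching bound only forbids an exponential blow-up once one leverages both the geometric shrinkage from inclusions and the unit drop of $\ocp$ from exclusions to cap the depth of the recursion at a polylogarithmic quantity.
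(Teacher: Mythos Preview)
Your overall strategy is identical to the paper's: branch on a vertex of large degree in $\overline G$, and at the leaves where no such vertex exists invoke Lemma~\ref{lem:bigdeg} and the PTAS of Theorem~\ref{thm:bock-ptas}. Your correctness argument is fine. The gap is entirely in the running-time analysis, where you manufacture an obstacle that is not there and propose a fix that does not work.

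The recurrence $T(n)\leqslant T(n-1)+T\bigl(n(1-1/\log^4 n)\bigr)$ already solves to a quasi-polynomial bound with no side argument about $\ocp$. One simply telescopes the first term: applying the recurrence to $T(n-1),T(n-2),\ldots$ down to $T(n-n/\log^4 n)$, each application spawns one extra subproblem of size $(n-k)\bigl(1-1/\log^4(n-k)\bigr)\leqslant n\bigl(1-1/\log^4 n\bigr)$, so
\[
T(n)\ \leqslant\ \Bigl(\tfrac{n}{\log^4 n}+1\Bigr)\,T\!\Bigl(n\bigl(1-\tfrac{1}{\log^4 n}\bigr)\Bigr).
\]
Iterating this $O(\log^5 n)$ times until the argument is $O(1)$ gives $T(n)=n^{\mathrm{polylog}\,n}$. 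This is exactly the paper's argument; no amortization beyond this telescoping is needed, and your statement that ``the naive branching bound only forbids an exponential blow-up'' is incorrect.

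Your proposed fix via $\ocp$ is also unsound. The vertex $v$ produced by Lemma~\ref{lem:bigdeg} lies on a short cycle of \emph{one particular} maximum odd cycle packing, but deleting $v$ need not decrease $\ocp(H)$: a different packing of the same cardinality avoiding $v$ may well exist. So the promised ``unit drop of $\ocp$ from exclusions'' is not guaranteed, and any amortization built on it would be invalid. Fortunately you do not need it; the plain recurrence already closes.
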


\begin{proof}
Let $G$ be the input disk graph (resp.~unit ball graph) and let $\overline{G}$ be its complement, we want to find a $(1-\varepsilon)$-approximation for \mis in $\overline{G}$. We consider two cases.
If $\overline{G}$ has no vertex of degree at least $n / \log ^4 n$, then, by Lemma \ref{lem:bigdeg}, we know that $\ocp(\overline{G}) \leqslant n / \log^2 n = o(n / \log n)$. In this case we run the PTAS of Theorem~\ref{thm:bock-ptas} and we are done.

In the other case, $\overline{G}$ has a vertex $v$ of degree at least $n / \log ^4 n$ (note that it may still be the case that $\ocp(\overline{G}) = o(n / \log n)$).
We branch on $v$: either we include $v$ in our solution and remove it and all its neighbors, or we discard $v$.
The complexity of this step is described by the recursion $F(n) \leqslant F(n-1) + F(n- n / \log^4 n) \leqslant F(n-2) + 2F(n- n / \log^4 n) \leqslant \cdots \leqslant (n / \log^4 n)F(n- n / \log^4 n)$.
Thus $F(n) \leqslant n^{O(\log^4 n)}=2^{O(\log^5 n)}$.
Note that this step is exact, i.e., we do not lose any solution.
\end{proof}

We can actually generalize the QPTAS from graphs with $\iocp \leqslant 1$ to graphs with $\iocp=O(1)$.
\begin{lemma}\label{lem:bigdegGen}
Let $H$ be a sufficiently large graph with $n$ vertices, and $\iocp(H)=i$ for some integer $i$.
If $\ocp(H) > n/ \log^2 n$, then $H$ has a vertex of degree at least $\frac{n}{2 i \log ^5 n}$.
\end{lemma}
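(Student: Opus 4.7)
The plan is to imitate Lemma~\ref{lem:bigdeg} but to replace the brute observation ``every other cycle of the packing has an edge to $C$'' (valid only when $\iocp(H)\leqslant 1$) by a greedy degree bound on an auxiliary meta-graph.

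First I would fix a maximum vertex-disjoint odd cycle packing $\mathcal{P}$ of $H$, which by hypothesis satisfies $|\mathcal{P}|=\ocp(H)>n/\log^2 n$. Since the cycles are vertex-disjoint their total length is at most $n$, so a standard averaging argument shows that at least $|\mathcal{P}|/2>n/(2\log^2 n)$ of them---call this subfamily $\mathcal{P}_s$---have length at most $2n/|\mathcal{P}|<2\log^2 n$.

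Next I would form the auxiliary \emph{meta-graph} $M$ whose vertex set is $\mathcal{P}_s$, with an edge between two cycles whenever $H$ contains at least one edge between them. An independent set in $M$ of size $k$ corresponds to $k$ pairwise vertex-disjoint odd cycles of $H$ with no edge between any two of them, so $\alpha(M)\leqslant \iocp(H)=i$.

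Then I would invoke the standard greedy degree bound: any graph $G$ with maximum degree at most $d$ satisfies $\alpha(G)\geqslant |V(G)|/(d+1)$. Applied to $M$ this yields a cycle $C\in\mathcal{P}_s$ whose meta-degree is at least $|\mathcal{P}_s|/i-1>n/(2i\log^2 n)-1$, producing at least that many edges from $C$ to other cycles of $\mathcal{P}_s$ in $H$. Since $|C|\leqslant 2\log^2 n$, pigeonholing these edges among the vertices of $C$ yields a vertex of degree at least
\begin{equation*}
\frac{n/(2i\log^2 n)-1}{2\log^2 n}\;\geqslant\;\frac{n}{4i\log^4 n}-\frac{1}{2\log^2 n},
\end{equation*}
which for $n$ sufficiently large (so that $\log n\geqslant 2$ and the additive error is absorbed) comfortably exceeds $n/(2i\log^5 n)$. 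No step is a genuine obstacle; the only substantive new ingredient over Lemma~\ref{lem:bigdeg} is the greedy bound on $M$, which is precisely where the general hypothesis $\iocp(H)\leqslant i$ replaces the $i=1$ argument. Observe that we did not need Theorem~\ref{thm:main-structural-non-disk} or Theorem~\ref{thm:no-two-odd-cycles-ubg}: the proof works purely from the assumption $\iocp(H)\leqslant i$ on an abstract graph.
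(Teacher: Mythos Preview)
Your proposal is correct and follows essentially the same approach as the paper: restrict to the short cycles of the packing, build the meta-graph on them, use $\alpha\leqslant i$ to force a high meta-degree vertex, then pigeonhole on that short cycle. The only cosmetic differences are that the paper thresholds ``short'' at $\log^3 n$ (versus your $2\log^2 n$, obtained by a Markov-type averaging) and invokes Tur\'an's theorem to get average degree $\geqslant (n_J-1)/i$ in the meta-graph, whereas you use the equivalent greedy bound $\Delta(M)\geqslant |M|/\alpha(M)-1$; your tighter cutoff actually yields a slightly stronger $\Theta(n/(i\log^4 n))$ along the way, which of course still implies the stated $n/(2i\log^5 n)$.
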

\begin{proof}
  Again, let $\mathcal P$ be an odd cycle packing ($|\mathcal P| \geqslant n/\log^2 n$).
  The number of cycles of $\mathcal P$ with length at least $\log^3 n$ is at most $n/\log^3n$.
  Thus there are at least $n/\log^2 n - n/\log^3 n > \frac{1}{2} n/\log^2 n$ cycles of $\mathcal P$ of length at most $\log^3 n$.
  We now consider the graph $J$ whose vertices are the cycles of $\mathcal P$ of length at most $\log^3 n$, and where there is an edge between $C_1$ and $C_2$ if there is at least one edge in $H$ between a vertex of $C_1$ and a vertex of $C_2$.
  Let $n_J := |V(J)| > \frac{1}{2} n/\log^2 n$.
  
  By assumption, $\alpha(J) \leqslant i$ (since $\iocp(H)=i$).
  Hence by Turan's theorem, $J$ has at most $(1-\frac{1}{i}){n_J \choose 2}$ non-edges.
  So $J$ has at least $\frac{1}{i}{n_J \choose 2}$ edges, and average degree at least $\frac{n_J-1}{i} \geqslant n/(2 i \log^2 n)$.
  Let $C_\ell$ be a vertex of maximum degree in $J$.
  By construction, the cycle $C_\ell$ is adjacent to at least $n/(2 i \log^2 n)$ other cycles of $\mathcal P$, and $|C_\ell| \leqslant \log^3 n$.
  Hence the vertex of $C_\ell$ with the largest degree in $H$ has at least $n/(2 i \log^5 n)$ neighbors.
\end{proof}

From the previous lemma, we get a QPTAS with slightly worse running time, similarly to Theorem~\ref{thm:qptas}.
\begin{theorem}\label{thm:qptasGen}
For any $\varepsilon > 0$ and for every integer $d$, \mis (resp. \cli) can be $(1-\varepsilon)$-approximated in time $2^{O(i \log^6 n)}$, when the input is a graph $G$ on $n$ vertices with $\iocp(G)=i$ (resp. $\iocp(\overline G)=i$).
\end{theorem}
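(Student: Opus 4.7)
The plan is to generalize the two-case argument of Theorem~\ref{thm:qptas} essentially verbatim, replacing the use of Lemma~\ref{lem:bigdeg} with the stronger Lemma~\ref{lem:bigdegGen}. As before, reducing \cli to \mis via complementation lets us focus on approximating $\alpha(H)$ for a graph $H$ with $\iocp(H) \leqslant i$; a $(1-\varepsilon)$-approximation for the independence number of $\overline{G}$ is a $(1-\varepsilon)$-approximation for the clique number of $G$.

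At each recursive step, given the current graph $H'$ on $n'$ vertices, we check whether some vertex has degree at least $n'/(2 i \log^5 n')$. If no such vertex exists, the contrapositive of Lemma~\ref{lem:bigdegGen} forces $\ocp(H') \leqslant n'/\log^2 n' = o(n'/\log n')$, so Theorem~\ref{thm:bock-ptas} yields a polynomial-time $(1-\varepsilon)$-approximation and we stop. Otherwise, we pick a high-degree vertex $v$ and branch: either include $v$ in the solution and recurse on $H' - N[v]$ (which removes at least $n'/(2 i \log^5 n')$ vertices), or discard $v$ and recurse on $H' - v$. Returning the better of the two recursive outputs is still a $(1-\varepsilon)$-approximation on $H'$, because the branching step itself is exact: any optimum independent set either contains $v$ or does not. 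The invariant $\iocp(\cdot) \leqslant i$ is hereditary under vertex deletion, so it is preserved along every recursive call, keeping Lemma~\ref{lem:bigdegGen} applicable throughout the branching tree.

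For the running time, the resulting recurrence $F(n) \leqslant F(n-1) + F(n - n/(2 i \log^5 n))$ unrolls, exactly as in Theorem~\ref{thm:qptas}, first to $F(n) \leqslant (2 i \log^5 n) \cdot F(n - n/(2 i \log^5 n))$ (by chaining ``discard'' branches until the next ``include'') and then, by iterating this contraction until the remaining size is constant, to $F(n) \leqslant n^{O(i \log^5 n)} = 2^{O(i \log^6 n)}$. The polynomial overhead of the call to Theorem~\ref{thm:bock-ptas} at each leaf of the branching tree is absorbed into this bound. The one bookkeeping point, which is the only real obstacle, is to ensure that the degree threshold is always taken relative to the current subgraph size $n'$ rather than to the original $n$: this guarantees both that Lemma~\ref{lem:bigdegGen} is applicable at every recursive call and that every ``include'' branch genuinely removes a constant fraction (of order $1/(i \log^5 n')$) of the current vertex set, which is exactly what the running-time analysis needs.
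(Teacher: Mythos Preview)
Your proposal is correct and is exactly the argument the paper has in mind: replace Lemma~\ref{lem:bigdeg} by Lemma~\ref{lem:bigdegGen} and rerun the branching of Theorem~\ref{thm:qptas} with the new degree threshold $n'/(2i\log^5 n')$, yielding the recurrence $F(n)\leqslant F(n-1)+F(n-n/(2i\log^5 n))$ and hence $F(n)\leqslant n^{O(i\log^5 n)}=2^{O(i\log^6 n)}$. Your explicit remarks that $\iocp$ is hereditary and that the threshold must be recomputed with the current $n'$ are the right bookkeeping points.
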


\subsection{Subexponential algorithm}

The \emph{odd girth} of a graph is the size of a shortest odd cycle.
An \emph{odd cycle cover} is a subset of vertices whose deletion makes the graph bipartite.
We will use a result by Györi et al. \cite{Gyori97}, which says that graphs with large odd girth have small odd cycle cover.
Bock et al. \cite{Bock14} turned the non-constructive proof into a polynomial-time algorithm.
\begin{theorem}[Györi et al. \cite{Gyori97}, Bock et al. \cite{Bock14}]\label{thm:occ}
Let $H$ be a graph with $n$ vertices and no odd cycle shorter than $\delta n$ ($\delta$ may be a function of $n$).
Then there is an odd cycle cover $X$ of size  at most $(48/\delta) \ln (5/\delta)$
Moreover, $X$ can be found in polynomial time.
\end{theorem}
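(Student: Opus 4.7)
My plan is to prove this via a BFS-based peeling argument, exploiting a local bipartiteness property forced by the odd girth hypothesis. Concretely, if $v \in H$ and $L_0 = \{v\}, L_1, L_2, \ldots$ are the BFS layers from $v$, then for every $i < (\delta n - 1)/2$ the layer $L_i$ has no internal edges: such an edge, combined with the two BFS-tree paths of length $i$ to $v$, would close an odd cycle of length $2i+1 < \delta n$, contradicting the hypothesis. Since BFS edges only connect consecutive layers, the ball $B_r(v) = L_0 \cup \cdots \cup L_r$ is bipartite for every $r < (\delta n - 1)/2$, with bipartition given by the parity of layer index.

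Second, I would turn this into an iterative peeling procedure. Start with $H$, pick any root $v$, and run BFS. If the BFS has depth less than $(\delta n - 1)/2$, the current component is already bipartite and contributes nothing to the cover. Otherwise, the sizes of the first $\lceil \delta n / 2 \rceil$ layers sum to at most $n$, so by pigeonhole some layer $L_i$ with $i < \delta n / 2$ satisfies $|L_i| \leqslant 2/\delta$. Add $L_i$ to the odd cycle cover; the inside $B_{i-1}(v)$ is bipartite by the first paragraph (and is disconnected from the rest after removing $L_i$), so we only need to recurse on the outside $H \setminus B_i(v)$. This outside inherits the hypothesis $\text{odd girth} \geqslant \delta n$, since odd girth only increases under vertex deletion.

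The main obstacle is controlling the total cover size: a naive accounting of $2/\delta$ per iteration with up to $n$ iterations yields a useless $O(n/\delta)$ bound instead of the desired $O((1/\delta)\ln(1/\delta))$. To reach the sharper bound, the choice of the layer $L_i$ must ensure that each iteration roughly halves the size of the outside pool. I would therefore refine the selection rule: among indices $i < \delta n / 2$ satisfying $|L_i| \leqslant 2/\delta$, pick one with $|B_i(v)| \geqslant n/2$, so that the outside has at most $n/2$ vertices. A combined pigeonhole over ball-growth and layer sizes shows such an $i$ exists; in the degenerate case where the BFS from $v$ concentrates all mass too close to $v$, one re-roots at a vertex at maximum distance from $v$ and repeats. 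With each iteration adding at most $2/\delta$ to the cover and at least halving the remaining pool, the recursion stops after $O(\log (n/\delta n)) = O(\log(1/\delta))$ levels — and indeed once the pool shrinks below $\delta n$ vertices it can contain no odd cycle at all, since every odd cycle has length $\geqslant \delta n$. Summing over levels gives the claimed $(48/\delta)\ln(5/\delta)$ up to constants.

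For the polynomial-time constructive strengthening of Bock et al., every primitive (BFS from a chosen root, pigeonhole search for a small layer, checking the balance condition, re-rooting on a deepest vertex) runs in polynomial time, and the recursion has depth $O(\log(1/\delta))$ with polynomially many subproblems, so the whole procedure is polynomial in $n$.
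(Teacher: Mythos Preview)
First, note that the paper does not itself prove this theorem; it is imported as a black box from the cited references \cite{Gyori97} and \cite{Bock14}. So there is no ``paper's own proof'' to compare against, and your proposal must be judged on its own merits.

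Your overall strategy---BFS peeling, using large odd girth to certify bipartiteness of small-radius balls, then deleting a thin layer and recursing on the outside---is sound and close in spirit to the actual argument. The opening observation (a layer $L_i$ with $i<(\delta n-1)/2$ is independent, hence $B_r(v)$ is bipartite for $r$ in this range) is correct and is indeed the key structural fact. You also correctly identify the obstacle: a naive bound of $2/\delta$ per iteration with no control on the number of iterations is useless.

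The genuine gap is in your halving step. You claim that among indices $i<\delta n/2$ with $|L_i|\leqslant 2/\delta$ one can always find one with $|B_i(v)|\geqslant n_t/2$, possibly after re-rooting at a farthest vertex. This fails already on a long odd cycle. Take $H=C_{4g+1}$ with the hypothesis ``odd girth $\geqslant g$'', so $\delta n=g$ and $n_t=4g+1$. From any vertex $v$ every layer has size at most $2$ (so the size condition is trivially met), but $|B_i(v)|=2i+1\leqslant g-1$ for every $i<g/2$, while $n_t/2>2g$; hence no admissible $i$ exists. Re-rooting at the antipode changes nothing by symmetry. Your diagnosis of the ``degenerate case'' is also inverted: concentration of mass near $v$ is the \emph{good} case (then $|B_i(v)|\geqslant n_t/2$ already for small $i$); the problematic case is when mass is spread thin, exactly as in a long cycle, and re-rooting cannot repair that.

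The correct accounting does not go through halving. One instead selects the thin layer in a window that guarantees only $|B_i(v)|=\Omega(\delta n)$ (each of the first $\delta n/2$ layers is nonempty since the component is non-bipartite), bounds the deleted layer by $O(n_t/(\delta n))$ rather than by the cruder $O(1/\delta)$, and then sums $\sum_t n_t$ over the iterations. It is precisely this summation---rather than a per-step constant times a logarithmic number of rounds---that produces the $\ln(5/\delta)$ factor. Indeed, had true halving been available you would get a cover of size $O(1/\delta)$ with no logarithm, which should have been a warning sign that the halving claim was too strong.
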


Let us show the following three options for an algorithm.
\begin{theorem} \label{thm:subexp}
  Let $G$ be a disk graph (resp. unit ball graph) with $n$ vertices. Let $\Delta$ be the maximum degree of $\overline{G}$ and $c$ the odd girth of $\overline{G}$ (they may be functions of $n$).
  \cli has a branching or can be solved, up to a polynomial factor, in time:\\
\begin{enumerate*}[label=(\roman*),itemjoin={\quad}]
\item $2^{\tilde{O}(n/\Delta)}$ (branching), \label{case:subexp-delta}
\item $2^{\tilde{O}(n/c)}$ (solved),		\label{case:subexp-oddgirth}
\item $2^{{O}(c  \Delta)}$ (solved). \label{case:subexp-both}
\end{enumerate*}
\end{theorem}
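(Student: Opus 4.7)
Set $H := \overline G$, so that \cli on $G$ is the same as \mis on $H$; the statement describes three independent procedures for \mis on $H$, each exploiting one parameter of $H$. Since the three cases are rather different in flavour, I treat them separately.

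For item (i), the plan is a textbook branching on a vertex of maximum degree. Pick $v \in V(H)$ with $\deg_H(v) = \Delta$, and branch on whether $v$ belongs to the sought independent set: the ``include'' branch deletes the $\Delta+1$ vertices of $N_H[v]$, while the ``exclude'' branch deletes only $v$. This yields the standard recurrence $T(n) \leqslant T(n-1) + T(n-\Delta-1)$, whose characteristic equation $\alpha^{\Delta+1} = \alpha^\Delta + 1$ has, for large $\Delta$, its positive root at $\alpha = 1 + \Theta(\log \Delta / \Delta)$. Hence the branching tree has $2^{\tilde O(n/\Delta)}$ leaves, as claimed.

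For item (ii), I would apply Theorem~\ref{thm:occ} to $H$ with $\delta := c/n$, obtaining in polynomial time an odd cycle cover $X \subseteq V(H)$ of size at most $(48n/c)\ln(5n/c) = \tilde O(n/c)$. Since $H - X$ is bipartite, I enumerate all $2^{|X|}$ subsets of $X$, viewing each subset $I_X$ as a guess for the intersection $I \cap X$ of an optimal independent set $I$ with $X$; for each $I_X$ that happens to be independent in $H$, I remove $N_H[I_X]$ and solve \mis on the remainder (an induced subgraph of $H - X$, hence bipartite) via König's theorem in polynomial time. Returning the best completion over all $I_X$ gives overall running time $2^{\tilde O(n/c)}$.

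Item (iii) is where the main structural results of the paper finally do the work, and is the only genuinely non-mechanical step. I first compute a shortest odd cycle $C$ of $H$ in polynomial time (for instance by a BFS from every vertex). The key claim is that $H - N_H[C]$ is bipartite: any odd cycle $C'$ in $H - N_H[C]$ would be vertex-disjoint from $C$ and, since every $H$-neighbour of $C$ has been deleted, would have no $H$-edge to $C$, contradicting $\iocp(H) \leqslant 1$ which holds by Theorem~\ref{thm:main-structural-non-disk} in the disk case and by Theorem~\ref{thm:no-two-odd-cycles-ubg} in the unit ball case. Since $|N_H[C]| \leqslant c(\Delta+1) = O(c\Delta)$, I then enumerate all $2^{O(c\Delta)}$ candidate intersections $I_0$ of the independent set with $N_H[C]$; for each independent $I_0$, I delete $N_H[I_0]$ and apply bipartite \mis to the remainder in polynomial time, yielding the claimed $2^{O(c\Delta)} \cdot n^{O(1)}$ bound.
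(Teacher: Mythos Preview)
Your approach mirrors the paper's almost exactly: branch on a max-degree vertex for (i), apply the Gy\H{o}ri et al.\ odd cycle cover (Theorem~\ref{thm:occ}) for (ii), and use $\iocp(H)\leqslant 1$ to get bipartiteness of $H-N_H[C]$ for (iii). The recurrence analysis in (i) and the structural argument in (iii) are both fine.

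There is, however, a genuine slip that recurs in both (ii) and (iii). After you guess $I_X := I\cap X$ and remove $N_H[I_X]$, you assert that the remainder is ``an induced subgraph of $H-X$, hence bipartite''. That is not true: vertices of $X\setminus N_H[I_X]$ can survive, and with them odd cycles of $H$ may survive as well, so you cannot invoke the bipartite \mis routine. The fix is the one the paper uses: in addition to deleting $N_H(I_X)$, delete \emph{all} of $X$ (the guessed vertices $I_X$ are already in your solution, and the non-guessed vertices $X\setminus I_X$ are by hypothesis not in $I$), then solve \mis on the bipartite graph $H-(X\cup N_H(I_X))$ and take the union with $I_X$. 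The same correction applies verbatim to (iii) with $N_H[C]$ in place of $X$. With this one-line repair your argument is complete and identical to the paper's.
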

\begin{proof}
Let $G$ be the input disk graph (resp. unit ball graph) and let $\overline{G}$ be its complement, we look for a maximum independent set in $\overline{G}$. 

To prove \ref{case:subexp-delta}, consider a vertex $v$ of degree $\Delta$ in $\overline{G}$. We branch on $v$: either we include $v$ in our solution and remove $N[v]$, or discard $v$. The complexity is described by the recursion $F(n) \leqslant F(n-1) + F(n- (\Delta+1))$ and solving it gives \ref{case:subexp-delta}.
Observe that this does not give an algorithm running in time $2^{\tilde{O}(n/\Delta)}$ since the maximum degree might drop.
Therefore, we will do this branching as long as it is \emph{good enough} and then finish with the algorithms corresponding to \ref{case:subexp-oddgirth} and \ref{case:subexp-both}.

For \ref{case:subexp-oddgirth} and \ref{case:subexp-both}, let $C$ be a cycle of length $c$, it can be found in polynomial time (see for instance \cite{AlonYZ97}). By application of Theorem \ref{thm:occ} with $\delta = c/n$, we find an odd cycle cover $X$ in $\overline{G}$ of size $\tilde{O}(n/c)$ in polynomial time. Next we exhaustively guess in time $2^{\tilde{O}(n/c)}$ the intersection $I$ of an optimum solution with $X$ and finish by finding a maximum independent set in the bipartite graph $\overline{G}-(X \cup N(I))$, which can be done in polynomial time. The total complexity of this case is $2^{\tilde{O}(n/c)}$, which shows \ref{case:subexp-oddgirth}.

Finally, observe that the graph $\overline{G} - N[C]$ is bipartite, since otherwise $\overline{G}$ contains two vertex-disjoint odd cycles with no edges between them.
Moreover, since every vertex in $\overline{G}$ has degree at most $\Delta$, it holds that $|N[C]| \leqslant c  (\Delta-1) \leqslant c  \Delta$.
Indeed, a vertex of $C$ can only have $c  (\Delta-2)$ neighbors outside $C$. 
We can proceed as in the previous step: we exhaustively guess the intersection of the optimal solution with $N[C]$ and finish by finding the maximum independent set in a bipartite graph (a subgraph of $\overline{G}-N[C]$), which can be done in total time $2^{O(c  \Delta)}$, which shows \ref{case:subexp-both}.
\end{proof}

Now we show how the structure of $G$ affect the bounds in Theorem \ref{thm:subexp}.

\begin{corollary}\label{cor:subexp}
Let $G$ be a disk graph (resp. unit ball graph) with $n$ vertices. \cli can be solved in time:
\begin{compactenum}[(a)]
\item $2^{\tilde{O}(n^{2/3})}$,
\item $2^{\tilde{O}(\sqrt{n})}$ if the maximum degree of $\overline{G}$ is constant,
\item polynomial, if both the maximum degree and the odd girth of $\overline{G}$ are constant.
\end{compactenum}
\end{corollary}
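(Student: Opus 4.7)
The plan is to deduce Corollary~\ref{cor:subexp} from the three running time options in Theorem~\ref{thm:subexp} by choosing the right threshold on the parameters $\Delta$ (maximum degree of $\overline{G}$) and $c$ (odd girth of $\overline{G}$), so that, for each of the three claims, the maximum of the applicable bounds matches the target.

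For part (a), the idea is to use option (i) of Theorem~\ref{thm:subexp} as a branching subroutine with a fixed threshold $n^{1/3}$: as long as the current complement contains a vertex of degree at least $n^{1/3}$, branch on it. Since every ``take'' branch removes at least $n^{1/3}+1$ vertices while a ``discard'' removes one, the recurrence $L(m)\leqslant L(m-1)+L(m-n^{1/3})$ yields $2^{\tilde{O}(n^{2/3})}$ leaves (the dominant root of $x^{n^{1/3}}=x^{n^{1/3}-1}+1$ is $1+\Theta(\log n/n^{1/3})$). At each leaf $G'$, the complement $\overline{G'}$ has maximum degree at most $n^{1/3}$. If its odd girth $c'$ satisfies $c'\leqslant n^{1/3}$, option (iii) finishes the leaf in time $2^{O(c'\cdot n^{1/3})}=2^{O(n^{2/3})}$; otherwise option (ii) finishes it in time $2^{\tilde{O}(|V(G')|/c')}\leqslant 2^{\tilde{O}(n/n^{1/3})}=2^{\tilde{O}(n^{2/3})}$. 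Multiplying the number of leaves by the time spent at each leaf yields the claimed total bound.

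Part (b) is a pure balance between options (ii) and (iii) with no branching: when $\Delta$ is a constant, option (iii) runs in time $2^{O(c)}$ and option (ii) in time $2^{\tilde{O}(n/c)}$, and these match when $c=\Theta(\sqrt{n})$. Picking the one with the smaller exponent in either regime gives $2^{\tilde{O}(\sqrt{n})}$. Part (c) is immediate from option (iii), which becomes $2^{O(1)}$ when both $c$ and $\Delta$ are constant.

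The only delicate point is the branching analysis in part (a). It is tempting to use a locally adaptive threshold (e.g.\ $\Delta\geqslant m^{1/3}$ with $m$ the current number of vertices), but this complicates the recurrence because the branching vector changes at every node. Fixing the threshold \emph{globally} at $n^{1/3}$ (in terms of the original input size) keeps the branching vector uniform throughout the tree, yielding a clean bound on the number of leaves; at the leaves one then argues uniformly that the residual instance has $\Delta'\leqslant n^{1/3}$, so options (ii) and (iii) balance at $c'=n^{1/3}$ and neither exceeds $2^{\tilde{O}(n^{2/3})}$.
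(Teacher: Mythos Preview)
Your proposal is correct and follows essentially the same approach as the paper: branch with the global threshold $n^{1/3}$ for part~(a) until the complement has small maximum degree, then balance options~(ii) and~(iii) on the odd girth at each leaf; balance (ii) and (iii) directly for part~(b); and read off part~(c) from option~(iii). Your discussion of why the threshold should be fixed globally (rather than adaptively) and your explicit recurrence analysis add useful detail that the paper leaves implicit.
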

\begin{proof}
We use the notation from Theorem \ref{thm:subexp}.
Both  $\Delta$ and $c$ can be computed in polynomial time (see e.g. \cite{AlonYZ97}).
  Therefore, knowing what is faster among cases \ref{case:subexp-delta}, \ref{case:subexp-oddgirth}, and \ref{case:subexp-both} is tractable.
  For case (a), while there is a vertex of degree at least $n^{1/3}$, we branch on it.
  When this process stops, we do what is more advantageous between cases \ref{case:subexp-oddgirth} and \ref{case:subexp-both}.
  Note that $\min (n/\Delta, n/c, c \Delta) \leqslant n^{2/3}$ (the equality is met for $\Delta = c = n^{1/3}$).
  For case (b), we do what is best between cases \ref{case:subexp-oddgirth} and \ref{case:subexp-both}.
  Note that $\min (n/c, c) \leqslant \sqrt{n}$ (the equality is met for $c = \sqrt{n})$.
Finally, case (c) follows directly from case \ref{case:subexp-both} in Theorem \ref{thm:subexp}.
\end{proof}
Observe that case (b) is typically the hardest one for \cli.
Moreover, the win-win strategy of Corollary \ref{cor:subexp} can be directly applied to solve \wcli, as finding a maximum weighted independent set in a bipartite graph is still polynomial-time solvable.
On the other hand, this approach cannot be easily adapted to obtain a subexponential algorithm for \textsc{Clique Partition} (even \textsc{Clique $p$-Partition} with constant $p$), since \textsc{List Coloring} (even \textsc{List $3$-Coloring}) has no subexponential algorithm for bipartite graphs, unless the ETH fails~(see \cite{precoloring}, the bound can be obtained if we start the reduction from a sparse instance of {\textsc{1-in-3-Sat} instead of {\textsc{Planar 1-in-3-Sat}).

\section{EPTAS for \mis on $\cl(d,\beta,i)$}\label{sec:eptas}

We start this section by showing that $\cl(d,\beta,1)$ has a randomized EPTAS.


\eptas*

\begin{proof}
Let $H$ be a graph in $\cl(d,\beta,1)$ with $n$ vertices and $I$ be a maximum independent set of $H$.
In particular, $|I| \geqslant \beta n$.
Since finding a maximum independent set in a bipartite graph can be done in polynomial time, we get the desired $(1-\varepsilon)$-approximation if we can find a subset of vertices $T$ such that: 
\begin{itemize}
\item $T$ is an \emph{odd-cycle transversal}, i.e., its removal yields a bipartite graph, and 
\item $|T \cap I|\leqslant \varepsilon |I|$.
\end{itemize}

At high level, our algorithm will thus select and remove some odd-cycle transversals $T$, and then apply the bipartite case algorithm.
We will do this at least once for a set $T$ that satisfies the second item, with some strong guarantee.
Of course the key ingredient in finding a suitable odd-cycle transversal is the fact that $\iocp(H) \leqslant 1$.
Indeed, this implies that for any odd cycle $C$, the set $N[C]$ is an odd-cycle transversal.

Let $c := 8(\frac{1}{(\beta \varepsilon)^2}+\frac{1}{\beta \varepsilon}+1) = O(1/\varepsilon^2)$, $\delta := \frac{\varepsilon}{c}=O(\varepsilon^3)$, and $s:=\frac{10 d}{\delta} \log \frac{1}{\delta}$.
We call \emph{short induced odd cycle} an induced odd cycle of length at most $c$, and \emph{long induced odd cycle} an induced odd cycle of length more than~$c$.
First we can assume that $\beta n$ is larger than $2s$, otherwise we can find an optimum solution by brute-force in time $2^n=2^{\tilde{O}(1/\varepsilon^3)}$.
Hence, $|I| > 2s$.

\begin{claim}\label{clm:vc-dim}
There exists a subset $S\subseteq I$ of size $s=\frac{10 d}{\delta} \log \frac{1}{\delta}$ such that $N(S)$ contains all vertices that have more than $\delta|I|$ neighbors in $I$. 
\end{claim}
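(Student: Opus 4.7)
The plan is to obtain $S$ by the probabilistic method, using the Haussler--Welzl $\varepsilon$-net theorem (mentioned earlier in the introduction), since both the setup and the target size $s = \frac{10d}{\delta}\log\frac{1}{\delta}$ are exactly of that shape.

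First, I would set up an auxiliary hypergraph $\mathcal H$ on ground set $I$, with hyperedge set $\{N_H(v)\cap I : v\in V(H)\}$. The condition in the claim, ``$v$ has more than $\delta|I|$ neighbors in $I$,'' is exactly ``the hyperedge $N_H(v)\cap I$ has cardinality $> \delta|I|$,'' while the conclusion ``$v\in N(S)$'' is exactly ``$S$ meets $N_H(v)\cap I$.'' So a subset $S\subseteq I$ meeting every hyperedge of size $>\delta|I|$ is precisely a $\delta$-net for $\mathcal H$. Next I would check that $\vcdim(\mathcal H)\leqslant d$: any subset $X\subseteq I$ shattered in $\mathcal H$ is also shattered by the neighborhood hypergraph $\bigl(V(H),\{N_H(v):v\in V(H)\}\bigr)$ of $H$ itself, because the traces on $X$ coincide with those of $\mathcal H$; so shattering in $\mathcal H$ forces $|X|\leqslant \vcdim(H)\leqslant d$.

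Then I would apply the Haussler--Welzl theorem to $\mathcal H$ with parameter $\delta$: a uniformly random subset $S\subseteq I$ of size $s = \frac{10d}{\delta}\log\frac{1}{\delta}$ is a $\delta$-net of $\mathcal H$ with positive probability (indeed, with probability at least $1/2$ for the standard form of the bound, since $\frac{10d}{\delta}\log\frac{1}{\delta}$ comfortably exceeds the usual threshold $\frac{8d}{\delta}\log\frac{8d}{\delta}$ once $\delta$ is small). Translating this back, with positive probability $S$ has the property that every $v\in V(H)$ with $|N_H(v)\cap I|>\delta|I|$ has at least one neighbor in $S$, i.e.\ $v\in N(S)$. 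Since such an $S$ exists with positive probability, it exists, which is the claim.

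The only mild obstacle is the constant in the size bound — one must confirm that $s = \frac{10d}{\delta}\log\frac{1}{\delta}$ is admissible in the exact form of the Haussler--Welzl inequality being used; this is a routine check and is the reason the claim states $s$ with a specific numerical constant. Note also that we do not need $|I|$ explicitly here, only that $I$ is nonempty and large enough that sampling $s$ elements is meaningful — both already guaranteed by the running assumption $|I|>2s$ made just before the claim.
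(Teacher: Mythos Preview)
Your proposal is correct and follows essentially the same approach as the paper: define an auxiliary hypergraph on ground set $I$ with hyperedges $N_H(v)\cap I$, observe its VC-dimension is at most $d$ (inherited from $H$), and invoke the Haussler--Welzl $\varepsilon$-net theorem to obtain a transversal of the large hyperedges of size $\frac{10d}{\delta}\log\frac{1}{\delta}$. You in fact spell out the VC-dimension inheritance step more carefully than the paper does, which simply asserts it by assumption on $H$.
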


\begin{proof}
  Let $A$ denote the set of vertices $v$ such that $|N(v) \cap I| \geqslant \delta |I|$.
  We define the hypergraph $K:=(I,\{N(v) \cap I, v \in A\})$.
  By assumption on $H$, the hypergraph $K$ has VC-dimension at most $d$.
  By the definition of $K$, all its edges have size at least $\delta |V(K)|$.
  A celebrated result in VC-dimension theory by Haussler and Welzl~\cite{HausslerW86}, later improved by Blumer \emph{et al.}~\cite{Blumer89}, ensures that every such a hypergraph $K$ admits a hitting set (a set of vertices that intersects every edge) of size at most $\frac{10 d}{\delta} \log \frac{1}{\delta}$.
\end{proof}

Algorithmically, we have two ways of selecting the set $S$, leading to a deterministic PTAS (see Theorem~\ref{thm:withoutBeta}) or a randomized EPTAS.
Either we run the rest of the algorithm for every subset of $V(H)$ of size $\frac{10 d}{\delta} \log \frac{1}{\delta}$ inducing an independent set (which constitutes $n^{f(\varepsilon)}$ possible sets), or we use another result proven in \cite{HausslerW86}: not only the hitting set exists but a uniform sample of $V(K)$ of size $\frac{10 d}{\delta} \log \frac{1}{\delta}$ is a hitting set with high probability.
So we do the following $t := \lceil \frac{\log(10^{-10})}{\log(1-(\beta/2)^s)} \rceil = 2^{\tilde{O}(1/\varepsilon^3)}$ many times: we select uniformly at random a set $S$ of size $s=\frac{10 d}{\delta} \log \frac{1}{\delta}$, and continue the rest of the algorithm if $S$ is an independent set.
Since $|I| > 2s$, it holds that $\Pr(S \subseteq I) > (\beta/2)^s$.
As we try out $t$ samples, at least one satisfies $S \subseteq I$ with probability at least $1-(1-(\beta/2)^s)^t \geqslant 1-10^{-10}$.\\
  
We now assume that the sample $S$ satisfies the properties of Claim~\ref{clm:vc-dim}.
We start by putting in $T$ all the vertices of $N(S)$ (note that no such vertex is in $I$ since $I$ is an independent set).
We define the graph $H':=H-N(S)$.
We got rid of the vertices with at least $\delta |I|$ neighbors in $I$: in $H'$, there are no such vertices anymore.
We want to find an odd-cycle transversal in $H'$ that has few vertices in $I$. 

We now run a polynomial-time algorithm (see for instance \cite{AlonYZ97}) that determines whether the graph is bipartite and, if not, outputs a shortest odd cycle $C_{\text{og}}=v_1v_2 \ldots v_g$ in $H'$.
\begin{itemize}
\item If $H'$ is bipartite, then $T:=N_H(S)$ is an odd-cycle transversal of $H$ with $|T \cap I|=0$.
\item If $g=|C_{\text{og}}| \leqslant c$, that is, if $C_{\text{og}}$ is a short induced odd cycle, then $|N_{H'}[C_{\text{og}}] \cap I| \leqslant c \delta |I| = \varepsilon|I|$, and therefore $T:=N_H(S)\cup N_{H'}[C_{\text{og}}]$ is an odd-cycle transversal of $H$ with $|T\cap I|\leqslant \varepsilon |I|$.
\end{itemize}

We can now safely assume that $g > c$, i.e., $C_{\text{og}}$ is a long induced odd cycle.
We decompose $H'$ into the successive neighborhoods of $C_{\text{og}}$, which we call \emph{layers}.
We define the first layer as $L_1 := N_{H'}(C_{\text{og}})$.
We define by induction the other layers as the non-empty sets $L_i := \{ v $ $|$ there exists $ u \in L_{i-1}$ with $uv \in E(H')$ and $v \notin L_j$ for $j<i \}$.
Let us denote by $\lambda$ the index of the last non-empty layer.
Before entering into the formal details of the second part of the proof let us briefly explain its structure:
\begin{itemize}
\item First, we observe that if there are many layers, there is one with index at most $\frac{2}{\beta \varepsilon}$ that contains at most $\frac{\varepsilon \beta}{2} n \leqslant \frac{\varepsilon}{2}|I|$ of the vertices. We can thus delete this layer, and note that connected components that do not contain $C_{\text{og}}$ are bipartite.
  We then focus on the component containing $C_{\text{og}}$, which has only few layers.
 \item Secondly, we show that this component admits an odd-cycle transversal of size at most $\frac{\varepsilon}{2} |I|$ (informally, the neighborhood at distance up to $O(\frac{1}{\varepsilon})$ of $O(\frac{1}{\varepsilon})$ consecutive vertices on the cycle $C_{\text{og}}$).
\end{itemize}

In other words, we can find $\varepsilon |I|$ vertices whose deletion yields a bipartite graph (see Figure~\ref{fig:large-successive-neighborhoods}), which together with $N(S)$ form the desired odd-cycle transversal.

\begin{figure}[!ht]
\centering
\begin{tikzpicture}[scale=0.5,vertex/.style={draw,circle,inner sep=-0.02cm}]
\def\s{41}
\def\l{6}
\def\ll{13}
\def\hd{0.2}
\def\vd{1}
\foreach \i in {1,...,\s}{
\node[vertex] (v\i) at (0,\i * \hd) {} ;
}
\foreach \i [count=\j from 1] in {2,...,\s}{
  \draw (v\i) -- (v\j) ;
}
\path (v1) edge[bend left=10] (v\s) ;
\foreach \j in {1,...,\ll}{
  \foreach \i in {1,...,\s}{
\draw (\vd * \j - \vd / 3, \i * \hd - \hd / 2) rectangle (\vd * \j + \vd / 3, \i * \hd + \hd / 2) ; 
}
}

\foreach \i in {8,...,16}{
  \node[vertex,fill=red] at (v\i) {} ; 
 \foreach \j in {1,...,\l}{
\draw[fill=red,opacity=0.5] (\vd * \j - \vd / 3, \i * \hd - \hd / 2) rectangle (\vd * \j + \vd / 3, \i * \hd + \hd / 2) ; 
}
}
\foreach \j in {7}{
  \foreach \i in {1,...,\s}{
\draw[fill=blue,opacity=0.5] (\vd * \j - \vd / 3, \i * \hd - \hd / 2) rectangle (\vd * \j + \vd / 3, \i * \hd + \hd / 2) ; 
}
}
\end{tikzpicture}
\caption{The layers (columns) and the strata (rows of a column).
  If the number of successive neighborhoods is large, a small cutset (in blue) is found among the first $\lceil \frac{2}{\beta \varepsilon} \rceil$ layers. To the right of this cutset, we know that the graph is bipartite. This brings us back to the case with fewer than $\frac{2}{\beta \varepsilon}$ layers, where we can find a small odd cycle transversal (in red).}
\label{fig:large-successive-neighborhoods}
\end{figure}
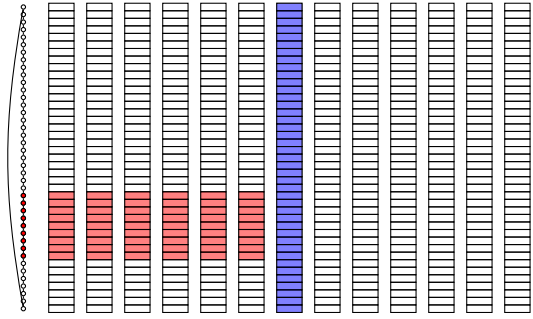

If $\lambda > \frac{2}{\beta \varepsilon}$, then there is some index $i \leqslant \lceil \frac{2}{\beta \varepsilon} \rceil$ such that $L_i$ is of size at most $\frac{\beta \varepsilon}{2} n \leqslant \frac{\varepsilon}{2} |I|$. 
We remove that layer $L_i$ from the graph.
Since $\iocp(H')=1$, the set $\bigcup_{i < j \leqslant \lambda}L_j$ induces a bipartite graph.
Indeed, it is disjoint from the closed neighborhood of the odd cycle $C_{\text{og}}$.
We can easily find a maximum independent set on this part of the graph, and focus on the other part, which is $C_{\text{og}} \cup \bigcup_{1 \leqslant j < i}L_j$.
We set $H'' := H'[C_{\text{og}} \cup \bigcup_{1 \leqslant j < i}L_j]$. If  $\lambda \leqslant \frac{2}{\beta \varepsilon}$, we set $H'':=H'$.

So the graph $H''$ has at most $\frac{2}{\beta \varepsilon}$ layers emanating from $C_{\text{og}}$.
We will find an odd-cycle transversal of size at most $\frac{\varepsilon}{2} |I|$.
We first need some new definitions.

For $1 \leqslant j \leqslant g$, let $S_j$ be the set of vertices $w \in V(H')$ such that there is a shortest path from $w$ to $C_{\text{og}}$ which ends in $v_j$, while no shortest path from $w$ to $C_{\text{og}}$ ends in $v_i$ with $i < j$ (note that $v_j \in S_j$).
We point out that the sets $(S_1,\ldots,S_g)$ induce a partition of each layer $L_k$.
This simply follows from the fact that for every vertex $w \in L_k$, there is a minimum index $j(w)$ such that there is a shortest path from $v$ to $C_{\text{og}}$ ending in $v_{j(w)}$.
For each pair $(k,\ell)$, we define a \emph{stratum} as $L_k^\ell := S_\ell \cap L_k$.
Note that if $L_k^\ell = \emptyset$, then for any $k'>k$, $L_{k'}^\ell = \emptyset$.

Let $z := \lceil \frac{4}{\beta \varepsilon} \rceil + 2$ and for any integer $\gamma$ such that $0 \leqslant \gamma \leqslant \frac{g}{z}-1$, let $S^\gamma := \underset{\gamma z+1 \leqslant j \leqslant (\gamma+1)z}{\bigcup}S_j$.
Informally, $S^\gamma$ consists of the layers emanating from $z$ consecutive vertices of $C_{\text{og}}$.
Note that if $\gamma \neq \gamma'$, then $S^{\gamma}$ and $S^{\gamma'}$ are disjoint.

\begin{claim}\label{clm:bip}
  For any non-negative integer $\gamma \leqslant \frac{2}{\beta \varepsilon}$, the graph $B := H''- S^\gamma$ is bipartite.
\end{claim}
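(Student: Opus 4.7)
The plan is to exhibit a proper $2$-coloring of $B$. Cyclically relabel $C_{\text{og}}$ so that the removed segment $S^\gamma$ is $\{v_1,\dots,v_z\}$ and the surviving part of $C_{\text{og}}$ in $B$ is the linear path $P=v_{z+1}v_{z+2}\cdots v_g$. Define $\chi\colon V(B)\to\{0,1\}$ by $\chi(v_j):=j\bmod 2$ for $v_j\in P$ and $\chi(u):=(k+j_u)\bmod 2$ for $u\in L_k\cap S_{j_u}\cap B$ with $k\geqslant 1$, where $j_u\in[z{+}1,g]$ is well-defined because $u\in B$ forces $j_u\notin[1,z]$. Since $g$ is odd, the cyclic edge $v_g v_1$ is killed by the deletion, so $\chi$ restricted to $P$ is already a proper $2$-coloring of that path.

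The core of the verification is a parity lemma derived from the shortest-odd-cycle property of $C_{\text{og}}$ in $H'$. For any edge $uv\in E(B)$ with $u\in L_{k_u}\cap S_{j_u}$ and $v\in L_{k_v}\cap S_{j_v}$ (so, by BFS, $|k_u-k_v|\leqslant 1$; assume $k_u\leqslant k_v$), I would concatenate shortest paths from $u$ to $v_{j_u}$, from $v_{j_v}$ to $v$, an arc of $C_{\text{og}}$ of length $d$ from $v_{j_u}$ to $v_{j_v}$, and the edge $uv$ itself, producing a closed walk in $H'$ of length $k_u+k_v+d+1$. Because $g$ is odd, exactly one of the two possible arcs makes this walk odd; shortness of $C_{\text{og}}$ forces that odd walk length to be $\geqslant g$, so the \emph{other} arc (of opposite parity) has length at most $2k_u+1$ when $k_u=k_v$ (an odd arc) or at most $2k_u+2$ when $k_v=k_u+1$ (an even arc).

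The second ingredient is geometric: any arc of $C_{\text{og}}$ traversing $S^\gamma$ has length at least $z+1$, since it must cross all $z$ removed vertices. With $z=\lceil 4/(\beta\varepsilon)\rceil+2$, the layer bound $\lambda_{H''}\leqslant\lceil 2/(\beta\varepsilon)\rceil$, and the sharper observation that $k_u\leqslant\lambda_{H''}-1$ when $k_v=k_u+1$, a direct comparison gives $z+1>2k_u+1$ and $z+1>2k_u+2$. Hence in both cases the ``short'' arc whose length was bounded above cannot lie in $S^\gamma$; it is the $P$-arc, and its length therefore equals $|j_u-j_v|$ since both endpoints lie on $P$.

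Plugging the parities back into $\chi$ finishes the verification: when $k_u=k_v$ the short arc is odd so $j_u\not\equiv j_v\pmod 2$, giving $\chi(u)=k_u+j_u\neq k_u+j_v=\chi(v)$; when $k_v=k_u+1$ the short arc is even so $j_u\equiv j_v\pmod 2$, which together with the $+1$ discrepancy between the two coloring rules yields $\chi(u)\neq\chi(v)$. The boundary cases of edges inside $P$ and between $P$ and $L_1$ fall into the same analysis with $k_u=0$. The main obstacle is precisely the tuning of $z$ against the layer depth, so that the short arc supplied by the shortest-odd-cycle argument is forced into $P$ and never into $S^\gamma$; once this is secured, the rest is parity bookkeeping.
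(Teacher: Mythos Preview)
Your proof is correct and follows essentially the same approach as the paper: the identical $2$-coloring by layer-plus-stratum parity, verified via the same odd-closed-walk argument against the minimality of $C_{\text{og}}$, together with the observation that the short arc cannot cross the deleted block because $z$ exceeds twice the layer depth. Your presentation is somewhat more streamlined---you handle all edges in a single unified parity computation, whereas the paper splits into separate cases (same stratum, far strata, close strata with $k=k'$ versus $|k-k'|=1$)---but the underlying mechanism is the same.
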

\begin{proof}
  Observe that $\frac{g}{z}-1 \geqslant \frac{2}{\beta \varepsilon}$, so each $S^\gamma$ of the claim is well-defined.
  It holds that $C_{\text{og}} \cap S^\gamma = \bigcup_{\gamma z+1 \leqslant j \leqslant (\gamma+1)z}\{v_j\}$.
  We exhibit a proper $2$-coloring of $B$ by coloring its vertices as follows.
  We start by coloring each vertex of the path $C_{\text{og}} \setminus S^\gamma$ in an alternated fashion, i.e., one endpoint of the path gets color 0, its neighbor gets color 1, the next vertex gets color 0, and so on. 

  For each pair $(k,\ell)$ such that $1 \leqslant k < i$, $1 \leqslant \ell \leqslant g$, and $\ell \notin [\gamma z + 1, (\gamma+1)z]$, we color all the vertices in the stratum $L_k^\ell = S_\ell \cap L_k$ with the opposite color of the one used for the stratum $L_{k-1}^\ell = S_\ell \cap L_{k-1}$ (with the convention that $L_0 := C_{\text{og}}$).
  This process colors unambiguously all the vertices of $B$.

  Let us prove that the resulting coloring is proper.
  First note that the vertices of a same stratum form an independent set.
  Indeed, assume by contradiction that $L_k^\ell$ contains an edge $uw$.
  There is a shortest path $P_1$ from $v_\ell$ to $u$ and a path $P_2$ from $w$ to $v_\ell$.
  Since $u$ and $w$ are in the same layer $L_k$, $P_1$ and $P_2$ have the same length; more precisely, $|P_1|=|P_2|=k < i \leqslant \frac{2}{\beta \varepsilon}$.
  Thus, $P_1, uw, P_2$ defines a closed walk with $2k+1$ edges.
  An odd closed walk of length $2k+1$ implies the existence of an odd induced cycle of length at most $2k+1$.
  As $2k+1 < \frac{4}{\beta \varepsilon}+1 < g$, we reach a contradiction on the minimality of $C_{\text{og}}$.
  
  There is no edge between a stratum $L_k^\ell$ and a stratum $L_{k'}^{\ell'}$ with $|k-k'| \geqslant 2$, by definition of the layers.
  Moreover, for $1 \leqslant \ell < \ell' \leqslant g$, there is no edge $uw$ with $u \in L_k^\ell$ and $w \in L_{k'}^{\ell'}$ with $\min(\ell'-\ell,\ell+g-\ell') \geqslant \frac{4}{\beta \varepsilon}+1$ since otherwise it would be possible to construct an odd cycle strictly shorter than $C_{\text{og}}$.
  Indeed, if $P_1$ is a shortest path between $v_\ell$ and $u$ and $P_2$ is a shortest path between $w$ and $v_{\ell'}$,
  then $P_1, uw, P_2$ is a walk of length $k+k'+1$.
  However, a shortest path between $v_\ell$ and $v_{\ell'}$ within $C_{\text{og}}$ has length $\min(\ell'-\ell,\ell+g-\ell') \geqslant \frac{4}{\beta \varepsilon}+1 > k+k'+1$.
  Hence, the walk $P_1, uw, P_2$ can be extended into an odd closed walk of length strictly smaller than $g$, by taking the path from $v_{\ell'}$ to $v_\ell$ in $C_{\text{og}}$ with the same parity as $k+k'$; a contradiction.

  Therefore, if there is a monochromatic edge $uw$ in $B$, it must be between $L_k^\ell$ and $L_{k'}^{\ell'}$ with $|k-k'| \in \{0,1\}$, $\min(\ell'-\ell,\ell+g-\ell') < \frac{4}{\beta \varepsilon}+1$, and $\{\ell,\ell'\} \cap [\gamma z + 1, (\gamma+1) z] = \emptyset$.
  We fix $k, k', \ell, \ell'$ satisfying those conditions.
  We call \emph{small interval of $\ell$ and $\ell'$}, denoted by $\text{si}(\ell,\ell')$, the integer interval $[\ell,\ell']$ if $\min(\ell'-\ell,\ell+g-\ell')=\ell'-\ell$ and $[\ell',g] \cup [1,\ell]$ if $\min(\ell'-\ell,\ell+g-\ell')=\ell+g-\ell'$.
  What we showed in the previous paragraph implies that if $\{\ell,\ell'\} \cap [\gamma z + 1, (\gamma+1) z] = \emptyset$, then $\text{si}(\ell,\ell') \cap [\gamma z + 1, (\gamma+1) z] = \emptyset$.
  Indeed, the small interval of $\ell$ and $\ell'$ is a circular interval over $[1,g]$ of length less than $\frac{4}{\beta \varepsilon}+1 < z$.
  In particular, the vertices of $C_{\text{og}}$ indexed by the small interval of $\ell$ and $\ell'$ are all in $B$.
  
  Assume first that $k=k'$.
  There is a path $P_1$ from $v_\ell$ to $u$, and a path $P_2$ from $w$ to $v_{\ell'}$, both of length $k$.
  Since by assumption the color for $L_k^\ell$ is the same as the color for $L_{k'}^{\ell'}$, the vertices $v_\ell$ and $v_\ell'$ have the same color (by construction of the 2-coloring).
  Thus we have, in $C_{\text{og}}-S^\gamma$, a path $P$ indexed by $\text{si}(\ell,\ell')$ from $v_{\ell'}$ to $v_\ell$ of even length less than $\frac{4}{\beta \varepsilon}+1$.
  We emphasize that it is crucial that $\text{si}(\ell,\ell') \cap [\gamma z + 1, (\gamma+1) z] = \emptyset$ (meaning that all the vertices of $C_{\text{og}}$ indexed by $\text{si}(\ell,\ell')$ are still in $B$), to deduce that there is a path of even length between $v_\ell$ and $v_{\ell'}$. It follows from the mere fact that $v_\ell$ and $v_\ell'$ have the same color.   
  Finally, the concatenation $P_1, uw, P_2, P$ yields an odd cycle of  length less than $2k+1+\frac{4}{\beta \varepsilon}+1 \leqslant \frac{8}{\beta \varepsilon}+2 < g$.

  Now let us assume that $|k-k'|=1$; say, without loss of generality, $k'=k+1$.
  In that case, by construction of the 2-coloring, the edge can only be monochromatic if $v_\ell$ and $v_{\ell'}$ receive distinct colors.
  Furthermore, there is a path $P_1$ from $v_\ell$ to $u$, and a path $P_2$ from $w$ to $v_{\ell'}$ with length of distinct parities ($k$ and $k+1$, respectively).
  Moreover, since $v_\ell$ and $v_{\ell'}$ get distinct colors, there is in $C_{\text{og}}-S^\gamma$, a path $P$ indexed by $\text{si}(\ell,\ell')$ from $v_{\ell'}$ to $v_\ell$ of odd length at most $\frac{4}{\beta \varepsilon}+1$.
  Again, we crucially use that all the vertices of $C_{\text{og}}$ indexed by $\text{si}(\ell,\ell')$ are in $B$, to deduce that $P$ is of odd length from the fact that $v_\ell$ and $v_{\ell'}$ get distinct colors. 
  Finally, the concatenation $P_1, uw, P_2, P$ is an odd cycle of length less than $k+1+k+1+\frac{4}{\beta \varepsilon}+1 \leqslant \frac{8}{\beta \varepsilon}+3 < g$; a contradiction.

We conclude that the $2$-coloring is indeed proper. 
\end{proof}

\begin{algorithm}
  \caption{EPTAS for \mis on $\cl(d,\beta,1)$}
    \label{alg:eptas}
  \begin{algorithmic}[1]
    \Require{$H$ satisfies $d := \vcdim(G)=O(1)$, $\alpha(G) \geqslant \beta |V(G)|$, and $\iocp(G) \leqslant 1$}
    \Function{Stable}{$H,\varepsilon$}:
    \Let{$c$}{$8(1/(\beta \varepsilon)^2+1/(\beta \varepsilon)+1)$}
    \Let{$\delta$}{$\frac{\varepsilon}{c}$}
    \Let{$s$}{$\frac{10d}{\delta} \log{\frac{1}{\delta}}$}
    \If{$\beta |V(H)| < 2s$}
       solve $H$ optimally by brute-force                          \Comment{$|V(H)| = \tilde{O}(1/\varepsilon^3)$}
    \EndIf
      \For{$\_ \gets 1 \textrm{ to } t=2^{\tilde O(1/\varepsilon^3)}$}{
        \Let{$S$}{\text{uniform sample of $V(G)$ of size $s$}}  \Comment{$S \subseteq I$ with probability $> (\frac{\beta}{2})^s$}
         \If{$G[S]$ contains an edge} break and go to the next iteration \EndIf  
         \Let{$H'$}{$H-N[S]$}                                    \Comment{remove $S$ and its neighborhood}  
         \Let{$C_{\text{og}}$}{shortest odd cycle in $H'$}       \Comment{in polynomial time \cite{AlonYZ97}}
         \Let{$g$}{$|C_{\text{og}}|$}                            \Comment{$C_{\text{og}}=v_1v_2 \ldots v_g$}
         \If{$g \leqslant c$}                  \Comment{short induced odd cycle}
         \Let{$S$}{$S~\cup~$ max stable on the bipartite $H'-N[C_{\text{og}}]$} \Comment{$\iocp(G)=1$}
         \Else{}                                              \Comment{long induced odd cycle}
           \Let{$L_\ell$}{vertices of $H'$ at distance exactly $\ell$ from $C_{\text{og}}$}   
           \Let{$L_i$}{smallest layer among $\{L_\ell\}_{1 \leqslant \ell \leqslant \lceil 2/{\beta \varepsilon} \rceil}$} \Comment{$|L_i| \leqslant \frac{\varepsilon}{2}\alpha(G)$}
           \Let{$H''$}{$H'[C_{\text{og}} \cup \bigcup_{1 \leqslant j < i}L_j]$}   
           \Let{$S_k$}{vertices of $H''$ whose closest vertex on $C_{\text{og}}$ of minimum index is $v_k$}   
           \Let{$z$}{$\lceil \frac{4}{\beta \varepsilon} \rceil + 2$}
           \Let{$S^\gamma$}{smallest set among $\{\bigcup_{k \in [\gamma z+1,(\gamma + 1)z]}S_k\}_{\gamma \in [0,\lfloor \frac{2}{\beta \varepsilon} \rfloor]}$}                \Comment{$|S^\gamma| \leqslant \frac{\varepsilon}{2}\alpha(G)$}
           \Let{$S$}{$S~\cup~$ max stable on the bipartite $H'[\bigcup_{j > i}L_j]$}    \Comment{$\iocp(G)=1$}
           \Let{$S$}{$S~\cup~$ max stable on the bipartite $H''-S^\gamma$}              \Comment{Claim~\ref{clm:bip}}
         \EndIf
      }
      \EndFor
      \State \Return{$S$ at the iteration maximizing its cardinality}
      \EndFunction
   \Ensure{output $S$ is a stable set of size at least $(1-\varepsilon)\alpha(G)$ with high probability}
  \end{algorithmic}
\end{algorithm}

Since the sets of $\{S^\gamma\}_{\gamma \in [0, \lfloor \frac{2}{\beta \varepsilon} \rfloor]}$ are pairwise disjoint, a smallest set of the collection satisfies $|S^\gamma| \leqslant \frac{\beta \varepsilon}{2}n \leqslant \frac{\varepsilon}{2} |I|$.
By Claim~\ref{clm:bip}, removing this $S^\gamma$ from $H''$ makes the graph bipartite.
We finally compute a maximum independent set in polynomial time in $H''- S^\gamma$.
We return the best solution found.
The pseudo-code is detailed in Algorithm~\ref{alg:eptas}.
\end{proof}

\disk*

\begin{proof}
  With a geometric representation, we can invoke the following argument to get a linear maximum stable set. 
  Recall that the piercing number of a family of geometric objects is the minimum number of points such that each object contains at least one of those points.
  The piercing number of a collection of pairwise intersecting disks in the plane is $4$ \cite{stacho,danzer,Carmi18}.
  The number of faces in an arrangement of $n$ circles (disk boundaries) is $O(n^2)$, and all the points within one face hit the same disks.
  In time $O(n^8)$, one can therefore exhaustively guess four points piercing a maximum clique $\mathcal C$.
  We can remove all the disks which are not hit by any of those four points, since they are not part of $\mathcal C$.
  This new instance $G$ can have its vertices partitioned into four cliques, hence $\alpha(\overline{G}) \geqslant |V(G)|/4$.

  Without a geometric representation, we suggest the following.
  Disk graphs always contain a vertex whose neighborhood has independence number at most $6$ (think of a vertex which has the smallest radius in one representation).
  For each vertex $v$, we run the robust PTAS of Chan and Har-Peled \cite{ChanH12} for \mis in $G[N(v)]$ with ratio strictly larger than $6/7$ (say $7/8$).
  By \emph{robust} we mean that their local-search based algorithm does not require a geometric representation.
  By the previous remark, at least one run has to report a value of at most $6$ (indeed, $7(7/8) > 6$).
  Let $u$ be a vertex corresponding to such a run.
  For every disk graph $G$, $\chi(G) \leqslant 6 \omega(G)$ (actually a better bound of $6\omega(G)-6$ is known \cite{MalesiskaPW98}).
  Then $\alpha(G) \omega(G) \geqslant \frac{\chi(G) \alpha(G)}{6} \geqslant \frac{|V(G)|}{6}$, hence $\omega(G) \geqslant  \frac{|V(G)|}{6 \alpha(G)}$.
  
  We branch on two outcomes.
  Either $u$ is in a maximum clique: we run the approximation of Theorem~\ref{thm:eptas} on $G_u := \overline{G[N[u]]}$ which satisfies $\omega(\overline{G_u})=\alpha(G_u) \geqslant |V(G_u)|/36$ (recall that we chose $u$ so that $\alpha(\overline{G_u}) \leqslant 6$).
  Or this vertex is not in any maximum clique: we delete it from the graph.
  Our branching tree has size $2n+1$, so it only costs an extra linear multiplicative factor.

  The VC-dimension of the neighborhoods of disk graphs, and even pseudo-disk graphs \cite{Aronov18}, is at most~$4$.
  Since the VC-dimension of a graph is equal to the one of its complement, the VC-dimension of $\overline G$ is also at most $4$.
  Finally, by Theorem~\ref{thm:main-structural-non-disk}, $\iocp(\overline{G}) \leqslant 1$.
  We only call the approximation algorithm (a polynomial number of times) with disk graphs $G'$ such that $\overline{G'} \in \cl(4,\frac{1}{36},1)$ (argument without the geometric representation) or $\overline{G'} \in \cl(4,\frac{1}{4},1)$ (argument with the geometric representation).
  Hence, we conclude by Theorem~\ref{thm:eptas}.
\end{proof}


It is folklore that unit ball graphs have geometric VC-dimension $4$.
One can observe that, in the case of \emph{unit} ball graphs, the geometric VC-dimension coincides with the VC-dimension of the neighborhoods.
At the price of a multiplicative factor $n$ in the running time, one can guess a vertex $v$ in a maximum clique of a unit ball graph $G$, and look for a clique in its neighborhood $H := G[N(v)]$.
As the kissing number for unit spheres is bounded (it is $12$), one can also show that the neighborhood of this vertex (in fact, of any vertex) can be partitioned into a constant number of cliques.
The exact number is irrelevant here: An easy volume-based argument can show that this number is no greater than 30. 
Thus $\alpha(\overline H) \geqslant |V(H)|/30$.
Therefore, from Theorem~\ref{thm:eptas} and Theorem~\ref{thm:no-two-odd-cycles-ubg}, we immediately obtain the following.

\ball*

We can extend the EPTAS to work for constant (not necessarily~$1$) induced odd cycle packing number.

\begin{theorem}\label{thm:eptas-bis}
  For any constants $d, i \in \mathbb N$, $0 < \beta \leqslant 1$, for every $\varepsilon > 0$, there is a randomized $(1-\varepsilon)$-approximation algorithm running in time $2^{\tilde{O}({1/\varepsilon}^3)}n^{O(1)}$ for \mis on graphs of $\cl(d,\beta,i)$ with $n$ vertices.
\end{theorem}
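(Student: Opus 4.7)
The plan is to proceed by induction on $i$, using Theorem~\ref{thm:eptas} for the base case $i=1$ and, for $i \geqslant 2$, running a version of Algorithm~\ref{alg:eptas} on $G \in \cl(d,\beta,i)$ in which every appeal to ``the leftover subgraph is bipartite because $\iocp \leqslant 1$'' is replaced by ``the leftover subgraph has $\iocp \leqslant i-1$, so solve it recursively by the inductive EPTAS''. The key structural fact I rely on is a \emph{peeling lemma}: for any induced odd cycle $C$ of a graph $H$, $\iocp(H - N_H[C]) \leqslant \iocp(H) - 1$, since any family of pairwise non-adjacent induced odd cycles in $H - N_H[C]$ is vertex-disjoint from $C$ and has no edge to $C$, so $C$ can be added to it without violating the $\iocp$ definition.

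Concretely, with per-level error budget $\varepsilon' := \varepsilon/(2i)$, I sample a set $S$ via the VC-net argument, set $H' := H - N(S)$, and compute a shortest odd cycle $C_{\mathrm{og}}$ in $H'$ exactly as in Algorithm~\ref{alg:eptas}. In the short-cycle case, by the peeling lemma, $H' - N_{H'}[C_{\mathrm{og}}]$ has $\iocp \leqslant i-1$ and is solved by the inductive EPTAS. In the long-cycle case, I carry out the same layer and stratum analysis, producing a small separating layer $L_{i_0}$ and a small slice $S^\gamma$; the near piece $H'' - S^\gamma$ remains bipartite (Claim~\ref{clm:bip} only uses the minimality of $C_{\mathrm{og}}$, not the assumption $\iocp \leqslant 1$), while the far piece $\bigcup_{j>i_0}L_j$ is disjoint from $N_{H'}[C_{\mathrm{og}}]$ and so has $\iocp \leqslant i-1$ by the peeling lemma, hence is solved recursively.

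The main obstacle I anticipate is maintaining a positive constant independence ratio for the recursive sub-instances, since the induction hypothesis applies only to graphs in $\cl(d,\beta',i-1)$ for some constant $\beta'>0$, and removing a transversal can in principle make $\alpha/|V|$ small. I handle this by fixing $\beta' := \varepsilon\beta/(2i)$ and invoking the inductive EPTAS unconditionally on each sub-instance $H_{\mathrm{sub}}$ with this $\beta'$: if $\alpha(H_{\mathrm{sub}}) \geqslant \beta'|V(H_{\mathrm{sub}})|$ then $H_{\mathrm{sub}} \in \cl(d,\beta',i-1)$ and the returned stable set is within a $(1-\varepsilon/(2i))$ factor of $\alpha(H_{\mathrm{sub}})$; otherwise $\alpha(H_{\mathrm{sub}}) < \beta' n \leqslant (\varepsilon/(2i))|I|$ and any answer the subroutine returns costs us at most $\varepsilon|I|/(2i)$. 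Summing over the at most $i$ nested levels, the transversal removals cost at most $\varepsilon|I|/2$ and the potential small-$\alpha$ discards another $\varepsilon|I|/2$, so the final output is a stable set of size at least $(1-\varepsilon)|I|$. Each level multiplies the running time by a factor of $2^{\tilde O(1/\varepsilon^3)}$ from the VC-sampling and the bipartite subroutines contribute only polynomial factors, so with $i=O(1)$ nested levels the total remains $2^{\tilde O(1/\varepsilon^3)}n^{O(1)}$.
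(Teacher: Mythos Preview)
Your proposal is correct and follows the same inductive route as the paper: run Algorithm~\ref{alg:eptas} on $H$ with $\iocp(H)=i$, and wherever the $\iocp\leqslant 1$ analysis produced a bipartite leftover (lines~15 and~23), observe instead that the leftover has $\iocp\leqslant i-1$ by your peeling lemma and hand it to the inductive EPTAS; line~24 remains genuinely bipartite since Claim~\ref{clm:bip} only uses that $C_{\mathrm{og}}$ is a \emph{shortest} odd cycle, not that $\iocp\leqslant 1$. The paper's own proof is a terse sketch of exactly this.

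Where you go beyond the paper is in handling the independence-ratio parameter for the recursive calls. The paper simply writes ``by the induction hypothesis'' without checking that the sub-instance lies in $\cl(d,\beta,i-1)$, and indeed in the long-cycle case the far piece $\bigcup_{j>i_0}L_j$ need not satisfy $\alpha\geqslant\beta|V|$. Your dichotomy---either $\alpha(H_{\mathrm{sub}})\geqslant\beta'|V(H_{\mathrm{sub}})|$ so the induction applies, or $\alpha(H_{\mathrm{sub}})<\beta' n\leqslant(\varepsilon/(2i))|I|$ so the piece is negligible---is a valid patch that the paper omits. One caveat on the running time: because your $\beta'$ scales with $\varepsilon$, the sample size $s$ in the recursive call grows (since $\delta=\Theta(\varepsilon'^3\beta'^2)$), and this compounds over the $i$ levels; the exponent becomes $\tilde O(1/\varepsilon^{c(i)})$ for some function of $i$ rather than literally $\tilde O(1/\varepsilon^3)$. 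This is still an EPTAS since $i$ is constant, and the paper's sketch has the same unacknowledged issue.
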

\begin{proof}
  Let $I$ be a maximum independent set.
  We show by induction on $i$ that for any $\varepsilon'$ we can find in time $2^{\tilde{O}({1/\varepsilon'}^3)}n^{O(1)}$ a stable set of size $(1-i\varepsilon')|I|$.
  The base case is Theorem~\ref{thm:eptas}.
  We assume that there is such an algorithm when the induced odd cycle packing number is $i-1$.
  We follow Algorithm~\ref{alg:eptas} with a graph $H$ such that $\iocp(H)=i$.
  On line 15 and 23, the graph is not necessarily bipartite anymore (on line 24, the resulting graph is still bipartite in this case).
  Although, the induced odd cycle packing number is decreased to $i-1$.
  So, by the induction hypothesis we get a stable set within a factor $(1-(i-1)\varepsilon')$ of the optimum.
  To get there, we removed a subset of vertices of size at most $\varepsilon'|I|$.
  Therefore, the solution $S$ that we obtain satisfies $|S| \geqslant (1-i\varepsilon')|I|$.

  We obtain the theorem by setting $\varepsilon := i \varepsilon'$ since $i$ is absorbed in the $\tilde{O}$ in the running time.
\end{proof}

To the detriment of the efficiency of the approximation scheme, when $\iocp(G) \leqslant 1$ (or even $\iocp(G)=O(1)$), we can discard one of the two other assumptions of Theorem~\ref{thm:eptas}.
Namely, we do not need bounded VC-dimension or that the optimum solution is a positive fraction of the number of vertices.

\begin{theorem}\label{thm:withoutVCdim}
  There is a randomized PTAS for \mis on graphs of $\cl(\infty,\beta,1)$.
\end{theorem}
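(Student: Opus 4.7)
The plan is to follow the algorithm and analysis of Theorem~\ref{thm:eptas} verbatim, except for the one place where the VC-dimension hypothesis is used, namely Claim~\ref{clm:vc-dim}. All the remaining ingredients (branching on the length of a shortest odd cycle in $H' = H - N[S]$, the layer/stratum decomposition around $C_{\text{og}}$, the choice of a small cutset $S^\gamma$ of size at most $\tfrac{\varepsilon}{2}|I|$, and finally the maximum independent set computations on the bipartite pieces) rely only on $\iocp(H)\leqslant 1$ and $\alpha(H)\geqslant \beta|V(H)|$, so they carry over unchanged.

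The step that needs a new argument is the existence of a small hitting set $S \subseteq I$ whose neighborhood covers every vertex with at least $\delta|I|$ neighbors in a fixed maximum independent set $I$, where $\delta = \varepsilon/c$ as before. Without a VC-dimension bound I would replace Claim~\ref{clm:vc-dim} by a direct union bound: set $k := \lceil 2\ln n / \delta\rceil$, and consider a uniform subset of $I$ of size~$k$. For each vertex $v$ with $|N(v)\cap I|\geqslant \delta|I|$, the probability that the sample avoids $N(v)\cap I$ is at most $(1-\delta)^k \leqslant 1/n^2$, so a union bound over the at-most-$n$ such vertices yields a valid hitting set with probability $1-o(1)$. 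Since we do not have direct access to $I$, I would instead draw a uniform subset $S$ of $V$ of size~$k$, keep it only when $H[S]$ is edgeless (a necessary condition for $S \subseteq I$), and then plug this $S$ into the algorithm exactly as in Theorem~\ref{thm:eptas}. As $|I|\geqslant \beta n$, the probability that a single sample lies entirely in $I$ is at least $\beta^k$, so repeating the sampling $\beta^{-k}\cdot \operatorname{poly}(n)$ times boosts the overall success probability to $1-o(1)$.

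Plugging $k = O(\log n / \varepsilon^3)$ into this repetition count gives $\beta^{-k} = n^{O(1/\varepsilon^3)}$ trials, each of which runs in polynomial time thanks to the unchanged parts of the proof (shortest odd cycle, maximum independent set in bipartite graphs, etc.). The total running time is therefore $n^{O(1/\varepsilon^3)}$, which is a randomized PTAS but not an EPTAS, and the approximation ratio $1-\varepsilon$ follows from exactly the same bookkeeping as in Theorem~\ref{thm:eptas} (namely $|N(S)\cap I|=0$ and the cutset $S^\gamma$ together with the short-cycle neighborhood cost at most $\varepsilon|I|$ vertices of~$I$).

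The main obstacle, and the reason the running time degrades from $f(\varepsilon)n^{O(1)}$ to $n^{f(\varepsilon)}$, is that without the Haussler--Welzl bound the required sample size must grow like $\log n$ rather than being a function of $\varepsilon$ alone; consequently the probability that a random sample lies inside~$I$ drops to $n^{-O(1/\varepsilon^3)}$, and one cannot derandomize by enumeration without incurring quasi-polynomial cost. Everything else in the EPTAS proof is robust: the odd-cycle transversal construction from Claim~\ref{clm:bip} uses only the length of the shortest odd cycle and the assumption $\iocp(H)\leqslant 1$, neither of which depends on the VC-dimension.
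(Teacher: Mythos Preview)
Your proof is correct and follows the same overall structure as the paper: identify the unique place where the VC-dimension bound is used (Claim~\ref{clm:vc-dim}), replace it by an argument yielding a hitting set of size $O((\log n)/\delta)$, and absorb the resulting $n^{\tilde O(1/\varepsilon^3)}$ cost into the number of sampling trials.

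The one difference is how you obtain the $O((\log n)/\delta)$ bound. The paper simply observes that $\vcdim(G)\leqslant \log n$ holds for every graph, plugs $d=\log n$ into the Haussler--Welzl $\varepsilon$-net bound, and leaves the rest of Algorithm~\ref{alg:eptas} literally unchanged. You instead bypass the $\varepsilon$-net machinery entirely and use an elementary union bound: a uniform sample of size $\lceil 2\ln n/\delta\rceil$ from $I$ hits every ``heavy'' neighborhood with probability $1-o(1)$. Your route is slightly more self-contained (no appeal to Haussler--Welzl) and even shaves a $\log(1/\delta)$ factor off the sample size; the paper's route has the virtue of being a one-line modification of the existing proof. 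Both yield the same $n^{\tilde O(1/\varepsilon^3)}$ running time and the same conclusion.
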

\begin{proof}
  We observe that $\vcdim(G)$ is always at most $\log |V(G)|=\log n$.
  So following Algorithm~\ref{alg:eptas}, we now sample a set $S$ of size $\frac{10 \log n}{\delta}\log{\frac{1}{\delta}}$.
  The probability for $S$ to be successfully contained in an optimum solution $I$ is at least $\frac{1}{n^{\Tilde{O}(1/\varepsilon^3)}}$.
  Thus by repeating this experience $n^{\Tilde{O}(1/\varepsilon^3)}$ times, $S \subseteq I$ holds in at least one branch, with high probability.
  The rest of the algorithm is unchanged.
\end{proof}

\begin{theorem}\label{thm:withoutBeta}
  There is a deterministic PTAS for \mis on graphs of $\cl(d,0,1)$.
\end{theorem}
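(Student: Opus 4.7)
The plan is to follow Algorithm~\ref{alg:eptas} with two modifications: derandomize the choice of $S$ via exhaustive enumeration, and replace the pigeonhole step that used $\beta$ by exhaustive enumeration of the layer and block indices. The resulting running time will be $n^{\tilde O(d/\varepsilon^3)}$, polynomial in $n$ for every fixed $d$ and $\varepsilon$, hence a deterministic PTAS.

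For the derandomization, I would enumerate all subsets $S \subseteq V(H)$ of size $s = \frac{10d}{\delta}\log\frac{1}{\delta}$ inducing independent sets; by Claim~\ref{clm:vc-dim} at least one of them is a hitting set for the relevant neighborhood hypergraph, so that $H' := H - N[S]$ has no vertex with more than $\delta|I|$ neighbors in a fixed maximum independent set $I$. (The degenerate case $|I| < s$ is handled directly by enumerating all subsets of size at most $s$ as candidate solutions, which is polynomial since $s$ depends only on $d$ and $\varepsilon$.) To eliminate $\beta$ from the layer analysis, observe that the original argument only used $\beta$ to bound the absolute size of a small layer (resp.\ $S^\gamma$) found by pigeonhole among the first $\lceil 2/(\beta\varepsilon)\rceil$ candidates. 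I would replace this by the pigeonholes $\sum_i |L_i \cap I| \leqslant |I|$ and $\sum_\gamma |S^\gamma \cap I| \leqslant |I|$, which guarantee that among any $\lceil 2/\varepsilon\rceil$ candidates some one has intersection at most $\varepsilon|I|/2$ with $I$. Since $I$ is unknown, I would enumerate both the layer index $i \in [1, \lceil 2/\varepsilon\rceil]$ and the block index $\gamma \in [0, \lfloor 2/\varepsilon\rfloor]$, returning the best solution over all $O(\varepsilon^{-2})$ branches. Correspondingly I would set $z = \Theta(1/\varepsilon)$ and $c = \Theta(1/\varepsilon^2)$, whence $\delta = \varepsilon/c = \Theta(\varepsilon^3)$.

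The main technical check is that Claim~\ref{clm:bip} still produces a bipartite residue after removing $S^\gamma$ under the rescaled bound $i \leqslant \lceil 2/\varepsilon\rceil$ (so $k, k' < 2/\varepsilon$ in the stratum indexing). In the original closed-walk arguments ruling out monochromatic edges, the walks built from any putative bad edge $uw$ now have length $O(1/\varepsilon)$ rather than $O(1/(\beta\varepsilon))$, but this is still strictly less than $g > c = \Theta(1/\varepsilon^2)$, contradicting the minimality of $C_{\text{og}}$. The insulation requirement $z > 4/\varepsilon + 1$ (so that the ``small interval'' between nearby indices $\ell, \ell'$ lies outside the removed block $[\gamma z + 1, (\gamma+1)z]$) is met by our choice $z = \Theta(1/\varepsilon)$, so the $2$-coloring constructed in the original proof remains proper. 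This is the step I expect to require the most care, since it is where the $\beta$-dependent constants were woven into the geometry of the cycle--layer interaction, and the substitutions must simultaneously preserve both the pigeonhole quality and the non-existence of short odd cycles.
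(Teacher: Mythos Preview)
Your proposal is correct and matches the paper's own proof essentially step for step: derandomize by exhaustively enumerating all size-$s$ independent sets (handling $|I|<s$ by brute force), and replace the $\beta$-based pigeonhole by enumerating all $O(1/\varepsilon^2)$ pairs of layer and block indices, using that one pair has intersection at most $\varepsilon|I|$ with the optimum. The only difference is expository: you spell out the rescaling of $c$, $z$ and the re-verification of Claim~\ref{clm:bip} under the new constants, whereas the paper simply asserts ``we conclude similarly'' after noting that the deleted pair contains fewer than $\varepsilon|I|$ vertices of $I$.
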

\begin{proof}
  Now, instead of sampling $s = \frac{10 d}{\delta}\log{\frac{1}{\delta}}$ vertices, we try all the $n^s=n^{f(\varepsilon)}$ subsets of size $s$.
  One of them falls entirely in an optimum solution $I$ and is a desired $\varepsilon$-net (note that if the largest independent set contains less than $s$ vertices, we can find it in time $O(n^s)$).
  We also change line 18 and 22 of Algorithm~\ref{alg:eptas}: Instead of deleting the lightest layer among the first $O(\frac{1}{\varepsilon})$, and the lightest block of strata among $O(\frac{1}{\varepsilon})$ disjoint blocks, we try all possible such pairs.
  This only adds a multiplicative factor in $O(1/\varepsilon^2)$ to the running time.
  One of the deleted pair contains less than $\varepsilon |I|$ vertices of $I$, and we conclude similarly.
\end{proof}

Both results generalize to $\cl(\infty,\beta,i)$ and $\cl(d,0,i)$, for every integer $i$, with the same arguments as in the proof of Theorem~\ref{thm:eptas-bis}.
  
\section{Other intersection graphs}\label{sec:gen&lim}

In this section, we discuss the impossibility of generalizing our algorithms to higher dimensions and related classes of intersection graphs.
We first show some hardness of approximation for \textsc{Maximum Independent Set} on the class of all the 2-subdivisions, hence the same lower bound for \textsc{Maximum Clique} on all the co-2-subdivisions.

It is folklore that from the PCP of Moshkovitz and Raz \cite{Moshkovitz10}, which roughly implies that \textsc{Max 3-SAT} cannot be $(7/8+\varepsilon)$-approximated in subexponential time under the ETH, one can derive such inapproximability in subexponential time for many hard graph and hypergraph problems; see for instance \cite{Bonnet15}.
The following inapproximability result for \mis on bounded-degree graphs was shown by Chleb\'ik and Chleb\'ikov\'a \cite{Chlebik06}.
As their reduction is almost linear, the PCP of Moshkovitz and Raz boosts this hardness result from ruling out polynomial-time up to ruling out subexponential time $2^{n^\gamma}$ for any $\gamma < 1$. 
\begin{theorem}[\cite{Chlebik06,Moshkovitz10}]\label{thm:inapprox-mis}
For any $\Delta \geqslant 3$, and $\gamma < 1$, there is a constant $\eta < 1$ such that \mis on graphs with $n$ vertices and maximum degree $\Delta$ cannot be $\eta$-approximated in time $2^{n^\gamma}$, unless the ETH fails. 
\end{theorem}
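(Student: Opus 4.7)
The plan is to combine two ingredients: (i) the nearly-linear-size PCP of Moshkovitz and Raz and (ii) the explicit bounded-degree approximation-preserving reduction of Chlebík and Chlebíková. The Moshkovitz--Raz PCP produces, from any 3-SAT instance on $n$ variables, a (2-CSP, or equivalently Max 3-SAT) instance of size $N = n^{1+o(1)}$ such that satisfiable inputs remain satisfiable, while unsatisfiable inputs admit no assignment meeting more than a $(7/8+\delta)$-fraction of the constraints, for any fixed $\delta > 0$. The Chlebík--Chlebíková reduction, for each fixed $\Delta \geqslant 3$, turns a Max 3-SAT instance (with bounded occurrences per variable, enforced by a linear-size preprocessing based on expander replacement) into a graph $H$ of maximum degree at most $\Delta$ on $O(N)$ vertices, and produces a constant approximation-gap $\eta_\Delta < 1$ between $\alpha(H)$ in the YES and NO cases.

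The first step of my proof would be to check that the expander-based occurrence bounding, the FGLSS-style clause gadgets, and the degree-reduction step in the Chlebík--Chlebíková reduction are all genuinely \emph{linear} in the input size (not merely polynomial). This is standard but must be verified carefully; it is the only place where a super-constant blow-up would spoil the argument. Once established, composing the two reductions gives a mapping from 3-SAT on $n$ variables to an instance of \mis on a graph $H$ of maximum degree $\Delta$ with $|V(H)| = n^{1+o(1)}$ vertices, and a constant $\eta=\eta_\Delta<1$ such that satisfiability translates to $\alpha(H) \geqslant a|V(H)|$ while unsatisfiability forces $\alpha(H) \leqslant \eta a|V(H)|$ for some constant $a > 0$.

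Now suppose, for contradiction, that for some $\gamma<1$ and every $\eta'<1$ there were an $\eta'$-approximation algorithm for \mis on graphs of maximum degree $\Delta$ running in time $2^{N^\gamma}$ on $N$-vertex instances. Applying it with $\eta' := \eta$ to the constructed graph $H$ lets one distinguish the two cases, and thus decide 3-SAT, in time $2^{(n^{1+o(1)})^\gamma} = 2^{n^{\gamma + o(1)}}$. Since $\gamma < 1$, for all sufficiently large $n$ we have $\gamma + o(1) < 1$, so this running time is $2^{o(n)}$. Together with the sparsification lemma form of ETH, which forbids $2^{o(n+m)}$-time algorithms for 3-SAT, this yields the desired contradiction and proves the theorem.

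The main obstacle is purely bookkeeping around step one: verifying that each gadgeteering layer of Chlebík--Chlebíková contributes only a constant multiplicative blow-up (so that the total reduction size stays $n^{1+o(1)}$) and that the soundness-to-independence-number gap survives all gadget compositions as a single constant $\eta_\Delta$ depending only on $\Delta$. After that, the subexponential composition is essentially automatic, and the quantification over arbitrary $\gamma < 1$ follows immediately from the slack $n^{o(1)}$ in the Moshkovitz--Raz blow-up.
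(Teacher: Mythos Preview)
Your proposal is correct and matches the paper's own treatment exactly. The paper does not give a detailed proof of this theorem; it merely cites \cite{Chlebik06,Moshkovitz10} and notes in one sentence that ``as their reduction is almost linear, the PCP of Moshkovitz and Raz boosts this hardness result from ruling out polynomial-time up to ruling out subexponential time $2^{n^\gamma}$ for any $\gamma < 1$'' --- which is precisely the composition you spell out, with the same key observation (linearity of the Chleb\'ik--Chleb\'ikov\'a gadgets) flagged as the only thing to check.
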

We could actually state a slightly stronger statement for the running time but will settle for this for the sake of clarity.

\begin{theorem}\label{thm:hardness-2subd}
  For any $\gamma < 1$, there is a constant $\zeta < 1$ such that \textsc{Maximum Independent Set} on the class of all the 2-subdivisions has no $\zeta$-approximation algorithm running in subexponential time $2^{n^\gamma}$, unless the ETH fails.
\end{theorem}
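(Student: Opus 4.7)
The plan is to reduce from the bounded-degree \mis lower bound (Theorem~\ref{thm:inapprox-mis}) via the classical 2-subdivision gadget, losing only constant factors in both the size and the approximation ratio. Given $\gamma<1$, I fix $\Delta=3$ and invoke Theorem~\ref{thm:inapprox-mis} with some $\gamma'<1$ slightly larger than $\gamma$ (say $\gamma'=(1+\gamma)/2$) to obtain a constant $\eta<1$ such that no $2^{n^{\gamma'}}$-time algorithm $\eta$-approximates \mis on graphs of maximum degree $3$ unless the ETH fails.

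Given an instance $G$ with $n$ vertices and $m\le 3n/2$ edges of that hard problem, let $H$ be its 2-subdivision, so $|V(H)|=N=n+2m=\Theta(n)$. The key identity is
\[
\alpha(H)=\alpha(G)+m,
\]
proved in the standard way: from an independent set $I$ in $G$, for each subdivision path $u\text{-}x\text{-}y\text{-}v$ we put $y$ in the solution if $u\in I$, put $x$ if $v\in I$, and put either of them if neither endpoint of $uv$ is in $I$; this gives an independent set in $H$ of size $|I|+m$. Conversely, any independent set $J$ in $H$ can be transformed, without decreasing its size, into one that uses exactly one of $\{x,y\}$ on every subdivision path; the vertices of $J$ lying in $V(G)$ then form an independent set in $G$ of size $|J|-m$.

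Now assume for contradiction a $\zeta$-approximation algorithm for \mis on 2-subdivisions running in time $2^{N^{\gamma}}$. Running it on $H$ and extracting, via the construction above, an independent set of $G$ yields a solution of size at least
\[
\zeta\,\alpha(H)-m=\zeta\,\alpha(G)-(1-\zeta)\,m.
\]
Since $G$ has maximum degree $3$, $\alpha(G)\ge n/4$ and $m\le 3n/2\le 6\,\alpha(G)$, so this is at least $\bigl(\zeta-6(1-\zeta)\bigr)\alpha(G)$. Choosing $\zeta$ close enough to $1$ (namely $\zeta>(6+\eta)/7$) makes this ratio strictly greater than $\eta$. Because $N=\Theta(n)$, the running time $2^{N^\gamma}$ is bounded by $2^{cn^\gamma}\le 2^{n^{\gamma'}}$ for $n$ large, contradicting the hardness guaranteed by Theorem~\ref{thm:inapprox-mis}. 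Setting this $\zeta$ as the desired constant completes the proof.

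The only delicate point is the quantitative balancing of $\zeta$ against $\eta$: one must pick $\zeta<1$ only after $\eta$ (and hence $\Delta$, which we fixed to $3$) is known, since the loss factor in the reduction is $m/\alpha(G)$. With $\Delta=3$ this factor is bounded by the absolute constant $6$, so a single $\zeta<1$ suffices. Running-time-wise, absorbing the constant blow-up $N=\Theta(n)$ into the exponent is what forces the mild slackening from $\gamma$ to $\gamma'$ when applying Theorem~\ref{thm:inapprox-mis}, which is harmless as that theorem holds for every $\gamma'<1$.
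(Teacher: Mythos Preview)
Your proof is correct and follows essentially the same approach as the paper: reduce from bounded-degree \mis via the 2-subdivision, use the identity $\alpha(H)=\alpha(G)+m$, and balance the additive loss $m$ against $\alpha(G)$ via the degree bound. The only cosmetic differences are that you fix $\Delta=3$ and bound $m\le 6\alpha(G)$ directly (yielding $\zeta>(6+\eta)/7$), whereas the paper keeps a general $\Delta$ and routes the bound through $s''\ge n/(\Delta+1)$; your version is arguably cleaner, and your explicit handling of the $\gamma$ versus $\gamma'$ slack is a detail the paper leaves implicit.
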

\begin{proof}
Let $G$ be a graph with maximum degree a constant $\Delta \geqslant 3$, with $n$ vertices $v_1, \ldots, v_n$ and $m$ edges $e_1, \ldots, e_m$, and let $H$ be its 2-subdivision.
Recall that to form $H$, we subdivided every edge of $G$ exactly twice.
These $2m$ vertices in $V(H) \setminus V(G)$, representing edges, are called \emph{edge vertices} and are denoted by $v^+(e_1), v^-(e_1), \ldots, v^+(e_m), v^-(e_m)$, as opposed to the other vertices of $H$, which we call \emph{original vertices}.
If $e_k=v_iv_j$ is an edge of $G$, then $v^+(e_k)$ (resp. $v^-(e_k)$) has two neighbors: $v^-(e_k)$ and $v_i$ (resp. $v^+(e_k)$ and $v_j$). 

Observe that there is a maximum independent set $S$ which contains exactly one of $v^+(e_k), v^-(e_k)$ for every $k \in [m]$. Indeed, $S$ cannot contain both $v^+(e_k)$ and $v^-(e_k)$ since they are adjacent. On the other hand, if $S$ contains neither $v^+(e_k)$ nor $v^-(e_k)$, then adding $v^+(e_k)$ to $S$ and potentially removing the other neighbor of $v^+(e_k)$ which is $v_i$ (with $e_k=v_iv_j$) can only increase the size of the independent set.
Hence $S$ contains $m$ edge vertices and $s \leqslant n$ original vertices, and there is no larger independent set in $H$.

We observe that the $s$ original vertices in $S$ form an independent set in $G$.
Indeed, if  $v_iv_j=e_k \in E(G)$ and $v_i,v_j \in S$, then neither $v^+(e_k)$ nor $v^-(e_k)$ could be in $S$.
  
Now, assume there is an approximation with ratio $\zeta := (1+\frac{2(1-\eta)}{\eta(\Delta+1)^2})^{-1}$ for \textsc{Maximum Independent Set} on 2-subdivisions running in subexponential time, where $\eta < 1$ is a ratio which is not attainable for \textsc{Maximum Independent Set} on graphs of maximum degree $\Delta$ according to Theorem~\ref{thm:inapprox-mis}.
On instance $H$, this algorithm would output a solution with $m'$ edge vertices and $s'$ original vertices.
As we already observed this solution can be easily (in polynomial time) transformed into an at-least-as-good solution with $m$ edge vertices and $s''$ original vertices forming an independent set in $G$. Further, we may assume that  $s'' \geqslant n / (\Delta+1)$ since for any independent set of $G$, we can obtain an independent set of $H$ consisting of the same set of original vertices and $m$ edge vertices. 
Since $m \leqslant n \Delta / 2$ and $s'' \geqslant n / (\Delta+1)$, we obtain $m \leqslant s'' \Delta(\Delta+1)/2$ and $2m/(\Delta+1)^2 \leqslant s''\Delta /(\Delta+1)$.
From $\frac{m+s''}{m+s} \geqslant \zeta$ and $\Delta \geqslant 3$, we have 
\[ s	\leqslant m\cdot \frac{2(1-\eta)}{\eta(\Delta+1)^2} + s''\cdot (1+ \frac{2(1-\eta)}{\eta(\Delta+1)^2})
	\leqslant s'' (\frac{(1-\eta)\Delta}{\eta(\Delta+1)} + 1 + \frac{2(1-\eta)}{\eta(\Delta+1)^2} )
	\leqslant s''(1 + \frac{1-\eta}{\eta})  = \frac{s''}{\eta}
\]
  This contradicts the inapproximability of Theorem~\ref{thm:inapprox-mis}.
Indeed, note that the number of vertices of $H$ is only a constant times the number of vertices of $G$ (recall that $G$ has bounded maximum degree, hence $m=O(n)$).
\end{proof}

We get the following as a direct corollary.
\begin{corollary}\label{cor:hardness-co2subd}
  For every $\gamma < 1$, there is a constant $\zeta < 1$, \cli on the class of all the co-2-subdivisions has no $\zeta$-approximation algorithm running in subexponential time $2^{n^\gamma}$, unless the ETH fails.
\end{corollary}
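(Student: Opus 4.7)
The plan is to derive the corollary from Theorem~\ref{thm:hardness-2subd} via the elementary complementation correspondence between cliques and independent sets. The class of co-2-subdivisions is by definition the image of the class of 2-subdivisions under graph complementation, so any graph $\overline H$ in this class is the complement of a 2-subdivision $H$, and $\overline H$ has the same vertex set (hence the same number of vertices) as $H$. Moreover, the size of a maximum clique in $\overline H$ equals the size of a maximum independent set in $H$, since a set of vertices is pairwise adjacent in $\overline H$ if and only if it is pairwise non-adjacent in $H$.

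Concretely, I would assume for contradiction that there is a constant $\zeta < 1$ and an algorithm $\mathcal A$ that $\zeta$-approximates \cli on every co-2-subdivision in time $2^{n^\gamma}$, with $\zeta$ the constant furnished by Theorem~\ref{thm:hardness-2subd} for the given $\gamma < 1$. Given an arbitrary 2-subdivision $H$ on $n$ vertices (an instance of \mis), I compute its complement $\overline H$ in polynomial time, note that $\overline H$ is a co-2-subdivision on the same $n$ vertices, and invoke $\mathcal A$ on $\overline H$. The output is a clique $K$ in $\overline H$ with $|K| \geqslant \zeta \cdot \omega(\overline H) = \zeta \cdot \alpha(H)$, and $K$ is also an independent set of $H$. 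Thus $K$ is a $\zeta$-approximation for \mis on $H$, computed in time $2^{n^\gamma} + n^{O(1)} = 2^{O(n^\gamma)}$, which contradicts Theorem~\ref{thm:hardness-2subd}.

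There is essentially no obstacle: the only things to check are that (i) complementation preserves the vertex count so the subexponential bound in $n$ transfers verbatim, and (ii) the reduction is polynomial-time so it does not interfere with a subexponential running-time guarantee. Both are immediate. If one wanted to be pedantic about the precise constant $\zeta$, one would simply reuse the same $\zeta < 1$ produced by Theorem~\ref{thm:hardness-2subd} for the given $\gamma$; no loss in the ratio occurs because the reduction preserves the optimum value exactly.
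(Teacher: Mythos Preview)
Your proposal is correct and is precisely the direct complementation argument the paper intends; the paper itself does not spell out a proof, stating only that the result follows as a direct corollary of Theorem~\ref{thm:hardness-2subd}. Your verification that the vertex count is preserved and that the polynomial-time complementation does not spoil the subexponential bound is exactly what is needed.
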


For exact algorithms, the subexponential time that we rule out under the ETH is not only $2^{n^\gamma}$ (with $\gamma < 1$) but actually any $2^{o(n)}$.

\subsection{Balls and higher dimensions}\label{subsec:higher}

In sharp contrast to the algorithms for disk and unit ball graphs, we will prove that with an extra dimension or with different radii (even arbitrarily close to each other), a PTAS is highly unlikely.

By Corollary~\ref{cor:hardness-co2subd}, we just need to show that all co-2-subdivisions can be realized by our geometric objects. 
This appears like a simple and powerful method to rule out a PTAS (QPTAS, and even SUBEXPAS) for a geometric clique problem.

The co-2-subdivision of a graph with $n$ vertices and $m$ edges can be thought of as follows.
It is made of a clique on $n$ vertices, representing the initial vertices, and a clique on $2m$ vertices minus a perfect matching, representing endpoints of the initial edges.
Each anti-matched pair of vertices corresponds to an edge in the initial graph.
Each vertex representing one endpoint of an initial edge is adjacent to all the vertices representing the initial vertices but this endpoint. 

\begin{theorem}\label{thm:4udg}
  The class of 4-dimensional unit ball graphs contains all the co-2-subdivisions.
\end{theorem}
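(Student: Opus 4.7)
The strategy is to construct an explicit unit ball representation in $\mathbb R^4$ of the co-$2$-subdivision $\overline{H}$ of an arbitrary graph $G=(V,E)$ with $|V|=n$ and $|E|=m$. Recall the structure of $\overline{H}$: the $n$ original vertices form a clique; for each edge $e_k=v_iv_j\in E$, a pair of edge vertices $v^+(e_k),v^-(e_k)$ is adjacent to every vertex except $v_i$, $v_j$, and each other. The construction has three stages, and the last is the main obstacle.

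First, I will place the original centers as a tight cluster in a two-dimensional subspace of $\mathbb R^4$: say $p_i=R(\cos\theta_i,\sin\theta_i,0,0)$ on a small circle, with $R$ and the $\theta_i$ chosen so that $|p_i-p_j|\leq 2R<2$ for all $i\neq j$. Thus the originals automatically form the required $K_n$ of unit balls. Second, for each edge $e_k=v_iv_j$ I will place the two edge-vertex centers on the opposite side of the cluster from the original each has to miss, lifted into the remaining two dimensions:
\[
q^+_k=\bigl(-M\cos\theta_i,-M\sin\theta_i,\,h^+_{k,1},\,h^+_{k,2}\bigr),\qquad q^-_k=\bigl(-M\cos\theta_j,-M\sin\theta_j,\,h^-_{k,1},\,h^-_{k,2}\bigr),
\]
for a scalar $M$ chosen so that $M+R$ is just above $2$. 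A direct expansion of $|q^\sigma_k-p_l|^2$ shows that, whenever $h^\pm_k\in\mathbb R^2$ has its magnitude in a narrow annulus (whose inner/outer radii depend on the minimum angular separation of the $\theta_i$'s), the unit ball at $q^\sigma_k$ intersects every original ball except the one it represents, as required.

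Third, I need to choose the perturbations $h^\pm_k$ so that (a) the anti-matched pair $(q^+_k,q^-_k)$ is at distance strictly greater than $2$ for every $k$, and (b) every other edge-vertex pair $(q^\sigma_k,q^{\sigma'}_{k'})$ with $k\neq k'$ is at distance at most $2$. A natural candidate is $h^+_k=r\nu_k$, $h^-_k=-r\nu_k$ for a unit vector $\nu_k\in\mathbb R^2$ and a common scalar $r>1$; then $|h^+_k-h^-_k|=2r>2$ gives matched distance, while same- and opposite-sign cross pairs contribute $r|\nu_k-\nu_{k'}|$ and $r|\nu_k+\nu_{k'}|$ respectively, each strictly less than $2r$ when the $\nu_k$'s are distinct and not antipodal. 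Tuning $r$ slightly above $1$ and picking $m$ unit vectors $\nu_1,\ldots,\nu_m\in\mathbb R^2$ with all pairwise angles bounded away from both $0$ and $\pi$ (which is possible for any $m$ by spreading them on an arc of length just below $\pi$) makes the cross-pair distances fall below $2$.

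The main obstacle is the simultaneous feasibility of stage~3: the $h^\pm_k$ must lie in the narrow annulus forced by stage~2 \emph{and} have the prescribed matched/cross-pair distance pattern. The feasibility analysis reduces to a parameter balance between $R$, $M$, $r$, and the pairwise angular spread of the $\nu_k$'s, and it succeeds precisely because $\mathbb R^4$ affords a genuinely two-dimensional perturbation space $(x_3,x_4)$. In $\mathbb R^3$ only a one-dimensional perturbation is available, which forces all cross and matched pair distances to be essentially the same; this is the geometric manifestation of the Kakeya-type obstruction of Theorem~\ref{thm:no-two-odd-cycles-ubg}, and explains why the construction becomes possible only once a fourth dimension is admitted.
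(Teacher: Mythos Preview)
Your plan is correct in outline and the parameter feasibility you defer as ``the main obstacle'' can indeed be carried through: with $M+R$ tuned so that $4-(M+R)^2$ sits just below $1$, the stage-2 annulus for $r^2$ becomes $(1-\eta,\,1-\eta+2MR(1-\cos\delta_{\min}))$, and choosing the angular spread $\alpha$ of the $\theta_i$'s small enough (specifically $\alpha=O(1/m)$) makes this overlap the interval $(1,\,1+O(\phi_{\min}^2))$ forced by stage~3, where $\phi_{\min}\approx L/(m-1)$ is the minimum angle among the $\nu_k$'s. One small slip: the matched-pair distance is not $|h^+_k-h^-_k|=2r$ but $\sqrt{2M^2(1-\cos(\theta_i-\theta_j))+4r^2}\geq 2r$, so your conclusion still holds; similarly, the cross-pair bound must absorb the first-two-coordinate contribution $M^2\cdot 2(1-\cos\alpha)$, which is why $\alpha$ has to be pushed below the $\nu$-spacing scale.

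Your construction is genuinely different from the paper's. The paper places the \emph{original} vertices in the orthogonal plane $\mathcal P^\bot$ (on an arc of radius $\sqrt 3-\varepsilon$) and the \emph{edge} vertices essentially on a circle $\mathcal C$ of radius $1$ in $\mathcal P$, using the key fact that every point of $\mathcal P^\bot$ is equidistant from every point of $\mathcal C$. Antipodality on $\mathcal C$ handles the matched-pair non-edge, and then a single infinitesimal push of each $p^\pm(e)$ in $\mathcal P^\bot$ (away from $p(u)$) breaks exactly the one original-vertex adjacency that must be broken. You instead put the originals in the first two coordinates and use antipodality \emph{there} for the ``miss one original'' constraint, reserving the last two coordinates (with $\pm r\nu_k$) for the matched pair. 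The paper's equidistance trick makes its analysis essentially trivial---one perturbation, no balancing---whereas yours trades that elegance for a more symmetric but fiddlier two-interval intersection. Both exploit the same structural point you articulate well at the end: $\mathbb R^4$ offers two orthogonal $2$-planes, one for each of the two independent non-adjacency patterns.
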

\begin{proof}
  Given any simple graph $G=(V,E)$ with $n$ vertices and $m$ edges, we want to build a set $S$ of $n+2m$ points in $\mathbb R^4$ where each vertex $v$ is represented by a point $p(v)$ and each edge $e$, by two points $p^+(e)$ and $p^-(e)$.
  A pair of points in $S$ should be at distance at most 2, except if it is the two points of the same edge $e$, or if it is a point $p(v)$ with either a $p^+(vw)$ or a $p^-(wv)$; those pairs should be at distance strictly more than 2.
  We denote by $x,y,z,t$ the coordinates of $\mathbb R^4$.
  Let $\mathcal P$ be the plane defined by the intersection of the hyperplanes of equation $z=0$ and $t=0$.
  The projection $\pi$ onto $\mathcal P$ of the points $p^+(e)$ (resp. $p^-(e)$) fall regularly on the ``top'' part (resp. ``bottom'' part) of a circle $\mathcal C$ of $\mathcal P$ centered at the origin $O=(0,0,0,0)$ and of diameter $2$, such that for each edge $e$, $\pi(p^+(e))$ and $\pi(p^-(e))$ are antipodal on $\mathcal C$.
  We just defined the points $\pi(p^+(e))$ and $\pi(p^-(e))$.
  The actual points $p^+(e)$ and $p^-(e)$ will be fixed later by moving them very slightly away from their projection in a two-dimensional plane orthogonal to $\mathcal P$. 
  Let $\eta$ be the maximum distance between a pair $(\pi(p^+(e)),\pi(p^-(e')))$ with $e \neq e'$.
  By construction $\eta<2$.
  
  Let $\mathcal P^\bot$ be the (2-dimensional) plane containing the center $O$ of $\mathcal C$ and orthogonal to $\mathcal P$; in other words, the intersection of the hyperplanes of equation $x=0$ and $y=0$.
  We observe that all the points of $\mathcal P^\bot$ are equidistant to all the points $\pi(p^+(e))$ and $\pi(p^-(e))$.
  We place all the points $p(v)$ in $\mathcal P^\bot$ regularly spaced on a arc of a circle lying on $\mathcal P^\bot$ centered at $O$ and of radius $\sqrt{3}-\varepsilon$.
  One can notice that for any $(v, e, s) \in V \times E \times \{+,-\}$, $d(p(v),\pi(p^s(e))) = \sqrt{4-2 \sqrt 3 \varepsilon + \varepsilon^2}$.
  We will choose $\varepsilon \ll 2 - \eta \ll 1$ so that this shared distance is just below 2, and the points $\{p^+(e),p^-(e)\}_{e \in E}$ realize the same adjacencies than their projection by $\pi$.

  For every $e=uv \in E$, we place $p^+(e)$ such that $\overrightarrow{\pi(p^+(e))p^+(e)}=-\frac{(\varepsilon+\varepsilon')}{\lVert Op(u) \rVert}\overrightarrow{Op(u)}$ and $p^-(e)$ such that $\overrightarrow{\pi(p^-(e))p^-(e)}=-\frac{(\varepsilon+\varepsilon')}{\lVert Op(v) \rVert}\overrightarrow{Op(v)}$.
  In words, we push very slightly $p^+(e)$ (resp. $p^-(e)$) away from $\pi(p^+(e))$ (resp. $\pi(p^-(e))$) in the opposite direction of $\overrightarrow{Op(u)}$ (resp. $\overrightarrow{Op(v)}$).
  This way, $p^+(e)$ is at distance more than 2 from $p(u)$.
  Indeed, $$d(p^+(e),p(u))=\lVert \overrightarrow{p^+(e)\pi(p^+(e))} + \overrightarrow{\pi(p^+(e))O} + \overrightarrow{Op(u)} \rVert = \lVert \overrightarrow{\pi(p^+(e))O} + \overrightarrow{Op(u)} + \overrightarrow{p^+(e)\pi(p^+(e))} \rVert$$ $$= \lVert \overrightarrow{\pi(p^+(e))O} + (\sqrt 3 -\varepsilon+\varepsilon+\varepsilon')\frac{\overrightarrow{Op(u)}}{\lVert \overrightarrow{Op(u)}\rVert} \rVert = \sqrt{1+(\sqrt 3+\varepsilon')^2} > 2.$$
  Similarly, $d(p^-(e),p(v))>2$.
  We choose $\varepsilon'$ infinitesimal (in particular, $\varepsilon' \ll \varepsilon$) so that, still, $d(p^+(e),p(w))<2$ for any $w \neq u$ and $d(p^-(e),p(w))<2$ for any $w \neq v$.
  \begin{figure}
    \centering
    \begin{tikzpicture}[
        dot/.style={fill,circle,inner sep=-0.02cm},
      ]
      \def\xs{0.5}
      \coordinate (a) at (-4,0,-4) {} ;
      \coordinate (b) at (4,0,-4) {} ;
      \coordinate (c) at (4,0,4) {} ;
      \coordinate (d) at (-4,0,4) {} ;
      \fill[opacity=0.15] (a) -- (b) -- (c) -- (d) -- cycle ;
      \coordinate (O) at (0,0,0) {} ;
      \node[dot] at (O) {} ;
      \node at (-0.2,0.1,0) {$O$} ;
      \node[xslant=\xs]  at (-3.5,0,-3.5) {$\mathcal P$} ;
      \node[xslant=\xs]  at (1.9,0,0) {$\mathcal C$} ;
      
      \def \r{1.5}
      \foreach \i in {70,75,80,85,90,95,100,105,110}{
        \pgfmathsetmacro{\CosValue}{\r * cos(\i)}
        \pgfmathsetmacro{\SinValue}{\r * sin(\i)}
        \node[dot] (pm\i) at (\CosValue,0,\SinValue) {};
        \node[dot] (pp\i) at (\CosValue,0,-\SinValue) {};
      }
      \foreach \i [count=\j from 2] in {1,...,360}{
        \pgfmathsetmacro{\CosValue}{\r * cos(\i)}
        \pgfmathsetmacro{\SinValue}{\r * sin(\i)}
        \pgfmathsetmacro{\CosValueb}{\r * cos(\j)}
        \pgfmathsetmacro{\SinValueb}{\r * sin(\j)}
        \draw[very thin] (\CosValue,0,\SinValue) -- (\CosValueb,0,\SinValueb) {};
      }
      \node[xslant=\xs] (pp1) at (-1,0,-2) {\tiny{$\pi(p^+(e_1))$}} ;
      \node[xslant=\xs] (pm1) at (1,0,2) {\tiny{$\pi(p^-(e_1))$}} ;
      \draw[opacity=0.5] (pp70) -- (O) ;
      \node[xslant=\xs] at (0.5,0,-0.5) {\footnotesize{1}} ;

      \coordinate (e) at (-1.8,3,0) {} ;
      \coordinate (f) at (1.8,3,0) {} ;
      \node at (1.2,2.5,0) {$\mathcal P^{\bot}$} ;
      \node at (-0.3,2.8,0) {\tiny{$p(v_1)$}} ;

      \fill[green,opacity=0.15] (O) -- (e) -- (f) -- cycle ;

      \foreach \i in {85,87,89,91,93,95}{
        \pgfmathsetmacro{\CosValue}{1.73 * \r * cos(\i)}
        \pgfmathsetmacro{\SinValue}{1.73 * \r * sin(\i)}
        \node[dot] (q\i) at (\CosValue,\SinValue,0) {};
      }
      \draw[green,opacity=0.5] (q85) -- (O) ;
      \node at (-0.35,1.7,0) {\footnotesize{$\sqrt{3}-\varepsilon$}} ;
      
      \foreach \i [count=\j from 81] in {80,...,100}{
        \pgfmathsetmacro{\CosValue}{1.73 *\r * cos(\i)}
        \pgfmathsetmacro{\SinValue}{1.73 *\r * sin(\i)}
        \pgfmathsetmacro{\CosValueb}{1.73 *\r * cos(\j)}
        \pgfmathsetmacro{\SinValueb}{1.73 *\r * sin(\j)}
        \draw[very thin] (\CosValue,\SinValue,0) -- (\CosValueb,\SinValueb,0) {};
      }
    \end{tikzpicture}
    \caption{The overall construction for 4-dimensional unit balls. We only represent the centers.}
    \label{fig:4dim-unit-disks}
  \end{figure}
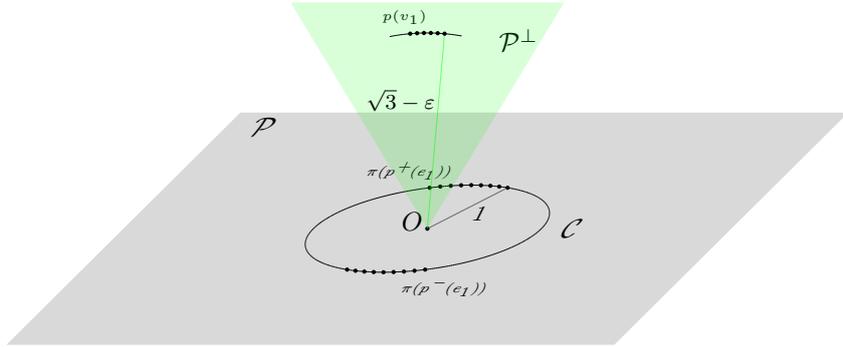
\end{proof}

\begin{corollary}\label{cor:4ubg}
  For every $\gamma < 1$, there is a constant $\zeta < 1$ such that \cli on 4-dimensional unit ball graphs is not $\zeta$-approximable even in time $2^{n^\gamma}$, unless the ETH fails.
  Moreover, \cli is NP-hard on 4-dimensional unit ball graphs.
\end{corollary}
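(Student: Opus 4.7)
The plan is to simply chain Theorem~\ref{thm:4udg} with Corollary~\ref{cor:hardness-co2subd}. Given an instance $H$ of \cli on a co-2-subdivision with $N$ vertices, Theorem~\ref{thm:4udg} provides, in polynomial time, a set of $N$ points in $\mathbb R^4$ whose associated unit ball graph is exactly $H$. Hence a $\zeta$-approximation algorithm for \cli on 4-dimensional unit ball graphs running in time $2^{n^\gamma}$ would immediately produce, via this reduction, a $\zeta$-approximation for \cli on the whole class of co-2-subdivisions in time $2^{N^\gamma} \cdot N^{O(1)}$, with the same number of vertices. For any target $\gamma < 1$, pick $\gamma' \in (\gamma,1)$; applying Corollary~\ref{cor:hardness-co2subd} at $\gamma'$ gives a constant $\zeta < 1$ that cannot be attained in time $2^{N^{\gamma'}}$, which dominates $2^{N^\gamma} \cdot N^{O(1)}$. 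This yields the desired contradiction under the ETH.

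For the NP-hardness statement, I would observe that Theorem~\ref{thm:hardness-2subd}, by ruling out $\zeta$-approximation (for some $\zeta < 1$) of \mis on all 2-subdivisions even in subexponential time, \emph{a fortiori} rules out exact polynomial-time computation unless the ETH fails, and in particular (by the standard hardness of \mis on bounded-degree graphs that underlies Theorem~\ref{thm:inapprox-mis}) yields NP-hardness of exact \cli on co-2-subdivisions. The polynomial-time reduction of Theorem~\ref{thm:4udg} then transfers this NP-hardness to \cli on 4-dimensional unit ball graphs.

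There is essentially no obstacle to overcome, since all the substance has been handled already: the explicit embedding of co-2-subdivisions as unit balls in $\mathbb R^4$ in Theorem~\ref{thm:4udg}, and the ETH-based inapproximability for \cli on co-2-subdivisions in Corollary~\ref{cor:hardness-co2subd}. The only thing to verify is that the reduction preserves the vertex count exactly (it does, since each vertex of $H$ is realized by one unit ball), so that the inapproximability ratio $\zeta$ and the exponent $\gamma$ transfer without any deterioration.
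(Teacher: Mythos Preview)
Your proposal is correct and matches the paper's (implicit) argument: the corollary is stated without proof precisely because it is immediate from Theorem~\ref{thm:4udg} and Corollary~\ref{cor:hardness-co2subd}, exactly as you chain them. Your care with $\gamma' \in (\gamma,1)$ to absorb the polynomial overhead is fine, though in fact unnecessary here: since the class of 4-dimensional unit ball graphs \emph{contains} all co-2-subdivisions (and the algorithm takes the graph, not a representation, as input), any $\zeta$-approximation running in time $2^{n^\gamma}$ on the former class is literally such an algorithm on the latter, with no overhead at all.

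One small wrinkle in your NP-hardness paragraph: the sentence ``ruling out $\zeta$-approximation \dots\ even in subexponential time, \emph{a fortiori} rules out exact polynomial-time computation unless the ETH fails, and in particular \dots\ yields NP-hardness'' conflates ETH-hardness with NP-hardness. ETH-based lower bounds do not by themselves establish NP-hardness. The clean route (which you gesture at) is purely classical: \mis is NP-hard on bounded-degree graphs; the 2-subdivision is a polynomial-time reduction with $\alpha(H)=\alpha(G)+m$ (as in the proof of Theorem~\ref{thm:hardness-2subd}); hence \cli on co-2-subdivisions is NP-hard; and Theorem~\ref{thm:4udg} transfers this to 4-dimensional unit ball graphs. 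No appeal to ETH is needed for this part.
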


The proof of Theorem~\ref{thm:4udg} can be tweaked for 3-dimensional balls of different radii.
For a real $\varepsilon > 0$, we say that the radii of a representation (or the representation itself) are \emph{$\varepsilon$-close} if the radii are all contained in the interval $[1,1+\varepsilon]$.
We denote by $\mathcal B(1,1+\varepsilon)$ the ball graphs having an $\varepsilon$-close representation.

\begin{theorem}\label{thm:bg}
 For any $\varepsilon > 0$, the subclass of ball graphs $\mathcal B(1,1+\varepsilon)$ contains all the co-2-subdivisions.
\end{theorem}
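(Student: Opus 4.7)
The plan is to mimic the construction in the proof of Theorem~\ref{thm:4udg}, trading the fourth dimension for the slack that allowing ball radii to range over $[1,1+\varepsilon]$ offers. In the four-dimensional proof the original-vertex centers $p(v)$ sit on a circular arc inside a $2$-plane $\mathcal P^\perp$ orthogonal to the plane $\mathcal P$ carrying the edge-vertex circle $\mathcal C$; the orthogonality forces every distance $\|p(v)-\pi(p^\pm(e))\|$ to collapse to the common value $\sqrt{1+\|Op(v)\|^2}<2$, so unit balls intersect uniformly. In $\mathbb R^3$ no such $2$-plane exists, and it is precisely this loss of uniformity that the radius slack will have to compensate.

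I will place the $2m$ points $\pi(p^\pm(e_k))$ on a circle $\mathcal C$ of radius $\rho=1+\delta$ in the $xy$-plane, with $p^+(e_k)$ and $p^-(e_k)$ antipodal on $\mathcal C$ and all polar angles $\theta_k$ distributed inside a short arc around $\pi/2$. The requirement $\delta>\varepsilon$ ensures that each antipodal pair is at distance $2\rho>2(1+\varepsilon)$ and therefore never intersects, while the angular spread of the $\theta_k$'s will be chosen large enough to keep every non-antipodal pair of edge-vertex centers at distance below $2$. The $n$ original-vertex centers $p(v_i)=H(\sin\psi_i,0,\cos\psi_i)$ will be placed on a short arc in the $xz$-plane at common distance $H=\sqrt{3}-\sigma$ from the origin, with $\sigma$ chosen slightly larger than $\delta/\sqrt{3}$. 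Because the $\psi_i$'s are distinct, the direction $\overrightarrow{Op(v_i)}$ genuinely varies with $i$ — precisely the feature that will let the forthcoming perturbation distinguish endpoints.

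I then apply, exactly as in Theorem~\ref{thm:4udg}, the infinitesimal shifts $p^+(e_k)\mapsto p^+(e_k)-\tau\,\overrightarrow{Op(u_k)}/\|Op(u_k)\|$ and $p^-(e_k)\mapsto p^-(e_k)-\tau\,\overrightarrow{Op(v_k)}/\|Op(v_k)\|$, where $e_k=u_kv_k$ is oriented. A direct expansion shows that the squared distance from the shifted $p^+(e_k)$ to $p(u_k)$ equals $\rho^2+(H+\tau)^2$ up to a cross term of order $\mu\psi$, while the squared distance to any other $p(v_j)$ is smaller by $2H\tau(1-\cos(\psi_{u_k}-\psi_j))$, which is positive and of order $\tau\psi_0^2$ when the $\psi_i$'s are kept $\psi_0$-separated. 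I finish by assigning to each ball a radius in $[1,1+\varepsilon]$ so that every pair intersects except the three forbidden pairs per edge: $\{p^+(e_k),p(u_k)\}$, $\{p^-(e_k),p(v_k)\}$, and $\{p^+(e_k),p^-(e_k)\}$.

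The main obstacle is precisely this loss of uniformity: the $O(\mu\psi)$ residual cross terms and the $\Theta(\tau\psi_0^2)$ perturbation gap must fit simultaneously inside the $\Theta(\varepsilon)$ slack permitted by the radius range $[1,1+\varepsilon]$. A choice of parameters in the regime $\mu\psi\ll\tau\psi_0^2\ll\varepsilon$, with $\delta$ and $\sigma-\delta/\sqrt{3}$ both of order $\varepsilon$ and $\tau$ of order $\varepsilon$, while $\psi_0$ is bounded away from $0$ and $\mu$ is sufficiently small, reconciles every inequality at once; the remaining verification is a routine line-by-line adaptation of the four-dimensional computation of Theorem~\ref{thm:4udg}.
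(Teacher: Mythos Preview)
Your construction has a genuine inconsistency in the treatment of the edge--vertex balls. You place $p^+(e_k)$ and $p^-(e_k)$ antipodally on a circle of radius $\rho=1+\delta$ with $\delta>\varepsilon$, so that the matched pair sits at distance $2\rho>2(1+\varepsilon)$ and can never intersect. But for $k\neq k'$ the pair $(p^+(e_k),p^-(e_{k'}))$ is \emph{nearly} antipodal: its angular separation on $\mathcal C$ is $\pi-|\theta_k-\theta_{k'}|$, so its distance equals $2\rho\cos\!\bigl(\tfrac12|\theta_k-\theta_{k'}|\bigr)$. Since all $\theta_k$ lie in a short arc, this is again within $O(\mu^2)$ of $2\rho>2(1+\varepsilon)$, hence these required edges are also missing. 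Forcing them below $2$ would need the \emph{minimum} gap between consecutive $\theta_k$'s to exceed roughly $2\sqrt{2\delta}$, hence $\mu\gtrsim m\sqrt{\delta}$; feeding this into your later requirement $\mu\psi\ll\tau\psi_0^2$ with $\tau=\Theta(\varepsilon)$ and the forced bound $\psi_0=O(1/n)$ (you cannot keep $\psi_0$ ``bounded away from~$0$'' once $n$ is large) yields $m\sqrt{\varepsilon}\ll \varepsilon/n^2$, which is impossible. The asserted parameter regime therefore does not exist.

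The paper avoids the cross--term bookkeeping altogether by a different and much simpler placement. It keeps $\mathcal C$ of radius~$1$ and all edge--vertex balls of radius~$1$, so the $4$-dimensional argument for the $2m$ edge vertices (antipodal pairs non-adjacent, every other pair adjacent) carries over unchanged. The vertex centers are put \emph{on the $z$-axis} at heights $\sqrt3+i\varepsilon'$; since every point of $\mathcal C$ is then exactly equidistant from each $p(v_i)$, there is no $O(\mu\psi)$ residual to control. The single non-uniformity---the dependence on~$i$---is absorbed by assigning the $i$-th vertex ball the calibrated radius $r_i=\sqrt{1+(\sqrt3+i\varepsilon')^2}-1+\varepsilon''$, so that every vertex--edge pair overlaps by exactly $\varepsilon''$. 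An infinitesimal push of $p^\pm(v_iv_j)$ away from the designated endpoint then kills precisely that one overlap, and taking $\varepsilon',\varepsilon''$ small enough keeps all $r_i\in[1,1+\varepsilon]$. In short, the radius slack is spent on the $n$ vertex balls to equalise distances exactly, not on managing angular error terms.
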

\begin{proof}
  Let $x, y, z$ be the coordinates of $\mathbb R^3$.
  We start similarly and define the same $\pi(p^+(e))$ and $\pi(p^-(e))$ as in the previous construction for the points $p^+(e)$ and $p^-(e)$ on a circle $\mathcal C$ of diameter $2$ centered at $O=(0,0,0)$ on a plane $\mathcal P$ of equation $z=0$.
  One difference is that $\pi(p^+(e))$ and $\pi(p^-(e))$ are no longer projections.
  We then place the points $p(v)$ regularly spaced along the $z$-axis.
  More precisely, if the vertices are numbered $v_1,v_2,\ldots,v_n$, the position of $p(v_i)$ is $(0,0,\sqrt 3+i\varepsilon')$.
  The radius of the disk representing $v_i$ centered at $p(v_i)$ is set to $r_i := \sqrt{1+(\sqrt 3+i\varepsilon')^2}-1+\varepsilon''$.
  The radii associated to the centers $p^+(e)$ and $p^-(e)$ are all set to 1.
  We move $p^+(v_iv_j)$ (resp. $p^-(v_iv_j)$) away from $\pi(p^+(v_iv_j))$ (resp. $\pi(p^-(v_iv_j))$) in the opposite direction of $\overrightarrow{p^+(v_iv_j)p(v_i)}$ (resp. $\overrightarrow{p^-(v_iv_j)p(v_j)}$) by an infinitesimal quantity, in order to only suppress the overlap of the disks centered at $p^+(v_iv_j)$ and $p(v_i)$ (resp. $p^-(v_iv_j)$ and $p(v_j)$).
  See Figure~\ref{fig:3dim-disks} for an illustration of the construction.
  We make $\varepsilon'$ and $\varepsilon''$ small enough that all the values $r_i$ are between 1 and $1+\varepsilon$.
 
   \begin{figure}
    \centering
    \begin{tikzpicture}[
        dot/.style={fill,circle,inner sep=-0.02cm},
        extended line/.style={shorten >=-#1},
        extended line/.default=1cm]
      ]
      \def\xs{0.5}
      \coordinate (a) at (-4,0,-4) {} ;
      \coordinate (b) at (4,0,-4) {} ;
      \coordinate (c) at (4,0,4) {} ;
      \coordinate (d) at (-4,0,4) {} ;
      \fill[opacity=0.15] (a) -- (b) -- (c) -- (d) -- cycle ;
      \coordinate (O) at (0,0,0) {} ;
      \node[dot] (od) at (O) {} ;
      \node at (-0.2,0.1,0) {$O$} ;
      \node[xslant=\xs]  at (-3.5,0,-3.5) {$\mathcal P$} ;
      \node[xslant=\xs]  at (1.9,0,0) {$\mathcal C$} ;

      \draw[thick,green] (od) -- (0,3,0) ;
      \foreach \i in {0,...,8}{
        \pgfmathsetmacro{\c}{\i * 12}
        \node[dot,color=red!\c!blue] (w\i) at (0,1.73+\i * 0.1,0) {} ;
      }
      \node[color=red!52!blue] at (0.36,1.73+0.4,0) {\tiny{$p(v_5)$}} ;
      \node[color=red!91!blue] at (-0.32,1.73+0.7,0) {\tiny{$p(v_8)$}} ;

      \def \r{1.5}
      \pgfmathsetmacro{\CosValue}{\r * cos(84.15)}
      \pgfmathsetmacro{\SinValue}{\r * sin(85)}
      \node[dot,blue] at (\CosValue,-0.05,-\SinValue) {};
      \node at (\CosValue + 0.04,-0.25,-\SinValue) {\tiny{\textcolor{blue}{$p^+(e_6)$}}} ;

      \pgfmathsetmacro{\CosValue}{\r * cos(95.4)}
      \pgfmathsetmacro{\SinValue}{\r * sin(94)}
      \node[dot,blue] at (\CosValue,-0.05,\SinValue) {};
      \node at (\CosValue,-0.25,\SinValue) {\tiny{\textcolor{blue}{$p^-(e_6)$}}} ;
      
      \foreach \i in {70,75,80,85,90,95,100,105,110}{
        \pgfmathsetmacro{\CosValue}{\r * cos(\i)}
        \pgfmathsetmacro{\SinValue}{\r * sin(\i)}
        \node[dot] (pm\i) at (\CosValue,0,\SinValue) {};
        \node[dot] (pp\i) at (\CosValue,0,-\SinValue) {};
      }
      \foreach \i [count=\j from 2] in {1,...,360}{
        \pgfmathsetmacro{\CosValue}{\r * cos(\i)}
        \pgfmathsetmacro{\SinValue}{\r * sin(\i)}
        \pgfmathsetmacro{\CosValueb}{\r * cos(\j)}
        \pgfmathsetmacro{\SinValueb}{\r * sin(\j)}
        \draw[very thin] (\CosValue,0,\SinValue) -- (\CosValueb,0,\SinValueb) {};
      }
      \node[xslant=\xs] (pp1) at (-1,0,-2) {\tiny{$\pi(p^+(e_1))$}} ;
      \node[xslant=\xs] (pm1) at (1,0,2) {\tiny{$\pi(p^-(e_1))$}} ;
      \draw[opacity=0.5] (pm110) -- (O) ;
      \node[xslant=\xs] at (-0.5,-0.15,0) {\footnotesize{1}} ;

      \draw[dotted,extended line=0.5cm] (w7) -- (pm95) ;
      \draw[dotted,extended line=0.5cm] (w4) -- (pp85) ;
      \draw[very thin] (pp85) -- (pm95) ;
      
    \end{tikzpicture}
    \caption{The overall construction for 3-dimensional balls. Blue centers indicate unit balls, while the redder the center, the larger the radius of the ball. We only represented the two centers for the edge $e_6 = v_5v_8$, showing how $p^+(e_6)$ is pushed away from $p(v_5)$, and $p^-(e_6)$, from $p(v_8)$. One may observe that $p(v_1), \ldots, p(v_n), O, \pi(p^+(e_6)), \pi(p^-(e_6)), p^+(e_6), p^-(e_6)$ are coplanar.}
    \label{fig:3dim-disks}
  \end{figure}
\end{proof}

We call \emph{quasi unit ball graphs} those graphs in the intersection $\bigcap_{\varepsilon > 0}\mathcal B(1,1+\varepsilon)$.
As a corollary, we get some strong inapproximability even for quasi unit ball graphs.

\begin{corollary}\label{cor:qub}
  For every $\gamma < 1$, there is a constant $\zeta < 1$ such that \cli on quasi unit ball graphs is not $\zeta$-approximable even in time $2^{n^\gamma}$, unless the ETH fails.
  Moreover, \cli is NP-hard on quasi unit ball graphs.
\end{corollary}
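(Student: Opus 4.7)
The plan is to show that the class of quasi unit ball graphs contains every co-2-subdivision, and then to invoke Corollary~\ref{cor:hardness-co2subd} as a black box. By definition, a graph $G$ is a quasi unit ball graph iff $G \in \mathcal B(1,1+\varepsilon)$ for every $\varepsilon > 0$, where the representation is allowed to depend on $\varepsilon$. So the containment we need follows directly from Theorem~\ref{thm:bg}: for every $\varepsilon > 0$, the class $\mathcal B(1,1+\varepsilon)$ already contains every co-2-subdivision, hence any co-2-subdivision lies in the intersection $\bigcap_{\varepsilon > 0} \mathcal B(1,1+\varepsilon)$.

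Concretely, I would re-examine the construction behind Theorem~\ref{thm:bg} to confirm that its radii can be pushed arbitrarily close to $1$. Recall that the radii used there are $r_i = \sqrt{1+(\sqrt{3}+i\varepsilon')^2}-1+\varepsilon''$, together with radius $1$ for all points $p^+(e), p^-(e)$. As $\varepsilon' \to 0$ and $\varepsilon'' \to 0$, each $r_i \to 1$, while the strict inequalities governing which pairs of balls intersect remain strict (since there are finitely many such inequalities and each is an open condition). Hence, given any target $\varepsilon > 0$, one may choose $\varepsilon', \varepsilon''$ small enough that all radii lie in $[1, 1+\varepsilon]$ and the resulting graph is still the prescribed co-2-subdivision. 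This gives a representation in $\mathcal B(1,1+\varepsilon)$ for every $\varepsilon > 0$, certifying the quasi unit ball property.

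Given this containment, the inapproximability portion of the corollary is immediate from Corollary~\ref{cor:hardness-co2subd}: any $\zeta$-approximation running in time $2^{n^\gamma}$ for \cli on quasi unit ball graphs would, applied to the quasi unit ball representation of an arbitrary co-2-subdivision, contradict the ETH for the same constants $\gamma < 1$ and $\zeta < 1$. For NP-hardness, the same containment combined with the polynomial-time reduction implicit in the proof of Theorem~\ref{thm:hardness-2subd} (from \mis on bounded-degree graphs to \mis on 2-subdivisions, hence to \cli on co-2-subdivisions) yields a polynomial-time reduction from an NP-hard problem to \cli on quasi unit ball graphs.

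No serious obstacle is expected: the only point that requires care is verifying that the construction of Theorem~\ref{thm:bg} tolerates $\varepsilon', \varepsilon'' \to 0$ without destroying any of the required (non-)intersections. This amounts to checking that each of the finitely many strict distance inequalities used in that proof is preserved under small perturbations, which is routine by continuity of the Euclidean distance and the fact that the baseline configuration on the circle $\mathcal C$ and the $z$-axis already has all required distances bounded away from the critical value $2$ by a positive margin.
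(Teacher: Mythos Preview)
Your proposal is correct and matches the paper's intended argument: the corollary is stated without proof, as an immediate consequence of Theorem~\ref{thm:bg} (which already gives, for every $\varepsilon>0$, a representation in $\mathcal B(1,1+\varepsilon)$) together with Corollary~\ref{cor:hardness-co2subd}. Your re-examination of the radii $r_i$ is slightly more than needed, since Theorem~\ref{thm:bg} already asserts the construction works for every $\varepsilon>0$, but it does no harm.
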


We observe that the lower bounds of Corollaries~\ref{cor:4ubg}~and~\ref{cor:qub} also hold when the geometric representation is given in input.
For that we need to argue that we can compute the coordinates (and radii) of our constructions in polynomial time.
Let us estimate $\eta$, the maximum distance between a pair $(\pi(p^+(e)),\pi(p^-(e')))$ with $e \neq e'$, for an $m$-edge graph with $m$ sufficiently large.
We place the $2m$ points $\pi(p^+(e_1)), \ldots, \pi(p^+(e_m)), \pi(p^-(e_1)), \ldots, \pi(p^-(e_m))$ regularly spaced every $\pi/m$ radians.
Thus $$\eta = 2\sqrt{1-\sin^2(\frac{\pi}{2m})} = 2\cos(\frac{\pi}{2m}) \leqslant 2(1-\frac{\pi^2}{8m^2}).$$
Hence $2-\eta \leqslant \frac{\pi^2}{4m^2}$.
This implies that in the proof of Theorem~\ref{thm:4udg}, we can choose $\varepsilon := \frac{1}{100m^3}$ and $\varepsilon' := \frac{1}{100m^4}$.
Now all the centers can be snapped to grid points on a three-dimensional grid with precision $\frac{1}{100m^5}$ (i.e., two adjacent grid points are at distance $\frac{1}{100m^5}$).
In the the proof of Theorem~\ref{thm:bg}, we can choose $\varepsilon' := \frac{\varepsilon}{100m^3}$ and $\varepsilon'' := \frac{\varepsilon}{100m^4}$.
And all the centers and radii can be defined up to precision $\frac{\varepsilon}{100m^5}$.

Note that we realize both constructions with polynomial precision (in $m$ and, for Theorem~\ref{thm:bg}, in $1/\varepsilon$), although even a single-exponential precision would be allowed in a polytime construction.
Indeed, in binary, polynomially many bits yield exponentially precised coordinates. 

\subsection{Filled ellipses and filled triangles}

A natural generalization of a disk is an \emph{elliptical disk}, also called a \emph{filled ellipse}, i.e., an ellipse plus its interior.
Arguably the simplest convex set with non empty interior is a filled triangle (a triangle plus its interior).  

APX-hardness was shown for \cli in the intersection graphs of (\emph{non-filled}) ellipses and triangles by Amb\"uhl and Wagner~\cite{Ambuhl05}.
Their reduction also implies that there is no subexponential algorithm for this problem, unless the ETH fails.
Moreover, they claim that their hardness result extends to filled ellipses since \emph{``intersection graphs of ellipses without interior are also intersection graphs of filled ellipses''}.
Unfortunately this claim is incorrect, as we prove below: We construct a graph, which can be represented by ellipses without their interior, but cannot be represented by \emph{any} convex sets (and thus by filled ellipses).

\begin{theorem}\label{thm:counterexample}
There is a graph $G$ which has an intersection representation with ellipses without their interior, but has no intersection representation with convex sets.
\end{theorem}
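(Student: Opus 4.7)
The strategy is to exhibit a specific small graph $G$ together with an explicit intersection representation by boundaries of ellipses, and then to rule out any intersection representation of $G$ by convex subsets of $\mathbb R^2$. The whole point is to exploit a rigidity of convex representations that hollow ellipses escape: the intersection of two convex subsets of $\mathbb R^2$ is again convex, in particular connected, whereas two ellipse boundaries can cross transversally at up to four disjoint points. So $G$ must be designed to force some would-be convex set to behave as if its intersection with another were disconnected.

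A natural candidate construction starts from two ellipse boundaries $E_1$ and $E_2$ crossing at four points $p_1, p_2, p_3, p_4$, cyclically ordered on $E_1$. At each $p_i$ I place a small ``probe'' ellipse $w_i$ that straddles this crossing and touches $E_1$ and $E_2$ but nothing else, and I add a small number of auxiliary vertices (tiny ellipses) whose adjacencies record the cyclic order of the $w_i$'s around the region bounded by $E_1\cap E_2$ but explicitly make $w_i,w_j$ non-adjacent whenever $|i-j|\not\equiv 1\pmod 4$. The hollow-ellipse realization is then transparent: draw $E_1,E_2$, realize each $w_i$ as a tiny ellipse centered near $p_i$, and realize the auxiliary vertices locally along the arcs of $E_1$ or $E_2$ separating the $p_i$'s.

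To prove non-representability by convex sets, I assume for contradiction that convex sets $(\widetilde X)_{X\in V(G)}$ realize $G$. The convex set $\widetilde E_1\cap \widetilde E_2$ is convex (hence connected), and each probe $\widetilde w_i$ must meet both $\widetilde E_1$ and $\widetilde E_2$. Combining this with the cyclic adjacency constraints imposed by the auxiliary vertices, I will apply a separating-line argument: any two disjoint convex sets in the plane admit a separating hyperplane, so the pairwise non-adjacencies among the $\widetilde w_i$'s organize them into a cyclic pattern around $\widetilde E_1\cap\widetilde E_2$. A counting/topological argument will then show that some probe $\widetilde w_j$ with $|i-j|\not\equiv 1 \pmod 4$ is forced to intersect $\widetilde w_i$, contradicting the corresponding non-edge of $G$.

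The main obstacle is exactly this last step: one must rule out \emph{every} topological arrangement of the constant-many convex sets compatible with the prescribed adjacencies, not merely the arrangement mirroring the hollow-ellipse picture. I expect this to reduce to a finite case analysis on the possible cyclic orderings of the four probes $\widetilde w_1,\ldots,\widetilde w_4$ around the convex region $\widetilde E_1\cap\widetilde E_2$, using Helly's theorem in the plane to rule out the ``degenerate'' cases where three probes share a common point and separation arguments to rule out the remaining ones. Calibrating $G$ so that the case analysis closes without leaving a loophole is the delicate part of the proof.
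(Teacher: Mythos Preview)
Your proposal is a research plan, not a proof: you never actually specify the graph $G$, and you explicitly defer the entire non-representability argument to an unperformed ``finite case analysis''. As written there is nothing to check.

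More importantly, the mechanism you intend to exploit does not obviously yield an obstruction. You want the connectedness of $\widetilde E_1\cap\widetilde E_2$ to clash with the four ``probes'' $\widetilde w_i$, but nothing in your description prevents the following convex realization: take $\widetilde E_1,\widetilde E_2$ to be two large overlapping disks, and place the four $\widetilde w_i$ as tiny disks \emph{inside} $\widetilde E_1\cap\widetilde E_2$, arranged at the corners of a small square so that only cyclically consecutive ones touch. All adjacencies to $E_1,E_2$ and the $4$-cycle among the $w_i$ are realized, and your vaguely described auxiliary vertices (each adjacent to two consecutive probes and lying ``along an arc of $E_1$ or $E_2$'') can likewise be placed as small disks near the appropriate pair. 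Separating hyperplanes and Helly's theorem do not help here: there is no cyclic order to violate when everything lives inside the convex blob $\widetilde E_1\cap\widetilde E_2$. To make the idea work you would need additional vertices that force each $\widetilde w_i$ to reach \emph{outside} $\widetilde E_1\cap\widetilde E_2$ in a prescribed direction, and you have not indicated how to do this.

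The paper's construction is both concrete and structurally different. It builds a large cycle of ``white'' vertices whose representatives, in any convex realization, must contain a Jordan curve; this curve creates an inside/outside dichotomy and, crucially, a forced cyclic order in which twelve ``connector'' vertices are visited along it. Four ``black'' vertices $a,b,c,d$ live in one face, with $a,b,c$ forming a triangle and $d$ adjacent to $a,b$ and to all blue connectors of $c$ but not to $c$ itself. The contradiction is obtained by a segment-chasing argument: points $p_i\in R_d\cap R_{c_i}$ are joined pairwise, and convexity of $R_d$ forces these segments to cross certain barriers built from the connector segments and $R_b$; one eventually produces a segment contained in $R_d\cap R_b$ that must also meet $R_c$, contradicting the non-edge $cd$. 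The role of the outer ring is precisely what your sketch is missing: a rigid combinatorial frame that pins down the cyclic geometry before any local argument can be made.
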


\begin{proof}
The argument is similar to the one used by Brimkov et al. \cite{Brimkov18}, which was in turn inspired by the construction by Kratochv\'il and Matou\v{s}ek \cite{DBLP:journals/jct/KratochvilM94}. 
Consider the graph $G$ in Figure~\ref{fig:counterexample2}, containing what we will henceforth refer to as \emph{black}, \emph{gray}, \emph{red}, \emph{blue}, and \emph{white} vertices.
Gray, blue, and red vertices are called \emph{connector vertices}.
We aim to show that $G$ cannot be represented by a family of convex sets.

For the sake of contradiction, suppose that $G$ admits a representation by convex sets.
We denote by $R_v$ the convex set representing the vertex $v$.
The high-level idea is as follows.
First, we observe that the union of representatives of white vertices separates the rest of the plane into two disjoint regions.
Furthermore, since the subgraph induced by black vertices is connected, their representatives must all lie in one of these two regions.
Next, we look at the set $R_a \cup R_b \cup R_c$, for the black vertices $a$, $b$, and $c$, and observe that due to the connector vertices, certain parts of $R_a, R_b, R_c$ need to appear in a prescribed order as we move along the boundary of $R_a \cup R_b \cup R_c$.
This forces the representation of $G - \{d\}$ to have a somewhat rigid structure.
We carefully analyze this structure and conclude that the only way to realize all the adjacencies of $d$ is to place $R_d$ where it partially overlaps $R_c$, which contradicts the non-edge between $c$ and $d$.

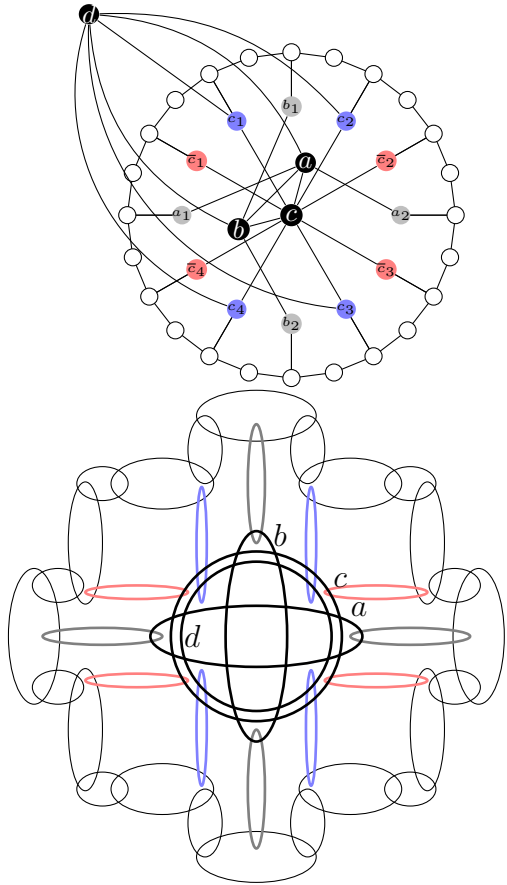
\begin{figure}[ht]
\centering
\begin{center}
\tiny
\begin{tikzpicture}[scale=0.72]
\tikzstyle{every node}=[shape = circle]


\node[fill = black, inner sep=-0.18cm] (c) at (0,0) {\color{white}{\normalsize $c$}}; 
\node[fill = black, inner sep=-0.2cm] (d) at (-3.7, 3.7) {\color{white}{\normalsize $d$}}; 
\node[fill = black, inner sep=-0.2cm] (b) at (180+360/24:1) {\color{white}{\normalsize $b$}}; 
\node[fill = black, inner sep=-0.18cm] (a) at (5*360/24:1) {\color{white}{\normalsize $a$}}; 

\draw (a) -- (b) -- (c) -- (a);

\foreach \i in {0,...,23}
{   
   \draw (\i*360/24:3) -- (\i*360/24+360/24:3);
   \draw (\i*360/12:3) -- (\i*360/12:2);
}
\foreach \i in {0,...,23}
{
   \node[fill=white,draw] (o\i) at  (\i*360/24:3) {};        
}

\foreach \i/\nam in {0/$a_2$,6/$a_1$}
{
       \node[fill=lightgray, inner sep=-0.18cm] (i\i) at (\i*360/12:2) {\color{black} \nam};
       \draw (a) -- (i\i);
}

\foreach \i/\nam in {3/$b_1$,9/$b_2$}
{
       \node[fill=lightgray, inner sep=-0.2cm] (i\i) at (\i*360/12:2) {\color{black} \nam};
       \draw (b) -- (i\i);
}

\foreach \i/\nam in {2/$c_2$,4/$c_1$,8/$c_4$,10/$c_3$}
{
       \node[fill=blue!50, inner sep=-0.18cm] (i\i) at (\i*360/12:2) {\color{black} \nam};       
       \draw (c) -- (i\i);
}

\draw (d) to [bend left] (a);
\draw (d) to [bend right] (b);
\draw (d) to [bend left = 25] (i2);
\draw (d) to [bend right = 44] (i8);
\draw (d) to  (i4);
\draw (d) to [bend right = 43] (i10);

\foreach \i/\nam in {1/$\overline{c}_2$,5/$\overline{c}_1$,7/$\overline{c}_4$,11/$\overline{c}_3$}
{
       \node[fill=red!50, inner sep=-0.2cm] (i\i) at (\i*360/12:2) {\color{black} \nam};   
       \draw (c) -- (i\i);
}
\end{tikzpicture}
\hskip 10 pt
\begin{tikzpicture}[scale=0.45]
\draw (-2.75,4.5) ellipse (1.5 and 0.75);
\draw (2.75,4.5) ellipse (1.5 and 0.75);
\draw (2.75,-4.5) ellipse (1.5 and 0.75);
\draw (-2.75,-4.5) ellipse (1.5 and 0.75);

\draw (-4.5,4.5) ellipse (0.75 and 0.5);
\draw (4.5,4.5) ellipse (0.75 and 0.5);
\draw (-4.5,-4.5) ellipse (0.75 and 0.5);
\draw (4.5,-4.5) ellipse (0.75 and 0.5);

\draw (-5.8,1.5) ellipse (0.75 and 0.5);
\draw (5.8,1.5) ellipse (0.75 and 0.5);
\draw (-5.8,-1.5) ellipse (0.75 and 0.5);
\draw (5.8,-1.5) ellipse (0.75 and 0.5);

\draw (0,6.5) ellipse (1.75 and 0.75);
\draw (0,-6.5) ellipse (1.75 and 0.75);
\draw (-6.5,0) ellipse (0.75 and 2);
\draw (6.5,0) ellipse (0.75 and 2);

\draw (1.5, 5.5) ellipse (0.5 and 1);
\draw (1.5, -5.5) ellipse (0.5 and 1);
\draw (-1.5, 5.5) ellipse (0.5 and 1);
\draw (-1.5, -5.5) ellipse (0.5 and 1);

\draw (5, 2.75) ellipse (0.5 and 1.8);
\draw (5, -2.75) ellipse (0.5 and 1.8);
\draw (-5, 2.75) ellipse (0.5 and 1.8);
\draw (-5, -2.75) ellipse (0.5 and 1.8);

\draw[line width =1, color=red!50] (-3.5, 1.3) ellipse (1.5 and 0.2);
\draw[line width =1, color=red!50] (3.5, 1.3) ellipse (1.5 and 0.2);
\draw[line width =1, color=red!50] (-3.5, -1.3) ellipse (1.5 and 0.2);
\draw[line width =1, color=red!50] (3.5, -1.3) ellipse (1.5 and 0.2);

\draw[line width=1, color=gray] (0, 4.5) ellipse (0.25 and 1.75);
\draw[line width=1, color=gray] (0, -4.5) ellipse (0.25 and 1.75);
\draw[line width=1, color=gray] (4.5, 0) ellipse (1.75 and 0.25);
\draw[line width=1, color=gray] (-4.5, 0) ellipse (1.75 and 0.25);

\draw[line width=1, color=blue!50] (1.6, 2.7) ellipse (0.15 and 1.7);
\draw[line width=1, color=blue!50] (-1.6, 2.7) ellipse (0.15 and 1.7);
\draw[line width=1, color=blue!50] (1.6, -2.7) ellipse (0.15 and 1.7);
\draw[line width=1, color=blue!50] (-1.6, -2.7) ellipse (0.15 and 1.7);

\draw[line width = 1] (0,0) ellipse (2.5 and 2.5);
\draw[line width = 1] (0,0) ellipse (2.2 and 2.2);

\draw[line width = 1] (0,0) ellipse (3.1 and 0.9);
\draw[line width = 1] (0,0) ellipse (0.9 and 3.1);

\node at (0.7,3) {\large $b$};
\node at (-1.87,0.0) {\large $d$};
\node at (2.45,1.65) {\large $c$};
\node at (3,0.8) {\large $a$};
\end{tikzpicture}

\end{center}
\caption{A graph (left), which has a representation with empty ellipses (right) but no representation with convex sets.
}
\label{fig:counterexample2}
\end{figure}

\begin{figure}[ht]
\begin{center}
\includegraphics[scale=1]{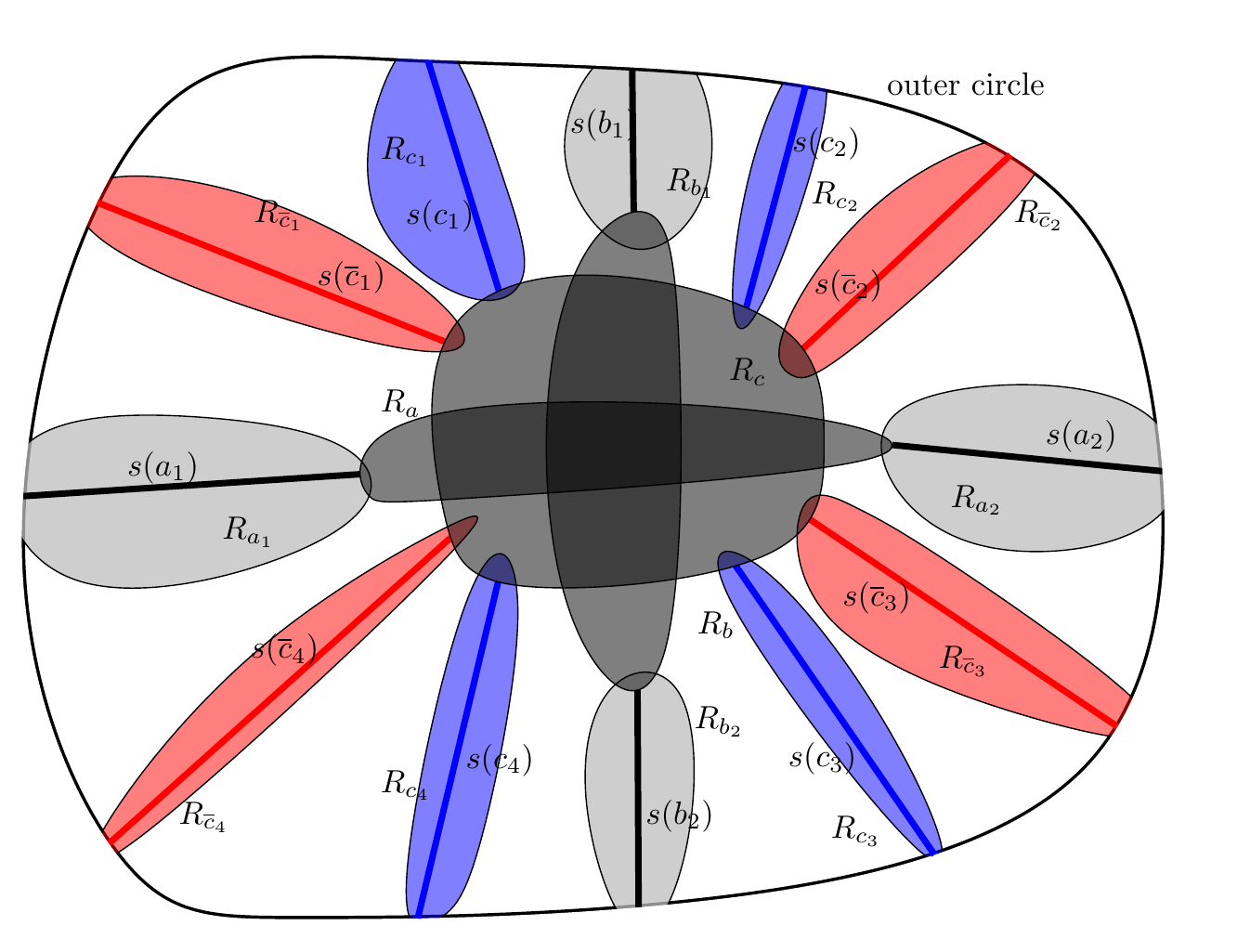}
\end{center}
\caption{A hypothetical representation of $G$ with intersecting convex sets.}
\label{fig:counterexample-construction}
\end{figure}

We refer to Figure~\ref{fig:counterexample-construction} for a picture of the construction.
The union of the representatives of white vertices contains a (closed) Jordan curve that we will call the outer circle.
Let us choose the outer circle in such a way that it intersects the representatives of all connector vertices and such that its intersection with each white or connector vertex representative is a connected set.
The outer circle divides the plane into two faces -- an interior and an exterior.

Since no black vertex is adjacent to a white vertex, the outer circle cannot be crossed by the representative of any black vertex.
Moreover, as black vertices form a connected subgraph, they have to be represented in the same face $F$ (with respect to the outer circle). Since the circle intersects vertex representatives only in connected sets, as we traverse it, it enters and leaves the representatives exactly once. We order the representatives of white vertices in such a way that when traversing the circle in a clockwise fashion the representatives of connector vertices are visited in the following order: $a_1,\overline{c}_1,c_1,b_1,c_2,\overline{c}_2,a_2,\overline{c}_3,c_3,b_2,c_4,\overline{c}_4$, where each $z_i$ for $z \in \{a,b\}$ is a distinct gray neighbor of $z$, each $c_i$ is a distinct blue neighbor of $c$, and each $\overline{c}_i$ is a distinct red neighbor of $c$.

Clearly, each gray neighbor of $a$ must intersect $R_a$ outside $R_a \cap (R_b \cup R_c)$, each gray neighbor of~$b$ must intersect $R_b$ outside $R_b \cap (R_a \cup R_c)$, and each blue or red neighbor of $c$ must intersect $R_c$ outside $R_c \cap (R_a \cup R_b)$.
Thus, some parts of $R_a$, $R_b$, and $R_c$ are exposed (i.e., outside the intersection with the union of the representatives of the remaining two vertices) in the order: $a,c,b,c,a,c,b,c$, as we move along the boundary of $R_a \cup R_b \cup R_c$.

Let $z'$ be a connector vertex and let $z \in \{a,b,c\}$ be its neighbor.
Notice that for each $z'$, the vertex $z$ is uniquely defined.
The set $R_{z'}$ contains a segment $s(z')$, whose one end is on the boundary of $R_z$ and the other end is on the outer circle (recall that all representatives are convex).

Observe that the segments $s(a_1),s(a_2),s(b_1),s(b_2)$ are pairwise disjoint and they separate $F \setminus (R_a \cup R_b)$ into four regions $Q_1,Q_2,Q_3,Q_4$, such that for $i\in [4]$ it holds that $R_{c_i} \cap F \subseteq Q_i$.
Note that one of these regions may be unbounded, if $F$ is the unbounded face of the outer circle.
For each $i \in [4]$, since $c_i$ is non-adjacent to $\overline{c}_i$, we observe that $(R_{c_i} \cap F) \setminus R_c$ is contained in the subregion of $Q_i$ bounded by $s(\overline{c_i}) \cup R_c$.

For $i = \{1,2,3,4\}$, let $p_i$ be a point in $R_d \cap R_{c_i}$.
Such a point exists since $d$ is adjacent to $c_i$.
By convexity of $R_d$, the segment $p_1p_2$ is contained in $R_d$.
Since $d$ is non-adjacent to $c,\overline{c}_1$, and $\overline{c}_2$, we observe that $p_1p_2$ must cross the $s(b_1) \cup R_b$.
As $d$ is non-adjacent to $b_1$, the segment $p_1p_2$ intersects $R_b$.
Let $q_1$ be an intersection point of $p_1p_2$ and $R_b$.
In an analogous way, we define $q_2$ to be an intersection point of $p_3p_4$ and $R_b$.

Let us now consider the segment $q_1q_2$.
By convexity of $R_d$ and $R_b$, we have $q_1q_2 \subseteq R_b \cap R_d$.
The segment $q_1q_2$ must intersect $s(c_1) \cup R_c \cup s(c_2)$, and let $q'$ be this intersection point.
If $q' \in s(c_1) \cup s(c_2)$, we get the contradiction from the fact that $b$ is non-adjacent to $c_1$ and $c_2$.
On the other hand, if $q' \in R_c$, we get the contradiction from the fact that $d$ and $c$ are non-adjacent.

Finally, it is easy to represent $G$ with empty ellipses (see Fig. \ref{fig:counterexample2} right).
\end{proof}
This error and the confusion between filled ellipses and ellipses without their interior has propagated to other more recent papers \cite{Keller17}.
Fortunately, we show that the hardness result does hold for filled ellipses (and filled triangles) with a different reduction.
Our construction can be seen as streamlining the ideas of Ambühl and Wagner \cite{Ambuhl05}.
It is simpler and, in the case of (filled) ellipses, yields a somewhat stronger statement.

\begin{theorem}\label{thm:hardness-filled-ellipses}
  For every $\gamma < 1$, there is a constant $\zeta < 1$ such that \cli on the intersection graphs of filled ellipses has no $\zeta$-approximation algorithm running in subexponential time $2^{n^\gamma}$, unless the ETH fails, even when the ellipses have arbitrarily small eccentricity and the different lengths of the major axis are arbitrarily close.
\end{theorem}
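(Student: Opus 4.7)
The plan is to reduce from \cli on co-2-subdivisions via Corollary~\ref{cor:hardness-co2subd}: it suffices to exhibit, for every pair of parameters $\varepsilon,\varepsilon'>0$, a polynomial-time construction that realizes every co-2-subdivision as the intersection graph of a family of filled ellipses whose eccentricities are at most $\varepsilon$ and whose major-axis lengths all lie in $[1,1+\varepsilon']$. The construction, in the spirit of the proof of Theorem~\ref{thm:bg}, trades the extra ambient dimension used there for the anisotropy that nearly-circular ellipses provide.

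Let $G$ have $n$ vertices $v_1,\ldots,v_n$ and $m$ edges $e_1,\ldots,e_m$. We build $n+2m$ filled ellipses. For each edge $e_k=v_iv_j$ we form edge-ellipses $E^+(e_k)$ and $E^-(e_k)$; the $2m$ centres are regularly spaced on a circle $\mathcal C$ of radius $1$ centred at the origin $O$, with $E^+(e_k)$ and $E^-(e_k)$ placed at antipodal positions. Each edge-ellipse is a near-circle of major-axis length essentially $1$ with its major axis oriented tangentially to $\mathcal C$. Because the elongation is tangential rather than radial, the antipodal pair $E^+(e_k),E^-(e_k)$ (which has radial separation $2$) just fails to intersect, while consecutive edge-ellipses around $\mathcal C$ overlap comfortably. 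The $n$ vertex-ellipses $E(v_1),\ldots,E(v_n)$ are placed with centres clustered close to $O$ on a small arc and given major axes slightly longer than those of the edge-ellipses, oriented radially outward toward $\mathcal C$ so that every $E(v_i)$ reaches to and overlaps with every edge-ellipse on $\mathcal C$. Closeness of the centres ensures that the vertex-ellipses form a clique.

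To encode the remaining adjacencies we mimic the perturbation step from the proof of Theorem~\ref{thm:bg}. For each edge $e_k=v_iv_j$ we push the centre of $E^+(e_k)$ outward along the radial direction pointing away from $E(v_i)$, and symmetrically $E^-(e_k)$ outward along the direction pointing away from $E(v_j)$, by an infinitesimal amount. Because this outward push is almost opposite to the direction along which $E(v_i)$ (resp. $E(v_j)$) extends its major axis to reach $\mathcal C$, the tiny motion is enough to turn the near-tangency with $E(v_i)$ (resp. $E(v_j)$) into a small gap, while preserving all other intersections. The resulting intersection graph is precisely the co-2-subdivision $\overline{H}$ of $G$.

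The main obstacle is the quantitative multi-scale verification that the three regimes (antipodal non-intersection of every $E^+(e_k),E^-(e_k)$; non-intersection of $E^\pm(v_iv_j)$ with the single ``wrong'' vertex-ellipse; intersection of every other prescribed pair) can be enforced simultaneously while eccentricities stay below $\varepsilon$ and major-axis lengths within $[1,1+\varepsilon']$. The parameters must be tuned in a careful order: first the tangential/radial ratio of the edge-ellipses, which is controlled by the angular spacing $\pi/m$ of the $2m$ centres on $\mathcal C$ (in analogy with the estimate $2-\eta \leqslant \pi^2/(4m^2)$ after Theorem~\ref{thm:bg}); then the radii and placement of the vertex-ellipses; then the magnitude of the radial perturbations; and finally a polynomial-bit rational grid to host all coordinates. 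Polynomiality of the construction, combined with Corollary~\ref{cor:hardness-co2subd}, yields the claimed $(1-\zeta)$-inapproximability under the ETH as well as NP-hardness.
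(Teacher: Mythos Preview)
Your construction has a genuine geometric obstruction at the ``selective non-intersection'' step. For the tangential-elongation trick to realize a clique-minus-a-matching among the $2m$ edge-ellipses, work out the constraint: with centres on $\mathcal C$ of radius $R$, tangential semi-axis $a$ and radial semi-axis $b<a$, the antipodal pair is disjoint iff $b<R$, while a nearly-antipodal pair (angular gap $\pi/m$) intersects only if $R-b=O(R/m^2)$. So necessarily $b\approx R$, and every edge-ellipse extends radially inward to distance $R-b\approx 0$ from the origin; in particular all of them nearly contain $O$. Now place the vertex-ellipses near $O$ with semi-axes of the same order (you insist all major axes lie in $[1,1+\varepsilon']$). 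Each vertex-ellipse then overlaps each edge-ellipse on a region of width $\Theta(R)$, not at a near-tangency. An ``infinitesimal'' outward push of $E^+(v_iv_j)$ cannot create a gap with $E(v_i)$; a push large enough to create a gap ($\Theta(R)$) severs the intersection with \emph{every} vertex-ellipse, since from the edge-ellipse's vantage point all vertex-ellipses sit at essentially the same spot and are indistinguishable by direction. The extra ambient dimension in Theorem~\ref{thm:bg} is precisely what allowed the vertex-balls to be spread out (along the $z$-axis, with individually tuned radii) so that a directional push could single one out; your 2D layout has no room for this.

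The paper obtains the selectivity by a completely different mechanism. The vertex objects are actual disks $\mathcal D_i$ passing through the points $p_i$ of a strictly convex chain; the tangent line $\tau_i^+$ to $\mathcal D_i$ at $p_i$ separates $\mathcal D_i$ from a half-plane into which every other $\mathcal D_j$ protrudes (by convexity of the chain). The edge-ellipse $\mathcal E_k^+$ is then chosen tangent to $\tau_i^+$ on the far side, so it misses $\mathcal D_i$ and hits every other $\mathcal D_j$ automatically, with no perturbation argument. The anti-matching among the $\mathcal E_k^\pm$ is handled by a second family of nearly-horizontal tangent lines $\ell_k^\pm$ through a common point. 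A careful placement (the $30^\circ$ angle and the point $P=(\sqrt 3/2,1/2)$) makes all these two-tangent ellipses nearly unit disks.
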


This is in sharp contrast with our subexponential algorithm and PTAS when the eccentricity is 0 (case of disks).
For any $\varepsilon > 0$, if the eccentricity is only allowed to be at most $\varepsilon$, a SUBEXPAS is very unlikely.
This result subsumes \cite{ceroi} (where NP-hardness is shown for connected shapes contained in a disk of radius 1 and containing a concentric disk of radius $1-\varepsilon$ for arbitrarily small $\varepsilon > 0$) and corrects \cite{Ambuhl05}.
We show the same hardness for the intersection graphs of filled triangles.

\begin{theorem}\label{thm:hardness-filled-triangles}
 For every $\gamma < 1$, there is a constant $\zeta < 1$ such that \cli on the intersection graphs of filled triangles has no $\zeta$-approximation algorithm running in subexponential time $2^{n^\gamma}$, unless the ETH fails.
\end{theorem}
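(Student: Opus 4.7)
The plan is to invoke Corollary~\ref{cor:hardness-co2subd} by showing that every co-2-subdivision can be realized as the intersection graph of a family of filled triangles. Once this is established, the hardness of approximation transfers directly, since the reduction from \mis on bounded-degree graphs to the co-2-subdivision in Theorem~\ref{thm:hardness-2subd} is already linear in the number of vertices.

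Given $G$ with $n$ vertices $v_1,\dots,v_n$ and $m$ edges $e_1,\dots,e_m$, I would construct filled triangles $T(v_i)$ for each vertex and $T^+(e_k),T^-(e_k)$ for the two endpoints of each edge, following a scheme analogous to the one used for filled ellipses in Theorem~\ref{thm:hardness-filled-ellipses}. Concretely, I would place the $n$ vertex triangles as long, thin isoceles needles radiating from a small central region, each needle aligned with a direction $\theta_i=2\pi i/n$ and having angular opening $o(1/n)$; together they pairwise intersect through their shared central core. The $2m$ edge triangles would be arranged in an annular region around the core so that, by default, each one passes through the central region and thereby meets every vertex needle and every other edge triangle. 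For an edge $e=v_iv_j$, the triangle $T^+(e)$ would be shaped so that its body, while still touching the core, dodges the single needle $T(v_i)$ by slipping into the thin angular gap flanking $\theta_i$, and symmetrically $T^-(e)$ would dodge $T(v_j)$. The two triangles $T^+(e)$ and $T^-(e)$ of a matched pair would be placed on opposite sides of the center, so as to realize the single missing edge of the matching.

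The main obstacle is the delicate geometric balancing forced on each edge triangle: it must simultaneously miss exactly one vertex needle, intersect the other $n-1$ needles, meet every other edge triangle, and avoid its matched sibling. Straight-sided triangles leave less room for manoeuvre than smooth ellipses, but this can be handled by taking the vertex needles thin and sharp enough that the angular gap between any two consecutive needles is wide enough to accommodate the narrow body of an edge triangle, while still letting its inward apex reach into the common core intersected by every other needle. Correctness of the construction then reduces to a routine, if notation-heavy, case analysis on the angular positions and radial extents of the triangles, and the resulting representation has size linear in $n+m$, so Theorem~\ref{thm:hardness-2subd} and Corollary~\ref{cor:hardness-co2subd} yield the stated subexponential inapproximability under ETH.
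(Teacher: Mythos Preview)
Your overall plan—show that every co-2-subdivision is a filled-triangle intersection graph, then invoke Corollary~\ref{cor:hardness-co2subd}—is exactly what the paper does. The gap is in the radial geometry. You want each edge triangle $T^+(e)$ (for $e=v_iv_j$) to avoid $T(v_i)$ by slipping into the angular gap at~$\theta_i$ while its ``inward apex reach[es] into the common core intersected by every other needle''. But with thin needles pointing in $n$ equispaced directions and meeting at the centre, no such point exists once $n\geqslant 5$: near the centre each needle lies in a closed half-plane through the origin, and for $n\geqslant 5$ the intersection of any $n-1$ of these half-planes is just the origin, which belongs to the omitted needle as well. One can instead take a whole \emph{line} (a slight shift of the line through the origin orthogonal to $\theta_i$), and that line does cross every needle except $T(v_i)$; but all such lines pass very close to the origin and pairwise cross there, leaving no room to realise the non-edge $T^+(e)T^-(e)$ while keeping all other edge--edge intersections. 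Your suggestion of ``placing the matched pair on opposite sides of the centre'' is not available either, since the positions of $T^{\pm}(e)$ are already dictated by the endpoints $i$ and $j$ of~$e$.

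The paper avoids a radial layout altogether. The vertex triangles $\delta_h=p_hq_hx$ all share a common apex~$x$ (this alone yields the vertex clique), with $p_1,\dots,p_n$ on a strictly convex monotone chain and $q_h$ the mirror image of $p_h$ across the $x$-axis. Convexity of the chain is the key: the secant line $\ell(p_{i-1},p_{i+1})$ separates $p_i$ from every other $p_h$, so a long thin edge triangle $\Delta_k^+$ with one side along this secant crosses every $\delta_h$ with $h\neq i$ and misses~$\delta_i$. The matching non-edge $\Delta_k^+\Delta_k^-$ is then realised by giving their \emph{other} long sides a common slope $\varepsilon k$ (hence parallel and disjoint), while distinct edges get distinct slopes so that all remaining pairs of edge triangles do cross. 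This secant-on-a-convex-chain trick, together with the slope device for the matching, is the ingredient your proposal is missing.
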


Once again, to show Theorem~\ref{thm:hardness-filled-ellipses} and Theorem~\ref{thm:hardness-filled-triangles}, it is sufficient to show that intersection graphs of (filled) ellipses or of (filled) triangles contain all co-2-subdivisions.
We start with (filled) triangles since the construction is simpler.

\begin{lemma}\label{lem:triangles-co2subd}
The class of intersection graphs of filled triangles contains all co-2-subdivisions.
\end{lemma}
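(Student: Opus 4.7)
The co-2-subdivision $\overline{H}$ of a graph $G$ with $n$ vertices and $m$ edges is obtained from $K_{n+2m}$ by removing $3m$ edges: for each $e_k = v_iv_j$ of $G$, we delete the three edges $\{v_i, v^+(e_k)\}$, $\{v^+(e_k), v^-(e_k)\}$, and $\{v^-(e_k), v_j\}$. So we need to realize a very dense graph as an intersection graph of filled triangles, in which almost every pair of triangles intersects and only $3m$ prescribed pairs must be disjoint. My plan is to use long, arbitrarily thin filled triangles---essentially fattened line segments---so that a pair of triangles intersects iff their spine segments cross. The task then reduces to finding an arrangement of $n+2m$ line segments in the plane whose non-crossing pairs are exactly those $3m$ specified by the 2-subdivision; once the segments are fixed, a sufficiently small uniform thickening yields the triangle representation.

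To construct the spine arrangement I imitate the antipodal idea used in the proof of Theorem~\ref{thm:4udg}. First I place, on a circle $C$ of radius $1$, the $2m$ points $q^+(e_k)$ and $q^-(e_k)$ with $q^+(e_k)$ and $q^-(e_k)$ diametrically opposite, and the pairs arranged consecutively around $C$. The spine for $v^+(e_k)$ is then taken to be a chord starting at $q^+(e_k)$ and ending just shy of $q^-(e_k)$, tilted by a small angle so that it misses the spine of $v^-(e_k)$ near the center (this single tilt breaks only the intersection with its antipodal partner). All other edge-vertex spine pairs still cross inside $C$. Next I place the $n$ original-vertex spines as chords of $C$ that all cross each other in a common small central region. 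Each original-vertex chord for $v_i$ is routed so that it enters and exits $C$ very close to the endpoints $q^+(e_k)$ and $q^-(e_k)$ of each edge $e_k$ incident to $v_i$, and strictly inside the arcs between consecutive anchor points otherwise; this causes its spine to miss exactly the spines $v^+(e_k)$ and $v^-(e_k)$ for $e_k$ incident to $v_i$ (those two endpoints lie on the ``wrong'' side of $v_i$'s chord), while crossing every other edge-vertex spine and every other original-vertex spine.

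The main obstacle will be verifying that all $3m$ required non-crossings can be achieved simultaneously without accidentally breaking any of the required crossings, since the constraints on one original-vertex spine involve the anchors of every edge incident to $v_i$ and must be compatible with the constraints coming from other original vertices. I expect this to be handled by a direct perturbation argument: each spine has two degrees of freedom (its two endpoints on $C$), giving $2(n+2m)$ parameters against $3m$ equality-type constraints which are easily satisfied by generic choices, and the remaining required crossings are open conditions preserved under small perturbations. Once the segments are placed, choosing a thickness $\tau$ smaller than the minimum distance between any two non-crossing spines yields filled triangles whose intersection graph is exactly $\overline{H}$, which completes the proof.
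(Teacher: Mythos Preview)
Your proposal has a genuine gap that cannot be repaired by the perturbation argument you sketch.

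First a small but telling error: in the co-$2$-subdivision, for an edge $e_k=v_iv_j$ the three deleted edges are $\{v_i,v^+(e_k)\}$, $\{v^+(e_k),v^-(e_k)\}$, $\{v^-(e_k),v_j\}$. Thus $v_i$ must miss only the spine of $v^+(e_k)$, not both $v^+(e_k)$ and $v^-(e_k)$ as you write. The spine of $v_i$ should miss exactly $\deg(v_i)$ edge-vertex spines, one per incident edge.

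The real problem is geometric. Your edge-vertex spines are near-diameters of $C$ (they run from $q^+(e_k)$ almost to its antipode), so every one of them passes arbitrarily close to the center. Your original-vertex spines are also chords of $C$ that ``all cross each other in a common small central region''. But then every original-vertex spine also passes near the center and therefore meets \emph{every} edge-vertex spine there. There is no slack in this picture that lets the spine of $v_i$ avoid even a single prescribed near-diameter while still crossing all the others and all the other original-vertex chords. Your sentence ``it enters and exits $C$ very close to the endpoints $q^+(e_k)$ and $q^-(e_k)$ of each edge $e_k$ incident to $v_i$'' is not implementable: a chord has exactly two endpoints on $C$, so it can be placed near at most one antipodal pair of anchors, whereas $v_i$ may have arbitrarily large degree. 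The parameter count ($2(n+2m)$ degrees of freedom versus $3m$ constraints) is not a proof---these are not independent linear constraints, and indeed the configuration space you describe is empty once a vertex has degree at least~$2$.

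The paper's construction sidesteps this by \emph{not} making the vertex objects thin. The vertex triangles $\delta_i=p_iq_ix$ all share the apex $x$ (hence form a clique), and the points $p_1,\ldots,p_n$ lie on a strictly convex arc. The crucial geometric fact is that the secant line $\ell(p_{i-1},p_{i+1})$ separates $p_i$ from all other $p_j$; an edge triangle $\Delta^+_k$ for $e_k=v_iv_j$ is built with one side on this line, so it crosses every $\delta_{i'}$ except $\delta_i$. This convex-position/tangent-line mechanism is precisely what lets a \emph{single} edge object avoid exactly one vertex object regardless of degrees, and it has no analogue in your central-crossing layout.
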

  
\begin{proof}
Let $G$ be any graph with $n$ vertices $v_1, \ldots, v_n$ and $m$ edges $e_1,\ldots,e_m$, and $H$ be its co-2-subdivision.  
  We start with $n+2$ points $p_0, p_1, p_2, \ldots, p_n, p_{n+1}$ forming a convex monotone chain.
  Those points can be chosen as $p_i := (i,p(i))$ where $p$ is the equation of a positive parabola taking its minimum at $(0,0)$.
  For each $i \in [0,n+1]$, let $q_i$ be the reflection of $p_i$ by the line of equation $y = 0$.
  Let $x := (n+1,0)$.
  For each vertex $v_i \in V(G)$ the filled triangle $\delta_i := p_iq_ix$ encodes $v_i$.
  Observe that the points $p_0=q_0$, $p_{n+1}$, and $q_{n+1}$ will only be used to define the filled triangles encoding edges. 

  To encode (the two new vertices of) a subdivided edge $e_k=v_iv_j$, we use two filled triangles $\Delta^+_k$ and $\Delta^-_k$. The triangle 
  $\Delta^+_k$ (resp. $\Delta^-_k$) has an edge which is supported by $\ell(p_{i-1},p_{i+1})$ (resp. $\ell(q_{j-1},q_{j+1})$) and is prolonged so that it crosses the boundary of each $\delta_{i'}$ but $\delta_i$ (resp. but $\delta_j$).
  A second edge of $\Delta^+_k$ and $\Delta^-_k$ are parallel and make with the horizontal a small angle $\varepsilon k$, where $\varepsilon> 0$ is chosen so that $\varepsilon m$ is smaller than the angle formed by $\ell(p_0,p_1)$ with the horizontal line.
  Those almost horizontal edges intersect for each pair $\Delta^+_{k'}$ and $\Delta^-_{k''}$ with $k' \neq k''$ intersects close to the same point.
  Filled triangles $\Delta^+_k$ and $\Delta^-_k$ do not intersect.
  See Figure~\ref{fig:triangles-co2subd} for the complete picture.

    It is easy to check that the intersection graph of $\{\delta_i\}_{i \in [n]} \cup \{\Delta^+_k,\Delta^-_k\}_{k \in [m]}$ is $H$.
    The family $\{\delta_i\}_{i \in [n]}$ forms a clique since they all contain for instance the point $x$.
    The filled triangle $\Delta^+_k$ (resp. $\Delta^-_k$) intersects every other filled triangles except $\Delta^-_k$ (resp. $\Delta^+_k$) and $\delta_i$ (resp. $\delta_j$) with $e_k=v_iv_j$.
    
    One may observe that no triangle is fully included in another triangle.
    So the construction works both as the intersection graph of filled triangles \emph{and} triangles without their interior.
    The edges of $\Delta^+_k$ and $\Delta^-_k$ crossing the boundary of all but one $\delta_i$ can be arbitrary prolonged to the right.
    The almost horizontal edges of these triangles can be arbitrary prolonged to the left.
    Thus, the triangles can all be made isosceles. 
   \end{proof}
  
    \begin{figure}[h!]
      \centering
      \begin{tikzpicture}[
          inv/.style={opacity=0},
          dot/.style={fill,circle,inner sep=-0.01cm},
          vert/.style={draw, fill=red, opacity=0.2},
          verta/.style={draw, fill=blue, opacity=0.2},
          vertb/.style={draw, fill=green, opacity=0.2},
          extended line/.style={shorten >=-#1,shorten <=-#1},
          extended line/.default=1cm,
          one end extended/.style={shorten >=-#1},
          one end extended/.default=1cm,
        ]
        
        \def\n{5}
        \def\a{0.08}
        \def\b{0.2}
        \def\c{0.02}
        \def\d{-5}
        \def\e{5}
        \def\f{0.05}
        \coordinate (su) at (\d,\c) ;
        \coordinate (sd) at (\d,-\c) ;
        \coordinate (eu) at (\e,\c) ;
        \coordinate (ed) at (\e,-\c) ;

        \coordinate (su2) at (\d,-0.1) ;
        \coordinate (eu2) at (\e,0.4) ;
        \coordinate (sd2) at (\d,-0.1 - 2 * \c) ;
        \coordinate (ed2) at (\e,0.4 - 2 * \c) ;
        \coordinate (x) at (\n+1,0) ;
        \node[dot] at (x) {} ;
        
        \foreach \i in {1,...,5}{
          \coordinate (p\i) at (\i,\i * \i * \a + \i * \b) ;
          \coordinate (q\i) at (\i,- \i * \i * \a - \i * \b) ;
          \coordinate (pd\i) at (\i,\i * \i * \a + \i * \b - \f) ;
          \coordinate (qu\i) at (\i,- \i * \i * \a - \i * \b + \f) ;
        }
        \foreach \i in {1,...,5}{
          \node[dot] at (p\i) {} ;
          \node[dot] at (q\i) {} ;
          \draw[vert] (p\i) -- (q\i) -- (x) -- cycle ;
        }

        \path[name path=J1,overlay] (su) -- (eu)--([turn]0:5cm);
        \path[name path=K1,overlay] (pd2) -- (0,0)--([turn]0:5cm);
        \path[name path=J2,overlay] (sd) -- (ed)--([turn]0:5cm);
        \path[name path=K2,overlay] (qu5) -- (qu3)--([turn]0:5cm);

        \path [name intersections={of=J1 and K1,by={I1}}];
        \path [name intersections={of=J2 and K2,by={I2}}];

        \coordinate (e1) at (I1) ;
        \coordinate (e2) at (I2) ;

        \coordinate (c1) at ( $ (pd2)!-2!(e1) $ ) ;
        \coordinate (c2) at ( $ (qu5)!-0.1!(e2) $ ) ;

        \path[name path=J3,overlay] (su2) -- (eu2)--([turn]0:5cm);
        \path[name path=K3,overlay] (pd3) -- (pd1)--([turn]0:5cm);
        \path[name path=J4,overlay] (sd2) -- (ed2)--([turn]0:5cm);
        \path[name path=K4,overlay] (qu4) -- (qu2)--([turn]0:5cm);

        \path [name intersections={of=J3 and K3,by={I3}}];
        \path [name intersections={of=J4 and K4,by={I4}}];

        \coordinate (e3) at (I3) ;
        \coordinate (e4) at (I4) ;
        \coordinate (c3) at ( $ (pd3)!-1.4!(e3) $ ) ;
        \coordinate (c4) at ( $ (qu4)!-0.35!(e4) $ ) ;

         \foreach \i/\j/\k in {1/su/vertb,2/sd/vertb,3/su2/verta,4/sd2/verta}{
            \draw[\k] (\j) -- (c\i) -- (e\i) -- cycle ;
        }
        
      \end{tikzpicture}
      \caption{A co-2-subdivision of a graph with $5$ vertices (in red) represented with triangles. Only two edges are shown: one between vertices $1$ and $4$ (green) and one between vertices $2$ and $3$ (blue).}
      \label{fig:triangles-co2subd}
    \end{figure}

  We use the same ideas for the construction with filled ellipses.
  Two tangents of the ellipse will play the role of the two important sides of the triangles encoding edges of the initial graph $G$.   

  \begin{lemma}\label{lem:ellipses-co2subd}
    The class of intersection graphs of filled ellipses contains all co-2-subdivisions.
  \end{lemma}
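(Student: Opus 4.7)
The plan is to adapt the filled-triangle construction from Lemma~\ref{lem:triangles-co2subd}, replacing each filled triangle by a filled ellipse with essentially the same incidence pattern. I reuse the same scaffolding: the points $p_0, p_1, \ldots, p_{n+1}$ placed on a positive parabola above the horizontal axis, their reflections $q_0, \ldots, q_{n+1}$ below, and the apex point $x = (n+1, 0)$. Each vertex $v_i$ is represented by a filled ellipse $\mathcal{E}_i$ inscribed in the triangle $\delta_i = p_i q_i x$ and chosen so that all the $\mathcal{E}_i$ contain a common small neighbourhood of $x$; this guarantees that $\{\mathcal{E}_i\}_{i \in [n]}$ forms a clique, the analogue of the pencil of triangles through $x$.

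Each subdivided edge $e_k = v_i v_j$ is represented by two filled ellipses $\mathcal{E}_k^+$ and $\mathcal{E}_k^-$ that play the roles of $\Delta_k^+$ and $\Delta_k^-$. The crucial feature I preserve is that $\mathcal{E}_k^+$ is tangent to the line $\ell(p_{i-1}, p_{i+1})$ at a point arbitrarily close to $p_i$ and lies on the side of this line that contains all the other points $p_{i'}$. By convexity of the monotone chain $p_0, \ldots, p_{n+1}$, this tangent separates $p_i$ from every other chain point, so a sufficiently elongated ellipse $\mathcal{E}_k^+$ on the ``wrong'' side of the tangent will cross every $\mathcal{E}_{i'}$ with $i' \neq i$ and miss $\mathcal{E}_i$ only. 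The construction of $\mathcal{E}_k^-$ is the mirror version along the lower chain $q_0, \ldots, q_{n+1}$. The second relevant tangent of $\mathcal{E}_k^{\pm}$ is chosen almost horizontal, making an angle $\varepsilon k$ with the $x$-axis, exactly as in the triangle construction.

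For the remaining adjacencies I would repeat the triangle analysis: the pair $(\mathcal{E}_k^+, \mathcal{E}_k^-)$ stays disjoint because the two ellipses lie on opposite sides of a thin horizontal strip around $y = 0$, while for $k \neq k'$ the ellipses $\mathcal{E}_k^+$ and $\mathcal{E}_{k'}^-$ meet near this strip thanks to the small slope difference $\varepsilon(k - k')$. The main obstacle is realising the edge-sets as genuine ellipses rather than triangles: specifying two tangent lines together with the two points of tangency leaves a one-parameter family of admissible ellipses, and one has to pick a member of this family that is simultaneously thin enough to stay away from $\mathcal{E}_i$ (respectively $\mathcal{E}_j$), yet sufficiently extended in the other direction to reach and cross every $\mathcal{E}_{i'}$ with $i' \neq i$ (respectively $i' \neq j$). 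This is a continuity/limit argument: as the ellipse is made narrower around its major axis, it approaches the triangular strip used in Lemma~\ref{lem:triangles-co2subd}, and by compactness of the (finite) list of required crossings and non-crossings, some ellipse in the family realises all of them. Once this is verified, the intersection graph of $\{\mathcal{E}_i\}_{i \in [n]} \cup \{\mathcal{E}_k^+, \mathcal{E}_k^-\}_{k \in [m]}$ is exactly the co-2-subdivision $H$ of $G$, which proves the lemma.
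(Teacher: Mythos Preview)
Your outline follows the paper's strategy---convex chain, vertex objects forming a clique, edge ellipses tangent to the secant $\ell(p_{i-1},p_{i+1})$ and to a near-horizontal line---but the vertex construction has a real gap. An ellipse \emph{inscribed} in $\delta_i = p_iq_ix$ cannot contain the corner $x$, and more seriously, the two requirements you impose on the $\mathcal{E}_i$ pull in opposite directions. If the vertex ellipses cluster near $x$ (to guarantee the clique), they all sit on the \emph{same} side of every secant $\ell(p_{i-1},p_{i+1})$; an edge ellipse $\mathcal{E}_k^+$ confined to the opposite side of that line will then miss not only $\mathcal{E}_i$ but every $\mathcal{E}_{i'}$. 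If instead each $\mathcal{E}_i$ reaches up near its own $p_i$, the clique property is no longer obvious. The limit argument does not rescue this: a thin ellipse degenerates to a segment, not to a triangle, so you cannot approximate both relevant sides of $\Delta_k^+$ at once and simply inherit the incidence pattern of Lemma~\ref{lem:triangles-co2subd}.

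The paper resolves this tension by a different choice of vertex objects. Each $v_i$ is represented by a \emph{disk} $\mathcal{D}_i$ whose boundary passes through both $p_i$ and $q_i$ and whose tangent at $p_i$ has the direction of $\ell(p_{i-1},p_{i+1})$ (here the chain is taken very tight and well above the $x$-axis, unlike in the triangle lemma). These disks are large and mutually overlapping, so the clique is automatic; yet the tangent line at $p_i$, shifted infinitesimally outward to $\tau_i^+$, separates $\mathcal{D}_i$ from every other $\mathcal{D}_{i'}$, since $p_{i'}\in\mathcal{D}_{i'}$ lies on the far side of the secant by convexity of the chain. The edge ellipse $\mathcal{E}_k^+$ is then specified by two tangencies---one to $\tau_i^+$ at the projection of $p_i$, one to a near-horizontal line $\ell_k^+$ near the origin---so it misses $\mathcal{D}_i$ and meets all the others by construction. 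This tangent-at-$p_i$ device is the missing ingredient in your sketch, and it also yields the stronger statement of Theorem~\ref{thm:hardness-filled-ellipses}: all the ellipses can be taken with eccentricity arbitrarily close to~$0$ and major axes arbitrarily close to~$1$.
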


  \begin{proof}
    Let $G$ be any graph with $n$ vertices $v_1, \ldots, v_n$ and $m$ edges $e_1,\ldots,e_m$, and $H$ be its co-2-subdivision. 
    We start with the convex monotone chain $p_0, p_1, p_2, \ldots, p_{n-1}, p_n, p_{n+1}$, only the gap between $p_i$ and $p_{i+1}$ is chosen very small compared to the positive $y$-coordinate of $p_0$.
    The disks $\mathcal D_i$ encoding the vertices $v_i \in G$ must form a clique.
    We also take $p_0$ with a large $x$-coordinate.
    For $i \in [0,n+1]$, $q_i$ is the symmetric of $p_i$ with respect to the $x$-axis.
    For each $i \in [n]$, we define $\mathcal D_i$ as the disk whose boundary is the unique circle which goes through $p_i$ and $q_i$, and whose tangent at $p_i$ has the direction of $\ell(p_{i-1},p_{i+1})$.
    It can be observed that, by symmetry, the tangent of $\mathcal D_i$ at $q_i$ has the direction of $\ell(q_{i-1},q_{i+1})$.

    Let us call $\tau^+_i$ (resp. $\tau^-_i$) the tangent of $\mathcal D_i$ at $p_i$ (resp. at $q_i$) very slightly translated upward (resp. downward).
    The tangent $\tau^+_i$ (resp. $\tau^-_i$) intersects every disks $\mathcal D_{i'}$ but $\mathcal D_i$ (see Figure~\ref{fig:disks-all-but-one}).
    Let denote by $p'_i$ (resp. $q'_i$) be the projection of $p_i$ (resp. $q_i$) onto $\tau^+_i$ (resp. onto $\tau^-_i$).  
\begin{figure}
  \centering
  \begin{tikzpicture}[scale=1.7, xscale=-1,
      vert/.style={draw, fill=red, opacity=0.2},
      dot/.style={fill,circle,inner sep=-0.03cm},
    ]
    \def\s{0.5}
    \def\t{0.045}
    \foreach \i in {0,...,3}{
      \coordinate (c\i) at (- \i * \s,0) ;
      \pgfmathsetmacro\r{1+\i * \i * \t} ;
      \draw[vert] (c\i) circle (\r) ;
      \pgfmathsetmacro\a{90.1 - \i * 10}
      \pgfmathsetmacro\x{- \i * \s + \r * cos(\a)}
      \pgfmathsetmacro\y{ \r * sin(\a)}
      \pgfmathsetmacro\ym{- \r * sin(\a)}
      \pgfmathsetmacro\j{1.1 + \i * 0.4}
      \pgfmathsetmacro\jj{2.6 - \i * 0.2}
      \pgfmathtruncatemacro\ipo{\i+1}
      \coordinate (p\i) at (\x, \y) ;
      \node at (\x, \y+0.1) {$p_\ipo$} ;
      \coordinate (e\i) at (\x, \ym) ;
      \node at (\x, \ym-0.1) {$q_\ipo$};
      \path[overlay] (c\i) -- (p\i) -- ([turn]-90:\j cm) node (q\i) {} ;
      \path[overlay] (c\i) -- (p\i) -- ([turn]90:\jj cm) node (qq\i) {} ;
      \path[overlay] (c\i) -- (e\i) -- ([turn]90:\j cm) node (f\i) {} ;
      \path[overlay] (c\i) -- (e\i) -- ([turn]-90:\jj cm) node (ff\i) {} ;
    }
    \node at (c2) {$\mathcal D_3$} ; 
    \draw[blue,very thick] ($ (q2) + (0,-0.02) $) -- ($ (qq2) + (0,0.05) $) ;
    \draw[red,very thick] (c2) circle (1+4*\t) ;

    \node at (-2,1.85) {$\tau^+_3$}; 

    \foreach \i in {0,...,3}{
      \node[dot] at (p\i) {} ;
      \node[dot] at (e\i) {} ;
    }
  \end{tikzpicture}
   \caption{The blue line intersects every red disk but the third one.}
  \label{fig:disks-all-but-one}
\end{figure}
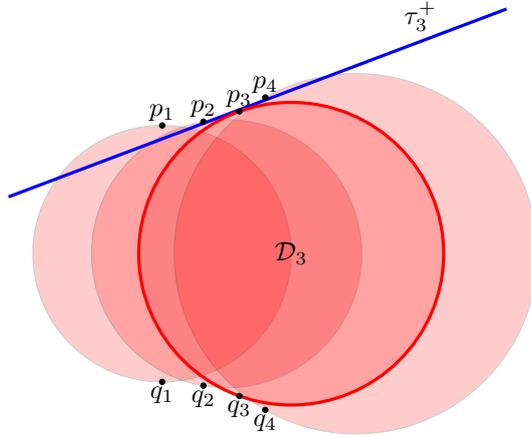
For each $k \in [m]$, let $\ell_k$ be the line crossing the origin $O=(0,0)$ and forming with the horizontal an angle $\varepsilon k$, where $\varepsilon k$ is smaller than the angle formed by $\ell(p_0,p_1)$ with the horizontal.
Let $\ell^+_k$ (resp. $\ell^-_k$) be $\ell_k$ very slightly translated upward (resp. downward). 
  To encode an edge $e_k=v_iv_j$, we have two filled ellipses $\mathcal E^+_k$ and $\mathcal E^-_k$. The ellipse
  $\mathcal E^+_k$ (resp. $\mathcal E^-_k$) is defined as being tangent with $\tau^+_i$ at $p'_i$ (resp. with $\tau^-_j$ at $q'_j$)
  and tangent at $\ell^+_k$ (resp. $\ell^-_k$) at the point of $x$-coordinate $0$ (thus very close to $O$), where $e_k=v_iv_j$.
  The proof that the intersection graph of $\{\mathcal D_i\}_{i \in [n]} \cup \{\mathcal E^+_k,\mathcal E^-_k\}_{k \in [m]}$ is $H$ is similar to the case of filled triangles.

  As no ellipse is fully contained in another ellipse, this construction works for both filled ellipses \emph{and} ellipses without their interior.
  
  We place $p_0$ at $P:=(\sqrt 3/2,1/2)$ and make the distance between $p_i$ and $p_{i+1}$ very small compared to 1.
  All points $p_i$ are very close to $P$ and all points $q_i$ are very close to $Q:=(\sqrt 3/2,-1/2)$.
  This makes the radius of all disks $\mathcal D_i$ arbitrarily close to 1.
  We choose the convex monotone chain $p_0, \ldots, p_{n+1}$ so that $\ell(p_0,p_1)$ forms a 30-degree angle with the horizontal.
As, the chain is strictly convex but very close to a straight-line, $\ell(p_0,p_1) \approx \ell(p_n,p_{n+1}) \approx \ell(p_i,p_{i+1}) \approx \ell(p_i,p_{i+2})$. Thus, all those lines almost cross $P$ and form an angle of roughly 30-degree with the horizontal.
  The same holds for points $q_i$.
  For the choice of an elliptical disk tangent to the $x$-axis at $O$ and to a line with a 60-degree slope at $P$ (resp. at $Q$), we take a disk of radius 1 centered at $(0,1)$ (resp. at $(0,-1)$); see Figure~\ref{fig:almost-disks}.
  
  \begin{figure}[h!]
  \centering
  \begin{tikzpicture}[
      scale=1.4,
      vert/.style={draw, fill=red, opacity=0.2},
      verta/.style={draw, fill=blue, opacity=0.2},
      vertb/.style={draw, fill=blue, opacity=0.2},
      dot/.style={fill,circle,inner sep=-0.03cm}]
      \draw[vertb] (0,0) circle (1) ;
      \node at (0,0) {$\mathcal E^-_k$} ;
      \coordinate (cm) at (0,0) ;
      \draw[verta] (0,2) circle (1) ;
      \node at (0,2) {$\mathcal E^+_k$} ;
      \coordinate (cp) at (0,2) ;
      \draw[vert] (1.73,1) circle (1) ;
      \node at (1.73,1) {$\mathcal D_i$} ;
      \coordinate (cv) at (1.73,1) ;
      \node[dot] at (1.73/2,1.5) {} ;
      \node[dot] at (1.73/2,0.5) {} ;
      \node[dot] at (0,1) {} ;
      \node at (1.73/2,1.75) {$P$} ;
      \node at (1.73/2,0.25) {$Q$} ;
      \node at (-0.25,1) {$O$} ;
      \coordinate (P) at (1.73/2,1.5) ;
      \coordinate (Q) at (1.73/2,0.5) ;
      \coordinate (O) at (0,1) ;
      \draw[opacity=0.3] (O) -- (P) -- (Q) -- cycle ;
      \draw[opacity=0.3] (cm) -- (O) -- (cp) -- (P) -- (cv) -- (Q) -- cycle ;
    \end{tikzpicture}
    \caption{The layout of the disks $\mathcal D_i$, and the elliptical disks $\mathcal E^+_k$ and $\mathcal E^-_k$.}
    \label{fig:almost-disks}
  \end{figure}
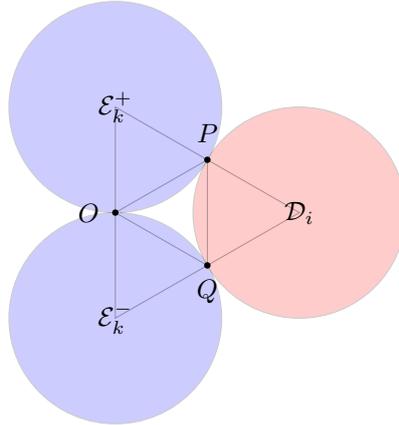

  The acute angle formed by $\ell_1$ and $\ell_m$ (incident in $O$) is made arbitrarily small so that, by continuity of the elliptical disk defined by two tangents at two points, the filled ellipses $\mathcal E^+_k$ and $\mathcal E^-_k$ have eccentricity arbitrarily close to 0 and major axis arbitrarily close to 1. 
  \end{proof}

  In the construction, we made \emph{both} the eccentricity of the (filled) ellipses arbitrarily close to 0 and the ratio between the largest major axis and the smallest major axis arbitrarily close to 1.
  We know that this construction is very unlikely to work for the extreme case of unit disks, since a polynomial algorithm is known for \textsc{Max Clique}.
  Note that even with disks of arbitrary radii, Theorem~\ref{thm:main-structural-non-disk} unconditionally proves that the construction does fail.
  Indeed the co-2-subdivision of $C_3+C_3$ is the complement of $C_9+C_9$, hence not a disk graph.

  As in the previous section, the constructions of Lemmas~\ref{lem:triangles-co2subd}~and~\ref{lem:ellipses-co2subd} require only polynomial precision (when single-exponential precision would be enough for a polytime algorithm).
  Hence the lower bounds of Theorems~\ref{thm:hardness-filled-ellipses}~and~\ref{thm:hardness-filled-triangles} also hold when the geometric representation is part of the input.

\subsection{Homothets of a convex polygon}

Another natural direction of generalizing a result on disk intersection graphs is to consider {\em pseudodisk intersection graphs}, i.e., intersection graphs of collections of closed subsets of the plane (regions bounded by simple Jordan curves) that are pairwise in a {\em pseudodisk} relationship (see Kratochv\'il \cite{DBLP:conf/gd/Kratochvil96}). Two regions $A$ and $B$ are in pseudodisk relation if both differences $A\setminus B$ and $B\setminus A$ are arc-connected.
It is known that $P_{hom}$ graphs, i.e., intersection graphs of homothetic copies of a fixed polygon $P$, are pseudodisk intersection graphs~\cite{agarwal453state}. As shown by Brimkov {\em et al.}, for every convex $k$-gon $P$, a $P_{hom}$ graph with $n$ vertices has at most $n^k$ maximal cliques~\cite{Brimkov18}. This clearly implies that \cli, but also \textsc{Clique $p$-Partition} for fixed $p$ is polynomially solvable in $P_{hom}$ graphs.
Actually, the bound on the maximum number of maximal cliques from \cite{Brimkov18} holds for a more general class of graphs, called $k_{DIR}$-CONV, which  admit a intersection representation by convex polygons, whose every side is parallel to one of $k$ directions.

Moreover, we observe that Theorem \ref{thm:coEvenCycles} cannot be generalized to $P_{hom}$ graphs or $k_{DIR}$-CONV graphs. Indeed, consider the complement $\overline{P_n}$ of an $n$-vertex path $P_n$. The number of maximal cliques in $\overline{P_n}$, or, equivalently, maximal independent sets  in $P_n$ is $\Theta(c^n)$ for $c \approx 1.32$, i.e., exponential in $n$ \cite{DBLP:journals/jgt/Furedi87}. Therefore, for every fixed polygon $P$ (or for every fixed $k$) there is $n$, such that $\overline{P_n}$ is not a $P_{hom}$ ($k_{DIR}$-CONV) graph.

\section{Remarks and further directions}\label{sec:perspectives}

The algorithm of Theorem~\ref{thm:eptas} also works for weighted graphs.
\begin{theorem}
  \textsc{Maximum Weighted Independent Set} admits a randomized EPTAS and a deterministic PTAS on disk and unit ball graphs.
\end{theorem}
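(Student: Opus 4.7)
My plan is to treat the deterministic PTAS and the randomized EPTAS separately, since the chain of reductions used for Theorems~\ref{thm:disks-eptas} and~\ref{thm:ubg} passes through the complement and therefore produces weighted cliques rather than weighted independent sets on disk/unit ball graphs.

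For the deterministic PTAS, I would invoke the robust local-search PTAS of Chan and Har-Peled~\cite{ChanH12} for Maximum Independent Set on pseudo-disk intersection graphs, which requires no geometric representation. Disk graphs are pseudo-disk graphs, so this yields a $(1-\varepsilon)$-approximation in time $n^{O(1/\varepsilon^2)}$, and the analysis carries over to the weighted setting because the combinatorial core of the argument---planarity of the local-search exchange graph---is weight-oblivious, and the improvement rule can be driven by an additive multiplicative gain in $w$ rather than in cardinality. For unit ball graphs I would first use the bounded kissing number of unit spheres in $\mathbb{R}^3$ to partition the neighborhood of any guessed vertex into constantly many cliques, reducing to a bounded number of disk-graph-like subproblems on which the same local-search PTAS applies.

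For the randomized EPTAS, my plan is to adapt the sampling machinery of Algorithm~\ref{alg:eptas} to the input graph $G$ itself. Disk and unit ball graphs have neighborhood VC-dimension at most four, so drawing a set $S$ of size $O((1/\varepsilon)\log(1/\varepsilon))$ from the probability measure on $V(G)$ proportional to $w$ produces, by the Haussler-Welzl theorem for arbitrary probability measures, a hitting set for every neighborhood of $w$-mass at least $\varepsilon\, w(I^*)$ inside a fixed optimum $I^*$; repeating this $2^{\tilde{O}(1/\varepsilon^3)}$ many times guarantees, with high probability, that one trial satisfies $S \subseteq I^*$. Removing $N(S)$ then yields a residual graph whose maximum weight independent set loses at most $\varepsilon\, w(I^*)/2$ compared to $I^*$ and in which no vertex is ``heavy'' with respect to $I^*$. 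The bipartite subroutine for \textsc{Max Weighted IS} remains polynomial via min-cut / LP duality, so the weighted bipartite step of Algorithm~\ref{alg:eptas} transfers with no loss.

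The main obstacle I foresee is that disk graphs themselves do \emph{not} satisfy $\iocp \leq 1$---they can contain arbitrarily many vertex-disjoint induced triangles with no edges between them---so Claim~\ref{clm:bip}'s guarantee that ``removing $S^\gamma$ leaves a bipartite graph'' is unavailable in the residual. I would bypass this by finishing the residual call with the weighted Chan-Har-Peled PTAS at a constant parameter $\varepsilon'$ depending only on $\varepsilon$ (e.g.\ $\varepsilon' = \varepsilon/2$): since the sampling has already cleared out the heavy neighbors of $I^*$, the approximation guarantees compose multiplicatively, yielding an overall $(1-\varepsilon)$-approximation. Establishing that the composition stays within the $2^{\tilde{O}(1/\varepsilon^3)} n^{O(1)}$ budget---rather than blowing up to $n^{O(1/\varepsilon^2)}$---is the technical heart of the argument and, I expect, the step where the full strength of the VC-dimension bound and the pseudo-disk exchange structure must be combined in a non-routine way.
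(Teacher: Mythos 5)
There is a genuine gap, stemming first from a misreading of what the theorem is actually claiming. In the context of Section~\ref{sec:perspectives} (``The algorithm of Theorem~\ref{thm:eptas} also works for weighted graphs'', followed immediately by ``This implies a randomized EPTAS for \wcli on disk graphs and unit ball graphs''), the statement is the weighted analogue of Theorems~\ref{thm:disks-eptas} and~\ref{thm:ubg}: one solves \mwis on the \emph{complement}, which lies in $\cl(4,\beta,1)$ by Theorems~\ref{thm:main-structural-non-disk} and~\ref{thm:no-two-odd-cycles-ubg}, not on the disk or unit ball graph itself. The paper's proof is then a three-line modification of Algorithm~\ref{alg:eptas}: sample $S$ proportionally to the weights, discard the \emph{lightest} layer $L_i$ and the \emph{lightest} block $S^\gamma$ instead of the smallest ones, and use that \mwis is polynomial on bipartite graphs; the reduction supplying $w(I)\geqslant \beta\, w(V(G))$ is unchanged. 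Your proposal instead attacks \mwis on the disk graph $G$ itself, and you correctly identify that this puts you outside the paper's framework, since $\iocp(G)$ is unbounded for disk graphs; but the framework was never meant to be applied there.

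Even taken on its own terms, the proposal does not go through. The claim that the Chan--Har-Peled local-search analysis ``carries over to the weighted setting because the combinatorial core\dots is weight-oblivious'' is false: the planar-separator exchange argument bounds the \emph{cardinality} of the optimum against the local optimum, and a single heavy optimal vertex can be locally blocked by many light ones, so $t$-swap local search gives no PTAS guarantee for weighted independent set; this is precisely why \cite{ChanH12} resort to LP rounding and obtain only a constant factor in the weighted case. The unit-ball step (partitioning a neighborhood into constantly many cliques ``reducing to disk-graph-like subproblems'') does not yield a reduction for \emph{independent set}, and the final composition of the sampling phase with this nonexistent weighted local-search PTAS, whose running-time analysis you yourself flag as unfinished, cannot be repaired. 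The correct route is the paper's: pass to the complement, where $\iocp\leqslant 1$ holds, and rerun Theorem~\ref{thm:eptas} with weights.
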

  The slight modifications to approximate \textsc{Maximum Weighted Independent Set} consist in sampling $S$ \emph{proportionally to the weights}, and to remove the \emph{lightest} layer $L_i$ (among the first $\lceil {2}/{\beta \varepsilon} \rceil$) and the \emph{lightest} set $S^\gamma$ (rather than the ones of minimum size).
  We then use repeatedly that \textsc{Maximum Weighted Independent Set} can be solved in polynomial time on bipartite graphs.

  This implies a randomized EPTAS for \textsc{Maximum Weighted Clique} on disk graphs and unit ball graphs, with the same arguments used to get $|I| \geqslant \beta |V(G)|$.
  Now, what we obtain is $w(I) \geqslant \beta w(V(G))$ where $w$ is the weight function and $w(X):=\Sigma_{u \in X}w(u)$.

  \medskip
  
  One might wonder what is the constant hidden in $O(1)$ in the time complexity of the randomized EPTAS $f(\varepsilon)n^{O(1)}$.
  Here is how to achieve near quadratic time $f(\varepsilon) n^2 \log n$ where $n$ is the number of vertices of our \emph{unit ball graph} $G$ (the geometric representation is not required).
  The first observation is that instead of finding an optimum stable set in a bipartite graph (which we do several times as a subroutine), one only needs a $(1-\varepsilon)$-approximation of it.
  This can be done in time $f(\varepsilon)m$, where $m$ is the number of edges \cite{Duan14}.
  We also need to overcome our first branching guess of a vertex which belongs to an optimal solution (and hence multiply by $n$ our complexity).
  To achieve this, we start by packing greedily disjoint neighborhoods $N(v_1),N(v_2), \ldots, N(v_k)$ of $G$, while it is possible.
  Then we consider the set of subgraphs $G_1,G_2,...,G_k$ induced by the vertices at distance at most 3 of $v_1,v_2,...,v_k$, respectively.
Observe that by maximality of the packing, every vertex is at distance at most 2 of some $v_i$, and thus, every edge, and even every clique of $G$ belongs to at least one $G_i$.
By a volume argument, any vertex, and thus any edge, belongs to at most a constant number of graphs $G_i$.
Thus we only need to compute the maximum clique over the graphs $G_i$, each with number of edges $m_i$, with the property that $\sum_{i=1}^km_i=O(m)$.
Now, the difference is that the clique number of each $G_i$ is at least a constant fraction of its number of vertices, so the sampling step succeeds with probability at least a positive constant.
We consider the complement of each graph $G_i$ and approximate the independence number.
The main and only obstacle is that finding a shortest odd cycle in quadratic time seems hopeless.
Indeed, the other elements of the algorithm: removing $N[S]$, computing the sets $L_i$ and $S^\gamma$, removing the lightest of them, and $(1-\varepsilon)$-approximating the maximum stable set in a bipartite graph, can all be done in quadratic time.

We only compute one (not necessarily shorter) odd cycle of length $h$, via breadth-first search for instance.
We can assume that $h =\omega(1/\varepsilon^2)$ otherwise we are done.
We take potentially \emph{thicker} slices for $S^\gamma$: instead of $z=\Theta(1/\varepsilon)$ consecutive strata in each layer up to $L_i$, we take $10 \beta \varepsilon h$ consecutive strata.
Either the suggested $2$-coloring of $H''-S^\gamma$ is proper and we are done.
Or there is a monochromatic edge, which, going through the different cases of Claim~\ref{clm:bip}, yields a new odd cycle of length $O(1/\varepsilon)$ (which is an easy case to handle) or short-cutting the previous odd cycle by at least $\Omega(\varepsilon h)$ vertices.
In the latter case, our new odd cycle is shorter by a constant multiplicative factor $1-\varepsilon$.
Hence, after at most $O(\log n)$ improvements, we find an odd cycle which is small enough to conclude.
Therefore, the overall complexity of the algorithm is $f(\varepsilon)n^2 \log n$.
  
\medskip

 The obvious remaining question is the complexity of \cli in disk graphs and in unit ball graphs.
We showed why the versatile approach of representing complements of even subdivisions of graphs forming a class on which \mis is NP-hard fails if the class is \emph{general graphs}, \emph{planar graphs}, or even any class containing the disjoint union of two odd cycles.
This approach was used by Middendorf and Pfeiffer for some string graphs \cite{Middendorf92} (with the class of all graphs), Cabello et al. \cite{CabelloCL13} to settle the then long-standing open question of the complexity of \cli for segments (with the class of planar graphs), in Section~\ref{sec:gen&lim} of this paper for filled ellipses, filled triangles, quasi unit ball graphs, and 4-dimensional unit ball graphs (with the class of all graphs).
Determining the complexity of \mis on graphs without two vertex-disjoint odd cycles as an induced subgraph is a valuable first step towards settling the complexity of \cli on disks.
 
  An interesting direction would be to find a toy problem on which we could prove NP-hardness.
  A nice class, which appears to be a subclass of unit ball graphs, is that of the so-called \emph{Borsuk graphs}: We are given some (small) real $\varepsilon>0$ and a finite collection $V$ of unit vectors in $\mathbb R^3$.
  The Borsuk graph $B(V,\varepsilon)$ has vertex set $V$ and its edges are all pairs $\{v,v'\}$ whose dot product is at most $-1+\varepsilon$ (i.e., near antipodal).

  The difficulty of computing the (weighted) independence number on Borsuk graphs is also an open question.
  A notable subclass of Borsuk graphs where this problem is polynomial-time solvable is the class of the quadrangulations of the projective plane.
  These well-studied objects have the striking property to be either bipartite or 4-chromatic.
  Furthermore, the odd cycle packing number of these graphs is at most 1.
  Artmann et al. recently showed that so-called \emph{bimodular integer programming}, that is integer programming where the constraint matrix has full rank and all its subdeterminants are in $\{-2,-1,0,1,2\}$, can be solved in strongly polynomial time \cite{Artmann17}.
  They also observe that \textsc{Maximum Weighted Independent Set} on graphs with $\text{ocp} \geqslant 1$ is a bimodular integer programming problem.
  This implies the tractability of computing the weighted independence number on quadrangulations of the projective plane.
  


\medskip

A second question is to derandomize the EPTAS.
  The difficulty here is concentrated in the sampling.
  The VC-dimension argument seems easy to deal with, however we need that our sampling falls in the maximum independent set (or at least in some independent set which is close to maximum). 

  \medskip

  Another natural question is to find a superclass of geometric intersection graphs which both contain unit ball graphs and disk graphs.
  More generally, is it possible to explain why we have the same forbidden induced subgraph (the complement of a disjoint union of two odd cycles) for disk graphs and unit ball graphs?
  The suggested proofs for Theorem~\ref{thm:main-structural-non-disk} and Theorem~\ref{thm:no-two-odd-cycles-ubg} are quite different.
  
  Let us call \emph{quasi unit disk graphs} those disk graphs that can be realized for any $\varepsilon > 0$ with disks having all the radii in the interval $[1,1+\varepsilon]$.   Recall that we showed that, for the clique problem, quasi unit ball graphs are unlikely to have a QPTAS, while unit ball graphs admit an EPTAS.
  In dimension 2, it can be easily shown that unit disk graphs form a proper subset of quasi unit disk graphs, which form themselves a proper subset of disk graphs.
  Can we find for this intermediate class an efficient exact algorithm solving \cli?
  \begin{problem}
   Is there a polynomial-time algorithm for \cli on quasi unit disk graphs?
  \end{problem}

  \medskip

  Our randomized EPTAS works for \mis under three assumptions.
  While it is clear that we crucially need that $\iocp \leqslant 1$ (or at least that $\iocp$ is constant), as far as we can tell, the boundedness of the VC-dimension and the fact that the solution is of linear size might not be required.
  Recall that we did obtain a randomized PTAS in the class $\cl(\infty,\beta,O(1))$, and a deterministic PTAS in $\cl(d,0,O(1))$.
  \begin{problem}
    Is there an EPTAS for \mis on graphs without the union of two odd cycles as an induced subgraph, or even with $\iocp=O(1)$, and either one of the following conditions:
    \begin{itemize}
    \item there is a solution of size at least $\beta n$ for some constant $\beta$, 
    \item the VC-dimension of the graph is bounded by a constant $d$? 
    \end{itemize}
  \end{problem}
  As this paper was under review, Dvo\v{r}\'ak and Pek\'arek~\cite{Dvorak20} answered the first item positively.
  More precisely they obtained a randomized EPTAS for the class $\cl(\infty,\beta,i)$ with running time $f(\varepsilon)\Tilde{O}(n^{i+4})$.
  Matching the $f(\varepsilon)\Tilde{O}(n^2)$-time for unit balls (hence with $i=1$), as well as the second item remain open. 
  
  It might also be that no additional condition is needed.
  \begin{problem}
    Is there a(n E)PTAS for \mis on graphs without the union of two odd cycles as an induced subgraph?
  \end{problem}

  Atminas and Zamaraev \cite{Atminas16} showed that the complement of $K_2+C_s$ is not a unit disk graph when $s$ is odd (where $K_2$ is an edge and $C_s$ is a cycle on $s$ vertices).
  Is this obstruction enough to obtain an alternative polynomial-time algorithm for \cli on unit disk graphs? 
  \begin{problem}
  Is \mis solvable in polynomial-time on graphs excluding the union of an edge and an odd cycle as an induced subgraph?
  \end{problem}

\bibliographystyle{abbrv}
\bibliography{main}

\end{document}